\theoremstyle{plain}
\newtheorem{thm}{\protect\theoremname}
\theoremstyle{plain}
\newtheorem{cor}[thm]{\protect\corollaryname}
\theoremstyle{definition}
\newtheorem{defn}[thm]{\protect\definitionname}
\theoremstyle{remark}
\newtheorem{rem}[thm]{\protect\remarkname}
\theoremstyle{plain}
\newtheorem{lem}[thm]{\protect\lemmaname}
\providecommand{\corollaryname}{Corollary}
\providecommand{\definitionname}{Definition}
\providecommand{\lemmaname}{Lemma}
\providecommand{\remarkname}{Remark}
\providecommand{\theoremname}{Theorem}
\begin{document}

\title{Tree tribes and lower bounds for switching lemmas}

\author{Jenish C. Mehta \thanks{California Institute of Technology.}}

\maketitle
	
\begin{abstract}
We show tight upper and lower bounds for switching lemmas obtained
by the action of random $p$-restrictions on boolean functions that
can be expressed as decision trees in which every vertex is at a distance
of at most $t$ from some leaf, also called $t$-clipped decision
trees. More specifically, we show the following:
\begin{enumerate}
\item If a boolean function $f$ can be expressed as a $t$-clipped decision
tree, then under the action of a random $p$-restriction $\rho$,
the probability that the smallest depth decision tree for $f|_{\rho}$
has depth greater than $d$ is upper bounded by $(4p2^{t})^{d}$. {\small \par}
\item For every $t$, there exists a function $g_{t}$ that can be expressed
as a $t$-clipped decision tree, such that under the action of a random
$p$-restriction $\rho$, the probability that the smallest depth
decision tree for $g_{t}|_{\rho}$ has depth greater than $d$ is
\emph{lower} bounded by $(c_{0}p2^{t})^{d}$, for $0\leq p\leq c_{p}2^{-t}$
and $0\leq d\leq c_{d}\frac{\log n}{2^{t}\log t}$, where $c_{0},c_{p},c_{d}$
are universal constants. {\small \par}
\end{enumerate}
\end{abstract}

\newpage

\tableofcontents{}

\newpage

\section{Introduction}

One useful and powerful idea to separate a boolean function $g$ from
some set ${\cal F}$ of boolean functions over $\{0,1\}^{n}$ is to
use \emph{restrictions}. By showing that restricted to some subset
of $\{0,1\}^{n}$, the functions in ${\cal F}$ become \emph{simple},
but the function $g$ does not become simple, it can be concluded
that $g\not\in{\cal F}$. The extent to which functions in $\mathcal{F}$
become simple is captured by switching lemmas, which informally try
to answer the following question: Given a family ${\cal F}$ of boolean
functions characterized by some parameter $t$, and a family ${\cal S}$
of distributions over subsets of $\{0,1\}^{n}$ characterized by some
parameter $p$, how complex does a function $f\in{\cal F}$ remain
after it is restricted to a subset $S$ chosen according to some distribution
$D\in{\cal S}$? Defining the measure of complexity of a function
and the sets ${\cal F}$ and ${\cal S}$ gives a switching lemma of
a particular type. 

Switching lemmas originated with the works of \cite{furst1984parity}
and \cite{ajtai198311}, and were proved in their strongest form for
DNFs by Hastad \cite{hastad1986almost}, which answered the question
stated above as follows: Let ${\cal F}$ be the set of DNFs (or CNFs)
of width $t$ over $n$ boolean variables. Let ${\cal S}$ contain
exactly one distribution over subsets of $\{0,1\}^{n}$, also called
a random $p$-restriction, which chooses the subset $S$ as $S=S_{1}\oplus\ldots\oplus S_{n}$
where $\oplus$ is the direct sum, by independently choosing each
$S_{i}$ as the set $\{0,1\}$ with probability $p$, the set $\{0\}$
with probability $\frac{1-p}{2}$, and the set $\{1\}$ with probability
$\frac{1-p}{2}$. The measure of complexity of a function is the depth
of the smallest depth decision tree that decides it. With these instantiations,
it was shown in \cite{hastad1986almost} that the probability that
the decision tree for a $t$-DNF has depth greater than $d$ after
it is restricted to a random $p$-restriction is upper bounded by
$(5pt)^{d}$.

The original proof of this result in \cite{hastad1986almost} used
conditioning on values of the variables under the applied restriction,
and later \cite{razborov1993equivalence} gave an alternate combinatorial
proof of the same fact. An excellent survey and explanations of many
of these results is in \cite{beame1994switching}. Apart from giving
an alternate proof that parity is not in $AC_{0}$, switching lemmas
and their variants for different families of functions and restrictions
have found a large number of applications, to obtain lower bounds
on circuit size and depth and oracle separations of complexity classes
\cite{sipser1983borel,ajtai198311,furst1984parity,yao1985separating,hastad1986almost,lynch1986depth,cai1989probability,ajtai1989first,beame1990lower,rossman2015average,haastad2016average},
and limitations on bounded depth proof systems \cite{ajtai1990parity,beame1992exponential,beame1993exponential,bellantoni1992approximation,krajivcek1995exponential,pitassi1993exponential,impagliazzo2001counting,pitassi2016poly},
amongst others.

\subsubsection*{Our results}

A natural generalization of $t$-DNFs are functions that can be expressed
as $t$-clipped decision trees, first introduced in \cite{pitassi2016poly}.
A $t$-clipped decision tree is a decision tree in which every vertex
is at a distance of at most $t$ from some leaf. As observed in \cite{pitassi2016poly},
every $t$-DNF can be expressed as a $t$-clipped decision tree.\footnote{Note however that the other way round is not true, specifically because
for boolean variables $a,b,x$, $a\cdot x\vee\overline{x}\cdot b\not\equiv a\cdot x\vee b$
for $a=0,x=1,b=1$. } Casting $t$-DNFs as $t$-clipped decision trees, \cite{pitassi2016poly}
proved a strong switching lemma for $t$-clipped decision trees restricted
to randomly chosen affine subspaces of $\{0,1\}^{n}$. Describing
the results in \cite{pitassi2016poly} requires considerable setup,
and we will do that in the next section. 

Our first result is an improved upper bound on the action of random
$p$-restrictions on $t$-clipped decision trees. We write $\text{DT}_{\text{depth}}(g)$
for the depth of the least depth decision tree for the function $g$,
and for simplicity, we write $f|_{\rho}$ to mean that $f$ is restricted
to a subset of $\{0,1\}^{n}$ chosen according to $\rho$. 
\begin{thm}
\label{thm:upper_bound}For any boolean function $f$ that has a $t$-clipped
decision tree, for a random $p$-restriction $\rho$, 
\[
\Pr_{\rho}[\text{DT}_{\text{depth}}(f|_{\rho})\geq d]\leq(4p2^{t})^{d}.
\]

\end{thm}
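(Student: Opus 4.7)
The plan is to adapt the Razborov combinatorial encoding used for Hastad's switching lemma for $t$-DNFs to the $t$-clipped setting. The structural analogue is that ``$t$ (term width)'' becomes ``$2^{t}$ (number of length-$\le t$ paths in a $t$-clipped subtree),'' which is exactly the new factor appearing in the bound.

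\emph{Setup.} Fix a $t$-clipped decision tree $T$ for $f$ and, given a restriction $\rho$, form the pruned tree $T^{\rho}$ by descending through $T$ and skipping (without querying) each node whose variable is fixed by $\rho$ while inserting a query at each node whose variable is free. Then $T^{\rho}$ computes $f|_{\rho}$, one has $\text{DT}_{\text{depth}}(f|_{\rho})\le\text{depth}(T^{\rho})$, and the event $\{\text{depth}(T^{\rho})\ge d\}$ is precisely the event that some root-to-leaf path in $T$ contains at least $d$ coordinates left free by $\rho$; bounding the probability of this event is the goal.

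\emph{Encoding and counting.} For each bad $\rho$, canonically pick a root-to-leaf path $P(\rho)$ of $T$ witnessing the depth (e.g.\ the lex-first deepest path of $T^{\rho}$ lifted back to $T$), and let $q_{1},\dots,q_{d}$ be its first $d$ free coordinates, in order. I will define an injection $\rho\mapsto(\rho^{*},\sigma)\in\mathcal{R}\times\Sigma^{d}$ where $\rho^{*}$ agrees with $\rho$ except that each $q_{k}$ is additionally fixed to the value $P(\rho)$ assigns at $q_{k}$ (so $\rho^{*}$ has exactly $d$ more fixed coordinates than $\rho$), and each $\sigma_{k}$ records enough local information to reverse the $k$-th fixing. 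The $t$-clipped property puts $q_{k}$ inside a depth-$\le t$ subtree $U_{k}$ of $T$ rooted at the vertex of $P(\rho)$ where $q_{k}$ is queried, so specifying which internal node of $U_{k}$ equals $q_{k}$ costs $\log(2^{t}-1)$ bits and one extra bit records the $\rho$-value of $q_{k}$, giving $|\Sigma|\le 2\cdot 2^{t}=2^{t+1}$. The standard weight comparison $\Pr[\rho]/\Pr[\rho^{*}]=\bigl(\tfrac{2p}{1-p}\bigr)^{d}$ (each extra fixed coordinate scales the probability by $\tfrac{(1-p)/2}{p}$) combined with $\sum_{\rho^{*}}\Pr[\rho^{*}]\le 1$ then gives
\[
\Pr_{\rho}\bigl[\text{depth}(T^{\rho})\ge d\bigr]\;\le\;|\Sigma|^{d}\Bigl(\tfrac{2p}{1-p}\Bigr)^{d}\;\le\;(2^{t+1})^{d}\Bigl(\tfrac{2p}{1-p}\Bigr)^{d}\;\le\;(4p\cdot 2^{t})^{d},
\]
after absorbing the $1/(1-p)$ into the constant (e.g.\ under $p\le 1/2$).

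\emph{Main obstacle.} The key technical step, as always in Razborov-type encodings, is injectivity: from $(\rho^{*},\sigma)$ one must be able to recover both the witnessing path $P(\rho)$ and the precise restriction $\rho$, even though $\rho^{*}$ no longer flags $q_{1},\dots,q_{d}$ as free. The standard remedy is to simulate the pruning procedure on $\rho^{*}$ one step at a time: read off the first ``fake'' queried coordinate the simulation produces, use $\sigma_{1}$ to relocate it to the true $q_{1}$ within the local depth-$t$ window $U_{1}$ and reinstall its actual $\rho$-value, and iterate. The $t$-clipped hypothesis is precisely what makes each such correction a purely local operation, and it is also what caps $|\Sigma|$ by $O(2^{t})$ rather than by a quantity depending on the global depth of $T$; getting this locality right, so that the simulation-and-correct procedure really inverts the map $\rho\mapsto(\rho^{*},\sigma)$, is the place where the bulk of the technical care will have to be spent.
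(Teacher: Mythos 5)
You take a genuinely different route from the paper --- the paper proves this bound by Hastad-style conditioning on the root variable, setting up a recurrence for $\gamma_{d}(t_{0})$ over $(t_{0},t)$-clipped trees and solving it --- but your Razborov-style encoding, as described, has a real gap at precisely the step you defer. The problem is that the $t$-clipped hypothesis does \emph{not} localize the witnessing free variables: it bounds the distance from every vertex to some \emph{leaf}, not the distance between consecutive free coordinates $q_{k-1},q_{k}$ along $P(\rho)$, which can be arbitrarily large (take $T$ to be a $1$-clipped caterpillar, i.e.\ the decision tree for OR: the free variables on the spine may be separated by arbitrarily long runs of fixed ones). Consequently there is no canonical depth-$t$ window $U_{k}$ that the decoder can identify --- as you define it, $U_{k}$ is rooted at the very vertex $v_{k}$ where $q_{k}$ is queried, which is exactly the datum the decoder lacks. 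Worse, with your choice of $\rho^{*}$ (fixing each $q_{k}$ to the value $P(\rho)$ assigns), the deterministic walk down $T$ under $\rho^{*}$ simply follows $P(\rho)$ straight through $v_{1},\dots,v_{d}$ to a leaf with no recognizable event at any $v_{k}$ --- unlike the DNF case, where the decoder finds the relevant term because $\rho^{*}$ \emph{satisfies} it and the canonical evaluation visibly halts there. Locating $v_{k}$ along this walk costs advice on the order of $\log(\mathrm{depth}(T))$, not $\log(2^{t})$, so the budget $|\Sigma|\leq 2^{t+1}$ is not justified. This is exactly the difficulty that forces the more delicate witness construction in \cite{pitassi2016poly,rossman2016tutorial}, and there the combinatorial route pays extra, yielding $(40pt2^{t})^{d}$ rather than $(4p2^{t})^{d}$. (A minor related imprecision: the event you encode must be ``some root-to-leaf path of $T$ \emph{consistent with $\rho$ on its fixed coordinates} has $\geq d$ free coordinates''; without the consistency requirement the event is far too large to admit such a bound.)

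A secondary issue: even granting injectivity, your final arithmetic gives $(2^{t+1})^{d}\left(\frac{2p}{1-p}\right)^{d}=(4p2^{t})^{d}(1-p)^{-d}$, which is strictly larger than $(4p2^{t})^{d}$; there is no slack in the constant $4$ into which $1/(1-p)$ can be absorbed. This is consistent with the paper's remark that conditioning arguments beat combinatorial ones on constants (Hastad's $5pt$ versus Razborov's $8pt$ for DNFs): recovering the exact constant $4\cdot 2^{t}$ by an encoding argument is doubtful, whereas the paper's recursion --- which exploits the clipped structure through the boundary condition $\gamma_{d}(0)=0$ and a diagonal resummation of the resulting binomial sums --- is what achieves it.
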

For the case of random $p$-restrictions, Theorem \ref{thm:upper_bound}
is an improvement by a factor of $(10t)^{d}$ over the bounds in \cite{pitassi2016poly}.
Our proof of Theorem \ref{thm:upper_bound} is recursive, and uses
conditioning on variables similar to that in \cite{hastad1986almost}.
We would like to remark that although switching lemmas in general
are proved in two essentially equivalent ways - one that uses conditioning
on variables and another that uses combinatorial arguments - the method
of conditioning is seen to give slightly stronger bounds, as seen
by the slightly better constant in \cite{hastad1986almost} over \cite{razborov1993equivalence},
and an improvement similar to \ref{thm:upper_bound} in \cite{haastad2016average}
over \cite{rossman2015average}. Theorem \ref{thm:upper_bound} is
another case in point for random $p$-restrictions over the result
in \cite{pitassi2016poly}.

A natural question is whether the bound in Theorem \ref{thm:upper_bound}
is tight, or can further be improved to a bound of $O(pt)^{d}$ similar
to that for $t$-DNFs. However, we show that it is asymptotically
tight up to constant factors. 
\begin{thm}
\label{thm:lower_bound}For every $t\geq1$, there is a function $g_{t}$
expressible as a $t$-clipped decision tree, such that for a random
$p$-restriction $\rho$, for $0\leq p\leq c_{p}2^{-t}$ and $0\leq d\leq c_{d}\left(\frac{\log n}{2^{t}\log t}\right)$,
\[
\Pr_{\rho}[\text{DT}_{\text{depth}}(g_{t}|_{\rho})\geq d]\geq(c_{0}p2^{t})^{d},
\]
where $c_{p}$, $c_{d}$ and $c_{0}$ are universal constants. 
\end{thm}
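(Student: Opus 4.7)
My plan is to build $g_t$ as a ``tree tribes'' function, namely a chain of $m$ independent \emph{tribes} $f_1,\dots,f_m$ laid out so that the overall decision tree is $t$-clipped and each $f_i$ is a $t$-clipped gadget on a fresh block of $\Theta(2^t)$ variables. The key design constraint is that each tribe survives a random $p$-restriction---i.e., $f_i|_\rho$ is non-constant---with probability $q = \Omega(p\cdot 2^t)$, and that the chain layout forces a surviving tribe to contribute at least one unavoidable query to any decision tree for $g_t|_\rho$. Since different tribes sit on disjoint variable sets, their survival events are independent under $\rho$, which is the feature that drives the lower bound.

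The analysis then has two parts. First, I would show that if at least $d$ tribes survive then $\text{DT}_{\text{depth}}(g_t|_\rho) \geq d$; this is a certificate-complexity argument using the disjointness of tribes so that no single variable query can resolve more than one tribe at a time. Second, I would lower-bound the probability that $\geq d$ tribes survive by a binomial tail: since the tribes are independent, the number of survivors stochastically dominates a $\mathrm{Bin}(m,q)$ variable, and a standard tail estimate combined with Stirling gives
\[
\Pr_\rho\bigl[\text{DT}_{\text{depth}}(g_t|_\rho) \geq d\bigr] \;\geq\; \binom{m}{d}\, q^d (1-q)^{m-d} \;\geq\; (c_0 p\cdot 2^t)^d,
\]
whenever $m\gtrsim d/q$ and $q\leq 1/2$; the latter is exactly the content of the hypothesis $p\leq c_p 2^{-t}$. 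Substituting $q = \Theta(p\cdot 2^t)$ together with the variable-count constraint $m\cdot\Theta(2^t) \leq n$ should translate precisely into the depth range $d \leq c_d \log n / (2^t\log t)$ stated in the theorem.

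The main obstacle I foresee is constructing an individual tribe $f_i$ that is genuinely $t$-clipped and still has survival probability $\Omega(p\cdot 2^t)$. A plain depth-$t$ read-once decision tree on $2^t-1$ variables survives with probability only $\Theta(pt)$ (as one checks by the natural recursion on subtrees), which is short of the target by a factor $2^t/t$. So the tribe cannot be just a depth-$t$ tree: it has to use the full flexibility of the $t$-clipped definition, likely by being substantially deeper than $t$ while keeping every internal node within distance $t$ of some leaf, and by arranging roughly $2^t$ variables so that a constant fraction of them remain sensitive under typical restrictions. This is exactly the expressive gap between $t$-clipped trees and $t$-DNFs highlighted in the introduction's footnote (where $a\cdot x\vee \bar x\cdot b \not\equiv a\cdot x\vee b$), and the construction almost certainly exploits it essentially. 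Once a tribe with the required survival probability is in hand, the certificate-complexity argument and the binomial calculation above should fit together routinely to yield the claimed bound.
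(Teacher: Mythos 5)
There is a fatal structural problem with your plan, visible even before the unconstructed gadget. Suppose you did have disjoint gadgets $f_1,\dots,f_m$, each surviving with probability $q=\Omega(p2^t)$, a $t$-clipped composition $g_t$, and the claimed implication ``at least $d$ survivors $\Rightarrow\text{DT}_{\text{depth}}(g_t|_\rho)\geq d$.'' By independence the number of survivors dominates $\mathrm{Bin}(m,q)$, so taking $m\approx 2d/q$ gives $\Pr[\text{DT}_{\text{depth}}(g_t|_\rho)\geq d]\geq\binom{m}{d}q^d(1-q)^{m-d}\geq c^d$ for a universal constant $c$ \emph{independent of $p$} (your own display already shows this: with $mq\gtrsim d$ the binomial term is $e^{-O(d)}$, not $(c_0p2^t)^d$). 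That contradicts Theorem \ref{thm:upper_bound}, which bounds this probability by $(4p2^t)^d\ll c^d$ for $p\leq c_p2^{-t}$. So no such composition can exist: in a $t$-clipped tree you cannot make independent surviving blocks contribute additively to the depth. The clipping forces every vertex to be within $t$ of a leaf, so a restriction collapses most of the tree, and the only way depth $d$ arises is that a \emph{single nested path} carries a ``split'' down through $d$ unset variables; the events at successive depths are conditional on reaching and propagating through the previous level, not independent survivals. This is exactly why the paper's bound is $(c_0p2^t)^d$ and not a binomial tail, and why its inductive step (Lemma \ref{lem:gamm_rec_induc}) partitions on which spine variable first receives a $1$ rather than counting survivors.

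Two further quantitative gaps. First, the gadget you defer is the entire technical core of the paper: the $d=1$ case (Lemma \ref{lem:d_1_lowerbound}) showing $\Pr[\text{DT}_{\text{depth}}(\Xi_t(r)|_\rho)\geq1]\geq c_0p2^t$ occupies Lemmas \ref{lem:p01_and_p11} through \ref{lem:main_higherpowers}, via exact recurrences for the probabilities $P_0(r),P_1(r)$ of collapsing to a constant and a delicate analysis of the coefficient of $p$ in $P_0(r)+P_1(r)$. Moreover that gadget is \emph{not} on $\Theta(2^t)$ variables: it needs $r=\Omega(2^t)$ recursion levels and hence $\approx t^{\Omega(2^t)}$ variables, since each level multiplies the variable count by $t$. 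Second, your parameter translation is off for the same reason: a disjoint-block layout with $m\cdot\Theta(2^t)\leq n$ would permit $d$ up to $\mathrm{poly}(n)$, whereas the theorem's range $d\leq c_d\log n/(2^t\log t)$ arises because each unit of depth in the induction consumes $\Theta(2^t)$ levels of a tree whose total level count is only $r\approx\log n/\log t$. The correct construction is the recursive clipped xor tree tribe $\Xi_t(r)$, where the $t$-chain of $0$-edges to a leaf at every level maintains clippedness while the $1$-edges feed into negated copies of the next level, giving a ``$t$-delayed parity'' rather than a sum of independent tribes.
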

The functions $g_{t}$ that we create for Theorem \ref{thm:lower_bound}
are explicit constructions, and we call them tree tribes, or more
specifically, clipped xor tree tribes. It follows from Theorem \ref{thm:lower_bound}
that these functions, expressible at $t$-clipped decision trees,
are more resilient to random $p$-restrictions than $t$-DNFs. Further,
the fourier coefficients of these functions have combinatorial properties
that might be of independent interest, and we think that these functions
might serve as counterexamples or ``extreme points'' for other problems. 

The proof of Theorem \ref{thm:lower_bound} is recursive, and proceeds
by analyzing the coefficients of polynomials that arise in the analysis
of tree tribes. Note that in Theorem \ref{thm:lower_bound}, since
we want to lower bound the probability of the event $\text{DT}_{\text{depth}}(g_{t}|_{\rho})\geq d$,
we require that the decision tree with the least depth (and thus \emph{every}
decision tree) for $g_{t}|_{\rho}$ must have depth greater than $d$
with sufficient probability. To achieve this, our proof proceeds as
follows: If $T$ is the decision tree for $g_{t}$, we lower bound
the probability of finding \emph{``paths with a split''} in $T|_{\rho}$.
A \emph{``path with a split''} is a subtree of $T|_{\rho}$, which
consists of a path of distinct variables $y_{1},\ldots,y_{d}$ (where
$y_{1}$ is closest to the root in $T|_{\rho}$), such that $y_{d}$
is connected to two leaves with \emph{different }values (more specifically,
$y_{d}$ has a path to a leaf labelled 0 and a leaf labelled 1). Any
decision tree for such a subtree of $T|_{\rho}$ must have depth at
least $d$ (at least if the variables $y_{1},\ldots,y_{d}$ do not
appear elsewhere in the tree), by exactly the same argument that any
decision tree for the OR function on $d$ variables must have depth
at least $d$. However, we need a sufficient number of vertices in
the tree before we get sufficient probability mass for the event of
finding such a path with a split, and this is the reason we get an
upper bound on the depth $d$ for which Theorem \ref{thm:lower_bound}
holds. 

The application of Theorem \ref{thm:lower_bound} is to bounded depth
Frege proof systems, which we discuss next.

\subsubsection*{Implications for proof complexity }

Proof complexity, originating with \cite{cook1979relative}, tries
to answer whether $\mathsf{NP}$ is same as $\mathsf{coNP}$, starting
with the observation that if an unsatisfiable 3-SAT formula has a
short proof of unsatisfiability in \emph{any} propositional proof
system, then $\mathsf{NP}=\mathsf{coNP}$. However, since we expect
$\mathsf{NP}$ to be different from $\mathsf{coNP}$, one way to attack
the problem is to show that in many specific proof systems, some unsatisfiable
formula requires long proofs. 

A natural proof system to consider is the Frege proof system, in which
each line of the proof is an $\mathsf{AC}_{0}$ formula over boolean
variables, with the connectives $\neg,\vee,\wedge,\implies$, has
modus ponens as the only rule of inference, and has a small set of
simple axioms. In a series of works, it was shown that any constant
depth Frege proof for the Pigeonhole Principle (PHP) requires superpolynomial
size \cite{ajtai1988complexity}, any polynomial size Frege proof
for PHP requires $\Omega(\log\log n)$ depth \cite{bellantoni1992approximation,krajivcek1995exponential,pitassi1993exponential},
and similar bounds hold for Tseitin formulas (contradictions) over
the complete graph \cite{urquhart1996simplified}. Alternately, an
upper bound was shown in \cite{buss1987polynomial}, by which both
PHP and Tseitin formulas over any graph have polynomial size, $O(\log n)$
depth Frege proofs. 

This gap between the power of $O(\log\log n)$ and $O(\log n)$ depth
Frege proofs was long open, until the recent work \cite{pitassi2016poly},
in which it was shown that there is a Tseitin formula over a 3-regular
expander graph, such that any $O(\sqrt{\log n})$ depth Frege proof
for it must have super polynomial size. A crucial idea in \cite{pitassi2016poly}
is the use of random projections, which has recently been used to
obtain powerful correlation bounds within constant depth circuits
in the breakthrough work \cite{rossman2015average}, and further improved
in \cite{haastad2016average}. 

At the heart of the result in \cite{pitassi2016poly} is a delicately
constructed switching lemma, such that when applied to Tseitin formulas
over 3-regular expanders, the resulting graph remains a 3-regular
expander. An exemplary exposition of the switching lemma used in \cite{pitassi2016poly}
is given in \cite{rossman2016tutorial}, and we present the description
given there.

Let the universe $U=\{0,1\}^{n}$ where we consider the natural equivalence
between elements of $U$ and subsets of $[n]$. For $A\in U$ and
$B\subseteq U$, we say that the set $B$ shatters $A$ if for every
$A'\subseteq A$, there exists $B'\in B$ such that $B'\cap A=A'$.
Let $B$ be some affine subspace of $U$. A decision tree $T$ is
$B$-independent, if $B$ shatters the set of variables on every root
to leaf path in $T$. Let ${\cal F}_{t,B}$ be the set of all functions
for which there is a $B$-independent $t$-clipped decision tree.
We say that an arbitrary distribution $V$ over the random linear
subspaces of $B$ is $p$-bounded, if for every $J\in U$, the probability
that a subspace chosen according to $V$ shatters $J$ is upper bounded
by $p^{|J|}$. Let ${\cal S}_{p,B}$ be the set of all distributions
$W=V+u$, where $V$ is a $p$-bounded distribution and $u$ is chosen
uniformly from $B$. We write $\rho\leftarrow W$ to denote an affine
subspace chosen according to $W$. Given this setup, the following
theorem is shown in \cite{pitassi2016poly,rossman2016tutorial}. 
\begin{thm}
\label{thm:prst}\cite{pitassi2016poly} For any arbitrary affine
subspace $B$, for every $f\in{\cal F}_{t,B}$ and $W\in{\cal S}_{p,B}$,
\[
\Pr_{\rho\leftarrow W}[\text{DT}_{\text{depth}}(f|_{\rho})\geq d]\leq(40pt2^{t})^{d}.
\]

\end{thm}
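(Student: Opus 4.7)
The plan is to prove Theorem \ref{thm:prst} by a canonical decision tree plus combinatorial encoding argument in the spirit of Razborov's proof of Hastad's switching lemma, adapted to the affine restriction setting. Fix a $B$-independent $t$-clipped decision tree $T$ computing $f$, together with $W = V + u$ where $V$ is $p$-bounded. For each outcome $\rho$, one constructs a canonical decision tree $T^{\rho}$ computing $f|_{\rho}$ by walking $T$ from the root and querying, at each internal node, only those variables whose values are not already determined by $\rho$. The $B$-independence of $T$ guarantees that every live query is legitimate and that $\text{DT}_{\text{depth}}(f|_{\rho}) \le \text{depth}(T^{\rho})$, so it suffices to bound $\Pr_{\rho}[\text{depth}(T^{\rho}) \geq d]$.

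Next, I would consider a hypothetical root-to-leaf path $\pi$ in $T^{\rho}$ of length at least $d$. Because $T$ is $t$-clipped, every vertex visited by $\pi$ lies within distance $t$ of some leaf, which lets one partition the trajectory of $\pi$ through $T$ into consecutive \emph{chunks}, each of which ends upon reaching a leaf of the local $t$-depth subtree rooted near the chunk's start. Each chunk uses at most $t$ live variables and has at most $2^{t}$ possible terminal configurations, so if the chunk lengths are $d_{1}, \dots, d_{k}$ then $d_{i} \le t$ and $\sum d_{i} \ge d$. Letting $J$ denote the set of live variables along $\pi$, one has $|J| \ge d$, and by the $B$-independence of $T$ the linear part of $\rho$ shatters $J$.

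The core of the proof is a counting argument that injectively maps each bad pair $(\rho, \pi)$ to an affine subspace shattering some fixed set $J$ with $|J| \ge d$, paired with a short combinatorial code describing the chunk decomposition. Per chunk of length $d_{i}$, the code records (a) the pattern of live positions inside the local $t$-depth subtree, at most $\binom{t}{d_{i}} \le t^{d_{i}}$ choices, (b) the branch taken at each live position, $2^{d_{i}}$ choices, and (c) a single bit marking chunk boundaries. This yields at most $(C \cdot t \cdot 2^{t})^{d}$ codes in total for some absolute constant $C$, while by the $p$-bounded hypothesis each shattered $J$ with $|J| \ge d$ contributes probability at most $p^{d}$. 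Multiplying and tightening constants delivers the target bound $(40\, p\, t\, 2^{t})^{d}$.

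The main obstacle, and the reason the affine case is delicate, is ensuring that the encoding is \emph{injective}: unlike Hastad's cube restrictions, a query after an affine restriction involves an affine combination of original variables, so the correspondence between paths in $T^{\rho}$ and shattered subsets of $[n]$ has to be set up carefully via $B$-independence. A related subtlety is that the chunk decomposition must be canonical and invariant under the choice of witness, otherwise the counting double-counts and the constant explodes. The $t$-clipped hypothesis enters precisely here, ensuring each chunk has length at most $t$ and thus contributes only a $2^{t}$ factor, independent of the overall depth of $T$.
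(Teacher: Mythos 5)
First, note that this paper does not prove Theorem \ref{thm:prst}: it is quoted from \cite{pitassi2016poly}, and the text explicitly refers the reader to \cite{pitassi2016poly,rossman2016tutorial} for the proof, describing it only as a direct combinatorial argument that applies the restriction to all variables of the $t$-clipped tree at once and does not use recursion. So there is no in-paper proof to compare against; what follows measures your outline against the cited argument.

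Your outline has the right shape --- canonical decision tree, decomposition of a long path into segments of length at most $t$ via clippedness, an encoding of size $(O(t2^{t}))^{d}$, and the $p$-bounded hypothesis contributing $p^{d}$ through shattering --- and this is indeed the structure of the proof in \cite{pitassi2016poly,rossman2016tutorial}. But as written it is a plan rather than a proof, and the two steps you defer are exactly the ones that carry all the weight. First, you assert that the set $J$ of live variables along a path of $T^{\rho}$ is shattered by the linear part $V$ of $\rho$ ``by the $B$-independence of $T$''; this is the key lemma and does not follow in one line. $B$-independence says that $B$ shatters the variables on root-to-leaf paths of $T$, whereas what is needed is that the \emph{random} subspace $V$ shatters the \emph{queried} set $J$, which requires an inductive argument about the affine subcubes reached by the canonical tree as it conditions on earlier answers. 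Second, your final bound multiplies the number of codes by $p^{d}$, but that union bound is valid only if each code determines the set $J$ without reference to $\rho$: the trajectory of the path through $T$ at the non-live nodes is dictated by $\rho$ itself, so decoding $J$ (and the chunk trajectories) from the code alone is precisely the injectivity problem you name as ``the main obstacle'' and then leave unresolved; relatedly, the per-chunk count ${t \choose d_{i}}$ of ``live positions'' presupposes a trajectory that the code does not yet pin down. Until these two points are carried out, the argument is a correct description of the known strategy rather than a proof of the $(40pt2^{t})^{d}$ bound.
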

The proof of Theorem \ref{thm:prst} is via a beautiful combinatorial
argument, that applies restrictions to \emph{all }the variables in
a $t$-clipped decision tree directly, and notably, does not use recursion.
The reader is referred to \cite{pitassi2016poly,rossman2016tutorial}
for the proof. Casting $t$-DNFs as $t$-clipped decision trees and
using Theorem \ref{thm:prst}, it was shown in \cite{pitassi2016poly}
that any depth $d$ Frege proof for a carefully constructed Tseitin
contradiction $\tau$ over 3-regular expanders must have size at least
$\exp(\Omega(\frac{\log n}{d})^{2})$. Improving the factor $(40pt2^{t})^{d}$
to $O(pt)^{d}$ in Theorem \ref{thm:prst} would imply exponential
lower bounds on the size of Frege proofs for $\tau$; more specifically,
it would imply that any depth $d$ Frege proof for $\tau$ over $3$-regular
expanders must have size at least $\exp(\Omega(n^{1/d}))$, matching
the optimal lower bounds known for boolean circuits \cite{hastad1986almost}. 

However, note that in the parameters for Theorem \ref{thm:prst},
if we let $B=U=\{0,1\}^{n}$, it shatters every set $A\in U$. Further,
we can choose $Y\in{\cal S}_{p,B}$ to simply be a random $p$-restriction
- this is equivalent to a distribution $W=V+u$, where the distribution
$V$ chooses a random subspace of $U$ by independently choosing the
$i$'th standard basis vector with probability $p$, and $u$ is a
uniformly random bit string in $U$. With these instantiations, the
following is an immediate corollary of Theorem \ref{thm:lower_bound}.
\begin{cor}
\label{thm:basic_lowerbound}For every $t$, there is an affine subspace
$B$ of $\{0,1\}^{n}$, and $g_{t}\in{\cal F}_{t,B}$ and $Y\in{\cal S}_{p,B}$,
such that for $0\leq p\leq c_{p}2^{-t}$ and $0\leq d\leq c_{d}\left(\frac{\log n}{2^{t}\log t}\right)$,
\[
\Pr_{\rho\leftarrow Y}[\text{DT}_{\text{depth}}(g_{t}|_{\rho})\geq d]\geq(c_{0}p2^{t})^{d},
\]
where $c_{p}$, $c_{d}$ and $c_{0}$ are universal constants. 
\end{cor}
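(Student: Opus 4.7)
The plan is to exhibit explicit choices of $B$ and $Y$ that reduce the statement to a direct invocation of Theorem \ref{thm:lower_bound}. I would take $B = U = \{0,1\}^n$ viewed as an affine subspace of itself, and let $g_t$ be the tree-tribe function constructed in Theorem \ref{thm:lower_bound}.

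Next I would verify the two membership conditions. For $g_t \in \mathcal{F}_{t,B}$: since $B = U$, for any $A \in U$ and any $A' \subseteq A$ one can simply take $B' = A' \in U = B$ to witness $B' \cap A = A'$, so $B$ shatters every set, and in particular any $t$-clipped decision tree for $g_t$ is vacuously $B$-independent. For $Y \in \mathcal{S}_{p,B}$: let $V$ be the distribution on linear subspaces of $B$ obtained by independently including each standard basis vector $e_i$ with probability $p$ and taking the span of the chosen vectors, and let $u$ be uniform on $B$; set $Y = V + u$. The only substantive check is that $V$ is $p$-bounded, which follows because a coordinate subspace generated by some set $S \subseteq \{e_1,\ldots,e_n\}$ shatters a subset $J \in U$ if and only if $\{e_j : j \in J\} \subseteq S$, an event of probability exactly $p^{|J|}$.

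Finally, I would identify the distribution of $\rho \leftarrow Y$ with a random $p$-restriction in the sense of Theorem \ref{thm:lower_bound}: each coordinate $i$ is free with probability $p$ (precisely when $e_i \in S$), and otherwise fixed to the uniformly random bit $u_i$. Therefore $\Pr_{\rho \leftarrow Y}[\text{DT}_{\text{depth}}(g_t|_\rho) \geq d]$ equals the corresponding probability under a $p$-restriction, and the lower bound $(c_0 p 2^t)^d$ is inherited from Theorem \ref{thm:lower_bound}. The only real obstacle here is the dictionary between the subspace-restriction language of Theorem \ref{thm:prst} and the coordinate-restriction language of Theorem \ref{thm:lower_bound}; once that correspondence is recorded, the corollary truly is immediate, and the parameter ranges on $p$ and $d$ carry over verbatim.
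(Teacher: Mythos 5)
Your proposal is correct and matches the paper's route exactly: the paper also sets $B=U=\{0,1\}^n$ (so every set is shattered), takes $V$ to choose each standard basis vector independently with probability $p$ and $u$ uniform, identifies $\rho\leftarrow Y$ with a random $p$-restriction, and invokes Theorem \ref{thm:lower_bound}. Your explicit check that $V$ is $p$-bounded (a coordinate subspace shatters $J$ iff it contains all of $J$'s basis vectors) is a detail the paper leaves implicit, but the argument is the same.
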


Note that Theorem \ref{thm:basic_lowerbound} shows that the bounds
in Theorem \ref{thm:prst} of \cite{pitassi2016poly} are almost tight.
As a consequence, it shows that for small values of $d$, up till
$\sim2^{-t}\log n$, there are functions that have $t$-clipped decision
trees, but for which exponential lower bounds on the size of Frege
proofs are not possible via improvements to the switching lemma stated
in Theorem \ref{thm:prst}. Note however that it leaves open the possibility
of obtaining such exponential bounds for $t$-DNFs if they are treated
\emph{directly} and not cast as $t$-clipped decision trees. \\

We start by proving Theorem \ref{thm:upper_bound} in Section \ref{sec:upper_bound},
since observations from the improved upper bound proof, stated in
Subsection \ref{sub:Obs_uppbound}, will lead to the definition of
tree tribes in Section \ref{sec:Tree-tribes}, which will further
help in proving Theorem \ref{thm:lower_bound} in Section \ref{sec:Lower-Bounds}.

\section{Preliminaries}

\subsection{Decision trees and random restrictions}

We will consider only boolean functions over the hypercube, $f:\{0,1\}^{n}\rightarrow\{0,1\}$.
When we consider the function $f$ in the fourier basis, we assume
that it is a boolean function over $\{1,-1\}^{n}$. The fourier expansion
of $f:\{1,-1\}^{n}\rightarrow\{1,-1\}$ is given by $f(x)=\sum_{S\subseteq[n]}\hat{f}_{S}\chi_{S}(x)$,
where $\chi_{S}(x)=\prod_{i\in S}x_{i}$ and for every $S$, $\hat{f}_{S}\in\mathbb{R}$.
The bias of $f$ is defined as 
\[
\text{bias}(f)=\left|\Pr_{x}[f(x)=0]-\frac{1}{2}\right|,
\]
and the correlation between boolean functions $f$ and $g$ is given
by 
\[
\text{Corr}(f,g)=\Pr_{x}[f(x)=g(x)].
\]
The influence of a variable $x_{i}$ is defined as the probability,
over choosing a random bit string on variables different from $x_{i}$,
that flipping the value of $x_{i}$ flips the value of the function. 
\begin{defn}
\emph{\label{def:(Decision-trees)}(Decision trees)} Given a set of
variables $X=\{x_{1},\ldots,x_{n}\}$, a \emph{decision tree} $T$
is a rooted binary tree $T=(V,E,X,\sigma_{V},\sigma_{E})$, where
the functions $\sigma_{V}:V\rightarrow X$ and $\sigma_{E}:E\rightarrow\{0,1\}$
respectively label the vertices with variables in $X$ and edges with
$0$ or $1$, and the leaves are labelled with either $0$ or $1$.
Further, in every root to leaf path, any variable of $X$ appears
\emph{at most once}. Given a boolean function $f:\{0,1\}^{n}\rightarrow\{0,1\}$
over the variables $X$, we say that $T$ is a decision tree for $f$
or that $f$ is computed/evaluated by $T$, if, for every root to
leaf path $\pi=\{x_{i_{1}},e_{i_{1}},\ldots,x_{i_{l}},e_{i_{l}},b\}$
where $x_{i_{k}}\in X$, $e_{i_{k}}\in\{0,1\}$ and where the value
of the leaf is $b\in\{0,1\}$, the function $f$ evaluates to $b$
on the subcube in which the variables $(x_{i_{1}},\ldots,x_{i_{l}})$
are assigned the value $(e_{i_{1}},\ldots,e_{i_{l}})$. Given a decision
tree $T$, we use the symbols $f(T)$ to denote the function computed
by the tree on variables $X$. 
\begin{defn}
\emph{($\text{DT}_{\text{depth}}$)} For any root to leaf path $\pi=(x_{i_{1}},e_{i_{1}},\ldots,x_{i_{l}},e_{i_{l}},b)$,
we say that the \emph{length} of the path is the number of edges in
it, i.e. $|\pi|=l$. The \emph{depth} of a decision tree $T$, denoted
by $\text{depth}(T)$ is the length of the \emph{longest} root to
leaf path in $T$. For any boolean function $f$, we denote $\text{DT}_{\text{depth}}(f)$
to be the minimum depth amongst all decision trees $T$ for $f$.
We also say that the variables \emph{queried} along the path $\pi$
were $(x_{i_{1}},\ldots,x_{i_{l}})$, and the values \emph{assigned
}or \emph{received }or the values to which the variables \emph{evaluated}
were $(e_{i_{1}},\ldots,e_{i_{l}})$.
\begin{defn}
\emph{(Clipped decision trees)} A decision tree $T$ is $t$-clipped,
if every vertex of $T$ is at a distance of at most $t$ from some leaf. 
\begin{defn}
\emph{(Random restrictions)} Given a set of variables $X=\{x_{1},\ldots,x_{n}\}$,
a restriction $\rho$ is a string in $\{0,1,*\}^{n}$, i.e., $\rho:X\rightarrow\{0,1,*\}$,
where $\rho(x_{i})=*$ means that the variable is left unset, i.e.,
$\rho(x_{i})=x_{i}$. A random restriction is a distribution over
$\{0,1,*\}^{n}$. A restriction $\rho$ is said to be a random $p$-restriction,
if $\rho$ is a distribution that choses a subset $S\subseteq X$
of variables with probability $|S|^{p}$ and assigns $*$ to them,
and uniformly assigns the value $0$ or $1$ to the remaining variables.
Equivalently, $\rho$ independently assigns each variable the value
$*$ with probability $p$ and 0 or 1 with probability $q=\frac{1-p}{2}$.
\end{defn}
\end{defn}
\end{defn}
We will restrict the symbols $p$ and $q$ to that specific meaning
throughout, even when we treat them as formal variables.\end{defn}
\begin{rem}
Note that a random $p$-restrictions $\rho$ has product structure
due to the independence between various variables, i.e. $\rho=\rho_{1}\rho_{2}$,
where $\rho_{1}$ and $\rho_{2}$ are random $p$-restrictions over
the variable sets $\{x_{1},\ldots,x_{m}\}$ and $\{x_{m+1},\ldots,x_{n}\}$
respectively. 
\end{rem}
Given a boolean function $f$ on $n$ variables, we write $f|_{\rho}$
for the boolean function obtained by restricting $f$ according to
a subset of $\{0,1\}^{n}$ chosen according to $\rho$, and say that
$f$ is restricted to $\rho$ or $f$ is hit by $\rho$. We will also
use the notation $T|_{\rho}$, which would mean $f(T)|_{\rho}$. We
will mainly study the probability of the event that the $\text{DT}_{\text{depth}}$
of $f|_{\rho}$ is greater than $d$ when $\rho$ is a random $p$-restriction,
i.e., 
\[
\Pr_{\rho}[\text{DT}_{\text{depth}}(f|_{\rho})\geq d].
\]

\subsection{Polynomials and linear operators}

A univariate polynomial $Q$ in the variable $p$ will be an infinite
dimensional vector over $\mathbb{R}$, in the vector space $\mathbb{R}[p]$,
with a finite number of non-zero coefficients. We will denote $Q$
as $Q=\sum_{i\geq0}c_{i}p^{i}$. Using the standard notation for generating
functions, for every $i\geq0$, we define the operator $[p^{i}]$
as 
\[
[p^{i}]Q=c_{i}.
\]
Further, we define the operator $[\uparrow p^{i}]$ as

\[
[\uparrow p^{i}]Q=\sum_{j\geq i}\left([p^{j}]Q\right)p^{j-i}=\sum_{j\geq i}c_{j}p^{j-i}.
\]

\begin{defn}
\emph{(Absolute maximizer)} Given a closed set ${\cal D}\subseteq[0,1]$
and some polynomial $Q$ in $\mathbb{R}[p]$, we define the absolute
maximizers $G_{i}$ of  $[\uparrow p^{i}]Q$ as
\[
G_{i}(Q)=\max_{p\in{\cal D}}\left|[\uparrow p^{i}]Q\right|.
\]

\end{defn}
The following claim is immediate from the definitions. 
\begin{lem}
\label{claim:poly_props}For any two univariate polynomials $Q,R\in\mathbb{R}[p]$,
the following hold:
\begin{enumerate}
\item $[p^{i}](Q+R)=[p^{i}]Q+[p^{i}]R$
\item $[\uparrow p^{i}](Q+R)=[\uparrow p^{i}]Q+[\uparrow p^{i}]R$
\item $[p^{i}](QR)=\sum_{j=0}^{i}[p^{j}]Q[p^{i-j}]R$
\item $[p^{i}]p^{j}Q=0$ if $j>i$
\item $[p^{i}]p^{j}Q=[p^{i-j}]Q$ if $j\leq i$
\item $G_{i}(Q\pm R)\leq G_{i}(Q)+G_{i}(R).$
\end{enumerate}
\end{lem}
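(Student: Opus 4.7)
The plan is to verify each of the six properties directly from the definitions, since each amounts to a routine manipulation of formal power series coefficients; I will organize them so that later items lean on earlier ones. Write $Q = \sum_{i\geq 0} c_i p^i$ and $R = \sum_{i\geq 0} d_i p^i$ throughout.

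For (1), the identity $Q+R = \sum_i (c_i+d_i) p^i$ immediately gives $[p^i](Q+R) = c_i + d_i = [p^i]Q + [p^i]R$. Then (2) follows from (1) applied termwise to the defining sum $[\uparrow p^i](Q+R) = \sum_{j\geq i} [p^j](Q+R)\, p^{j-i}$, after distributing the coefficient extraction through the (finite) sum. For (3), I would expand $QR = \sum_k \bigl(\sum_{j=0}^k c_j d_{k-j}\bigr) p^k$ via the Cauchy product (valid because only finitely many coefficients are nonzero), and read off the coefficient of $p^i$.

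For (4) and (5), observe that $p^j Q = \sum_k c_k p^{k+j}$, so the coefficient of $p^i$ equals $c_{i-j}$ when $i\geq j$ and is $0$ otherwise; these are exactly the two claims. Finally, for (6), combine (2) with the ordinary triangle inequality: for each fixed $p \in \mathcal{D}$,
\[
\bigl|[\uparrow p^i](Q \pm R)\bigr| = \bigl|[\uparrow p^i]Q \pm [\uparrow p^i]R\bigr| \leq \bigl|[\uparrow p^i]Q\bigr| + \bigl|[\uparrow p^i]R\bigr|,
\]
and then take the supremum over $p \in \mathcal{D}$ on the left, bounding each term on the right by its own supremum $G_i(Q)$ and $G_i(R)$ respectively.

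There is no real obstacle here; the only mild care needed is in (3), to ensure one treats $Q,R$ as polynomials (finitely many nonzero coefficients) so that the Cauchy product is a finite sum and no convergence issue arises, and in (6), to note that the outer supremum of a sum is bounded by the sum of suprema even though the two suprema may be attained at different points of $\mathcal{D}$.
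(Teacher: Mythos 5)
Your verification is correct, and it matches the paper's intent exactly: the paper offers no proof at all, simply declaring the lemma ``immediate from the definitions,'' and your coefficient-by-coefficient check (including the two points of mild care you flag for (3) and (6)) is precisely the routine argument being left to the reader.
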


\subsection{Basic equalities and inequalities}

We mention some basic equalities and inequalities that we will repeatedly
use:
\begin{lem}
\label{claim:basic_ineqs}For any integer $n\geq1$, the following
hold:
\begin{enumerate}
\item 
\[
\sum_{i=0}^{n}p^{i}=\dfrac{1-p^{n+1}}{1-p}
\]

\item 
\[
\sum_{k=1}^{n}\frac{k}{2^{k}}=2-\frac{n+2}{2^{n}}\leq2
\]

\item 
\[
\sum_{k=2}^{n}{k \choose 2}\frac{1}{2^{k}}=2-\frac{n^{2}+3n+4}{2^{n+1}}\leq2
\]

\item 
\[
\sum_{i=0}^{n}{n \choose i}q^{n-i}p^{i}(i+1)=pn(q+p)^{n-1}+(q+p)^{n}
\]

\item For $p\in[0,1]$, 
\[
\frac{1}{p}\left(1-(1-p)^{t}\right)\leq t
\]

\item For $p\in[0,1]$,
\[
\frac{1}{p{}^{2}}\left((1-p)^{t}-1+tp\right)\leq{t \choose 2}
\]

\end{enumerate}
\end{lem}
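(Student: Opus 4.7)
The plan is to dispatch the six parts by standard elementary techniques, grouped according to the method.

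Parts (1)--(3) are finite series identities. Part (1) is the geometric series, proved either by multiplying the sum by $(1-p)$ and observing telescoping, or by induction on $n$. For parts (2) and (3) I would prove the stated closed forms by the ``multiply-and-subtract'' trick: for $S_n = \sum_{k=1}^n k/2^k$, compute $2S_n - S_n$ and reduce the result via part (1) applied at $p=1/2$, which gives the claimed $2 - (n+2)/2^n$; the bound by $2$ is then immediate since the subtracted term is nonnegative. Part (3) follows by the same method applied to $T_n = \sum_{k=2}^n \binom{k}{2}/2^k$, this time using part (2) as input when re-indexing $\binom{k+1}{2} - \binom{k}{2} = k$. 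Alternatively, both identities can be verified by straightforward induction on $n$.

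Part (4) is a direct consequence of the binomial theorem. I split the sum as
\[
\sum_{i=0}^{n}\binom{n}{i}q^{n-i}p^{i}(i+1) = \sum_{i=0}^{n}i\binom{n}{i}q^{n-i}p^{i} + \sum_{i=0}^{n}\binom{n}{i}q^{n-i}p^{i}.
\]
The second sum equals $(q+p)^n$. For the first, I use $i\binom{n}{i} = n\binom{n-1}{i-1}$ and reindex by $j = i-1$ to obtain $pn\sum_{j=0}^{n-1}\binom{n-1}{j}q^{n-1-j}p^{j} = pn(q+p)^{n-1}$, giving the stated identity.

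Parts (5) and (6) I would handle via the integral representation $1-(1-p)^t = \int_0^p t(1-s)^{t-1}\,ds$, valid for all real $t \geq 1$ and $p \in [0,1]$. Since $(1-s)^{t-1} \leq 1$ on $[0,1]$, the integrand is at most $t$, giving $1-(1-p)^t \leq tp$ and therefore part (5). For part (6), I rewrite
\[
(1-p)^t - 1 + tp = \int_0^p t\bigl(1 - (1-s)^{t-1}\bigr)\,ds,
\]
and then apply part (5) with exponent $t-1$ inside the integrand to bound $1-(1-s)^{t-1} \leq (t-1)s$. Integrating $\int_0^p t(t-1)s\,ds = t(t-1)p^2/2 = \binom{t}{2}p^2$ yields the claim. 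Since none of the six parts is deep, there is no real obstacle here; the only thing to be careful about is keeping track of the exact closed-form constants in parts (2) and (3), which the subtraction method delivers cleanly.
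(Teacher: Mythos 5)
Your proposal is correct. Parts (1)--(4) match the paper in spirit: the paper dismisses (1)--(3) as direct calculations, and for (4) it differentiates the identity $x(q+x)^{n}=\sum_{i}\binom{n}{i}q^{n-i}x^{i+1}$ and sets $x=p$, which is the calculus twin of your absorption identity $i\binom{n}{i}=n\binom{n-1}{i-1}$; the two are interchangeable. Where you genuinely diverge is in (5) and (6). The paper proves (6) by showing, via induction on $t$ and the algebraic rewriting
\[
\tfrac{1}{p^{2}}\left((1-p)^{t}-1+tp\right)=(1-p)\,\tfrac{1}{p^{2}}\left((1-p)^{t-1}-1+p(t-1)\right)+t-1,
\]
that the left-hand side is non-increasing in $p$ on $[0,1]$, hence maximized at $p=0$ where it equals $\binom{t}{2}$. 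You instead use the integral representations $1-(1-p)^{t}=\int_{0}^{p}t(1-s)^{t-1}\,ds$ and $(1-p)^{t}-1+tp=\int_{0}^{p}t\bigl(1-(1-s)^{t-1}\bigr)\,ds$, bounding the integrands pointwise (the second by feeding part (5) back in at exponent $t-1$). Both arguments are sound; yours is shorter, avoids the induction, and works verbatim for real $t\geq1$, while the paper's monotonicity argument gives the slightly stronger structural fact that the quantity decreases in $p$ (which it does not actually need elsewhere). The only minor point to keep in mind in your version is the degenerate endpoint $t=1$ in (6), where both sides are $0$, which your integral handles trivially.
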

\begin{proof}
(1), (2) and (3) follow by direct calculations. For (4), write $x(q+x)^{n}=\sum_{i=0}^{n}{n \choose i}q^{n-i}x^{i+1}$,
differentiate both sides with respect to $x$, and set $x=p$. We
show (6) and the proof of (5) is similar. Use induction on $t$ to
show that the function $\frac{1}{p^{2}}\left((1-p)^{t}-1+tp\right)$
is a non-increasing function of $p$. It is trivially true for $t=1$,
let it be true for $t-1$. Rewriting, 
\begin{eqnarray*}
\frac{1}{p^{2}}\left((1-p)^{t}-1+tp\right) & = & \frac{1}{p^{2}}\left((1-p)\left((1-p)^{t-1}-1+p(t-1)\right)+(1-p)-p(1-p)(t-1)-1+tp\right)\\
 & = & (1-p)\frac{1}{p^{2}}\left((1-p)^{t-1}-1+p(t-1)\right)+t-1
\end{eqnarray*}
which is a non-increasing function of $p$, since $(1-p)$ is a decreasing
function of $p$ in $[0,1]$ and by the induction hypothesis, $\frac{1}{p^{2}}\left((1-p)^{t-1}-1+p(t-1)\right)$
is a non-increasing function of $p$. Thus, the function is maximized
at $p=0$, setting which proves the claim. 
\end{proof}

\section{\label{sec:upper_bound}Upper bound}

We start by by showing an improved upper bound on the probability
that a function represented by a clipped decision tree has depth greater
than $d$ after it is hit with a random $p$-restriction. Our proof
will be recursive and use conditioning similar to that in \cite{hastad1986almost}. 

Let $T$ be a $t$-clipped decision tree and $f=f(T)$ be the corresponding
boolean function. We assume that $T$ has $n$ variables, but note
that since the final bounds in Theorem \ref{thm:upper_bound} are
independent of $n$, it is safe to keep the intuition that $T$ is
virtually an infinite tree. Let $T$ be hit by a random $p$-restriction
$\rho$. We want to show that 
\[
\Pr_{\rho}[\text{DT}_{\text{depth}}(f|_{\rho})\geq d]\leq(4p2^{t})^{d}.
\]
We will use induction on $d$ to prove the claim. To set up the induction,
we will need the following definitions. 
\begin{defn}
A decision tree $T$ is $(t_{0},t)$-clipped for $t_{0}\leq t$, if
the root has distance at most $t_{0}$ to some leaf, and every other
vertex has distance at most $t$ to some leaf. 
\begin{defn}
For integers $t_{0},t,n,d$, define the probabilities $\gamma_{d,n}(t_{0},t)$
as follows:
\[
\gamma_{d,n}(t_{0},t)=\max_{\text{ }\substack{(t_{0},t)\text{-clipped trees }T\text{ that}\\
\text{decide any function on \ensuremath{n} variables}
}
}\Pr_{\rho}[\text{DT}_{\text{depth}}(T|_{\rho})\geq d]
\]

\end{defn}
\end{defn}
The probabilties $\gamma$ have been specifically defined to make
the parameter $n$ irrelevant, as long as any recursive inequality
for $\gamma$ only reduces $n$. It is simple to see that $\gamma$
is monotone in $n$. 
\begin{lem}
\label{lem:gam_n_inc}If $n'\leq n$, then $\gamma_{d,n'}(t_{0},t)\leq\gamma_{d,n}(t_{0},t)$.\end{lem}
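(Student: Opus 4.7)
The plan is to reduce the $n'$-variable case to the $n$-variable case by embedding a tree on fewer variables into the larger variable set as a tree that simply ignores the extra variables, and then to argue that the random restriction on the ignored variables has no effect.

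First, I would fix an arbitrary $(t_0,t)$-clipped decision tree $T'$ on the variable set $X' = \{x_1,\ldots,x_{n'}\}$. Setting $X = \{x_1,\ldots,x_n\} \supseteq X'$, I view $T'$ unchanged as a decision tree over the larger variable set $X$; this is legitimate because Definition \ref{def:(Decision-trees)} allows any subset of $X$ to label the vertices and only requires that no variable appear twice on a root-to-leaf path. The distance from any vertex of $T'$ to the nearest leaf is intrinsic to the tree and independent of the ambient variable set, so $T'$ is still a $(t_0,t)$-clipped tree in its new guise.

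Next, I would invoke the product structure of random $p$-restrictions noted in the remark after Definition 10: a random $p$-restriction $\rho$ on $X$ factors as $\rho = \rho_1 \rho_2$, where $\rho_1$ is an independent random $p$-restriction on $X'$ and $\rho_2$ is an independent random $p$-restriction on $X \setminus X'$. Since no variable of $X \setminus X'$ occurs in $T'$, the action of $\rho_2$ is trivial, so $T'|_\rho = T'|_{\rho_1}$ as functions, and hence $\mathrm{DT}_{\mathrm{depth}}(T'|_\rho) = \mathrm{DT}_{\mathrm{depth}}(T'|_{\rho_1})$ pointwise. Marginalising out $\rho_2$ yields
\[
\Pr_{\rho}\bigl[\mathrm{DT}_{\mathrm{depth}}(T'|_\rho) \geq d\bigr] \;=\; \Pr_{\rho_1}\bigl[\mathrm{DT}_{\mathrm{depth}}(T'|_{\rho_1}) \geq d\bigr].
\]

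Finally, taking the maximum of the left-hand side over all $(t_0,t)$-clipped trees on $n$ variables (which includes the embedded $T'$) gives $\gamma_{d,n}(t_0,t) \geq \Pr_{\rho_1}[\mathrm{DT}_{\mathrm{depth}}(T'|_{\rho_1}) \geq d]$, and then taking the supremum over all $(t_0,t)$-clipped trees $T'$ on $n'$ variables on the right yields $\gamma_{d,n}(t_0,t) \geq \gamma_{d,n'}(t_0,t)$. There is no real obstacle here; the only subtlety worth stating explicitly is that the $(t_0,t)$-clipped property is a property of the tree as an abstract labelled graph and is preserved under enlarging the ambient set of available variables.
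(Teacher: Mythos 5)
Your proof is correct and rests on the same monotonicity-by-embedding idea as the paper's. The only real difference is in how the embedding is realized: the paper grafts a dummy query of a fresh variable $y$ onto a bottom vertex of $T'$ whose children are both leaves (so the new tree syntactically contains an extra variable while computing the same function), whereas you regard $T'$ unchanged as a tree over the larger variable set and marginalise out the restriction on the unused variables --- both are valid readings of the definition of $\gamma_{d,n}$, and your version is, if anything, slightly cleaner since it avoids the tree surgery and handles $n - n'$ extra variables in one step rather than one at a time.
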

\begin{proof}
Assume $n=n'+1$, and the claim follows. Let $T'$ be a $(t_{0},t)$
clipped tree on $n'$ vertices, which achieves the maximum for $\gamma_{d,n'}(t_{0},t)$
for some $d$. Let $x$ be the last variable queried in $T'$ along
some root to leaf path, such that the subtrees rooted at $x$ are
both leaves. Note that such a variable always exists in a finite tree.
When $x$ is queried and it evaluates to 1, let the value of the leaf
be $a\in\{0,1\}$. Let $y$ be a new variable different from all the
variables appearing in $T'$. Let $T$ be a new tree created as follows:
$T$ is same as $T'$, except that when the variable $x$ is queried,
if it evaluates to 1, we query the variable $y$, and the value of
both the leaves at $y$ is $a$. Note that essentially, $y$ will
never be queried by any decision tree. $T$ is also $(t_{0},t)$-clipped,
and since $f(T')=f(T)$ for any $y\in\{0,1,*\}$, they have the same
minimum depth under the action of any random restriction, and the
claim follows. 
\end{proof}

\subsection{Recurrence for $\gamma$}

We proceed by writing a recurrence for $\gamma_{d,n}(t_{0},t)$. 

Let $T$ be the $(t_{0},t)$-clipped decision tree for which $\gamma_{d,n}(t_{0},t)$
has maximum value. Without loss of generality, let $x_{1}$ be the
root variable queried in $T$. Let $(x_{1},e_{1},\ldots,x_{t_{0}},e_{t_{0}},a)$
be the path from the root to the leaf at distance $t_{0}$ which evaluates
to $a\in\{0,1\}$. Since any variable is assigned $0$ or $1$ with
equal probability, without loss of generality, let $e_{1}=0$. Let
the subtree out of the 0-edge at $x_{1}$ be $T_{0}$ and the subtree
out of the 1-edge be $T_{1}$. 

Under the action of a random $p$-restriction $\rho$, if $x_{1}$
is assigned $0$ by $\rho$, note that we get a $(t_{0}-1,t)$ clipped
decision tree $T_{0}$ on $n_{0}$ variables where $n_{0}<n$. By
our definition of decision trees \ref{def:(Decision-trees)}, since
$x_{1}$ appears as the root of $T$, it cannot appear again as a
variable in $T_{0}$, and thus $T_{0}$ is indeed a $(t_{0}-1,t)$-clipped
tree. If $x_{1}$ is assigned $1$ by $\rho$, similarly, we get a
$(t,t)$ clipped decision tree $T_{1}$ on $n_{1}$ variables where
$n_{1}<n$. 

Let the event $E_{T,d}$ be defined as follows:
\[
E_{T,d}\equiv\text{DT}_{\text{depth}}(T|_{\rho})\geq d.
\]

\begin{lem}
\label{claim:Has_obs}In case $x_{1}$ is assigned $*$ by a random
$p$-restriction $\rho$, 
\[
E_{T,d}\subseteq E_{T_{0},d-1}\bigcup E_{T_{1},d-1}.
\]
\end{lem}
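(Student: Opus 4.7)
The plan is to reduce the statement to a simple structural observation: when $x_1$ is left unset by $\rho$, the function $f(T)|_\rho$ still branches on $x_1$ at the top, so its minimum depth decision tree is at most one deeper than the worst of the two branches. I will prove this inequality and then take the contrapositive to obtain the claimed inclusion of events.

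First, I would invoke Definition \ref{def:(Decision-trees)}: every variable appears at most once on each root-to-leaf path of $T$, and $x_1$ is queried at the root, so $x_1$ does not appear anywhere inside $T_0$ or $T_1$. Consequently $f(T_0)|_\rho$ and $f(T_1)|_\rho$ are functions only of the remaining free variables, with no dependence on $x_1$.

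Next, since $\rho$ assigns $*$ to $x_1$, the variable $x_1$ is still free in $f(T)|_\rho$ and we can write
\[
f(T)|_\rho = (1-x_1)\cdot f(T_0)|_\rho + x_1 \cdot f(T_1)|_\rho.
\]
A decision tree for $f(T)|_\rho$ can then be constructed by querying $x_1$ at the root and attaching, to its $0$-edge and $1$-edge respectively, an optimal decision tree for $f(T_0)|_\rho$ and one for $f(T_1)|_\rho$. Because neither of these two subtrees mentions $x_1$, the combined tree is a valid decision tree (no variable repeats on any root-to-leaf path), and its depth is at most
\[
1 + \max\bigl(\text{DT}_{\text{depth}}(f(T_0)|_\rho),\ \text{DT}_{\text{depth}}(f(T_1)|_\rho)\bigr).
\]
Hence $\text{DT}_{\text{depth}}(T|_\rho)$ is bounded by the same quantity.

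Finally, I take the contrapositive: if the event $E_{T,d}$ holds, then $\text{DT}_{\text{depth}}(T|_\rho)\geq d$, so the maximum on the right side must be at least $d-1$, and therefore at least one of $E_{T_0,d-1}$ or $E_{T_1,d-1}$ holds. This is exactly the claimed inclusion. The only subtle point I need to be careful about is confirming that $x_1$ does not resurface inside $T_0$ or $T_1$, which the path-uniqueness condition in the decision tree definition guarantees; beyond that, the argument is a direct root-level decomposition with no real obstacle.
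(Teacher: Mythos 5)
Your proposal is correct and follows essentially the same route as the paper: both arguments note that $x_1$ cannot reappear in $T_0$ or $T_1$ by the path-uniqueness condition, build the decision tree for $T|_\rho$ by placing $x_1$ at the root with optimal trees for $T_0|_\rho$ and $T_1|_\rho$ as subtrees, and conclude by contrapositive. The only cosmetic difference is that you state the depth bound $\text{DT}_{\text{depth}}(T|_\rho)\leq 1+\max(\cdot,\cdot)$ explicitly before contraposing, whereas the paper contraposes directly.
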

\begin{proof}
We show the contrapositive. Let $\rho'$ be the restriction on variables
different from $x_{1}$. As stated before, by definition \ref{def:(Decision-trees)},
$T_{0}$ and $T_{1}$ do not contain $x_{1}$, and thus $T_{0}|_{\rho}=T_{0}|_{\rho'}$
and $T_{1}|_{\rho}=T_{1}|_{\rho'}$. Let $V_{0}$ and $V_{1}$ be
decision trees such that $\text{depth}(V_{0})=\text{DT}_{\text{depth}}(T_{0}|_{\rho})$
and $\text{depth}(V_{1})=\text{DT}_{\text{depth}}(T_{1}|_{\rho})$
($V_{0}$ and $V_{1}$ are random variables). Thus, if $\text{DT}_{\text{depth}}(T_{0}|_{\rho})<d-1$,
i.e. the smallest depth of the decision tree computing $T_{0}|_{\rho}$
is strictly less than $d-1$, and similarly $\text{DT}_{\text{depth}}(T_{1}|_{\rho})<d-1$,
then the decision tree which has $x_{1}$ as the root, and $V_{0}$
as the left subtree and $V_{1}$ as the right subtree would correctly
evaluate the function $T|_{\rho}$ and have depth strictly less than
$d$, which would mean that $\text{DT}_{\text{depth}}(T|_{\rho})<d$,
and the event $E_{T,d}$ cannot happen. 
\end{proof}
Let $\rho=\rho_{x_{1}}\rho'$ where $\rho'$ is a random restriction
on variables different from $x_{1}$. Thus, we can write the following,
\begin{eqnarray}
\Pr_{\rho}[E_{T,d}] & = & \Pr_{\rho}[E_{T,d}|\rho(x_{1})=0]\Pr_{\rho}[\rho(x_{1})=0]+\Pr_{\rho}[E_{T,d}|\rho(x_{1})=1]\Pr_{\rho}[\rho(x_{1})=1]\nonumber \\
 &  & +\Pr_{\rho}[E_{T,d}|\rho(x_{1})=*]\Pr_{\rho}[\rho(x_{1})=*]\nonumber \\
 & \leq & \Pr_{\rho'}[E_{T_{0},d}]q+\Pr_{\rho'}[E_{T_{1},d}]q+\Pr_{\rho'}[E_{T_{0},d-1}\bigcup E_{T_{1},d-1}]p\nonumber \\
 & \leq & \Pr_{\rho'}[E_{T_{0},d}]q+\Pr_{\rho'}[E_{T_{1},d}]q+\Pr_{\rho'}[E_{T_{0},d-1}]p+\Pr_{\rho'}[E_{T_{1},d-1}]p\label{eq:prob_events_for_gamma}
\end{eqnarray}
where the second line used Lemma \ref{claim:Has_obs}, the fact that
the subtrees at $x_{1}$ do not contain the variable $x_{1}$ and
that $\rho$ has product structure since each of the variables are
assigned values independently, and the last line used the union bound.
Further, note that since $\Pr_{\rho}[E_{T,d}]\leq\gamma_{d,n}(t_{0},t)$
if $T$ is $(t_{0},t)$-clipped and has $n$ variables, we can rewrite
the inequality \ref{eq:prob_events_for_gamma} in terms of the $\gamma_{d,n}(t_{0},t)$,
and using Lemma \ref{lem:gam_n_inc}, we get, 
\begin{eqnarray*}
\gamma_{d,n}(t_{0},t) & \leq & q\gamma_{d,n_{0}}(t_{0}-1,t)+q\gamma_{d,n_{1}}(t,t)+p\gamma_{d-1,n_{0}}(t_{0}-1,t)+p\gamma_{d-1,n_{1}}(t,t)\\
 & \leq & q\gamma_{d,n}(t_{0}-1,t)+q\gamma_{d,n}(t,t)+p\gamma_{d-1,n}(t_{0}-1,t)+p\gamma_{d-1,n}(t,t).
\end{eqnarray*}
The parameters $n$ and $t$ can be made implicit, and we can rewrite
the recurrence succinctly as
\begin{eqnarray}
\gamma_{d}(t_{0}) & \leq & q\gamma_{d}(t_{0}-1)+q\gamma_{d}(t)+p\gamma_{d-1}(t_{0}-1)+p\gamma_{d-1}(t),\label{eq:gam_k_rec}
\end{eqnarray}
and for every integer $d$, we set 
\begin{equation}
\gamma_{d}(0)=0.\label{eq:gammad00}
\end{equation}
Notice that at this point, we can use induction for $\gamma_{d-1}(t)$
but not the other terms. We now show the following recursive claim
for $\gamma_{d}(t_{0})$. \\

\begin{lem}
After $m$ iterations, the recursion is,
\begin{equation}
\gamma_{d}(t_{0})\leq\sum_{i=0}^{m}{m \choose i}q^{m-i}p^{i}\gamma_{d-i}(t_{0}-m)+\sum_{i=1}^{m}q^{i}\gamma_{d}(t)+\sum_{j=1}^{m}p^{j}\gamma_{d-j}(t)\left(\sum_{i=0}^{m-j}{j+i \choose i}q^{i}\right).\label{eq:gam_m_rec}
\end{equation}
\end{lem}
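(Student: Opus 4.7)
My plan is to prove the inequality by induction on $m$. The base case $m=1$ is simply the one-step recurrence \eqref{eq:gam_k_rec}: plugging $m=1$ into the right-hand side of \eqref{eq:gam_m_rec} gives
\[
q\gamma_{d}(t_{0}-1)+p\gamma_{d-1}(t_{0}-1)+q\gamma_{d}(t)+p\gamma_{d-1}(t),
\]
which matches \eqref{eq:gam_k_rec} exactly.

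For the inductive step, I assume the bound holds at level $m$ and apply the one-step recurrence \eqref{eq:gam_k_rec} to each term $\gamma_{d-i}(t_{0}-m)$ appearing in the first sum (using Lemma \ref{lem:gam_n_inc} to ignore the drop in $n$, and the convention $\gamma_{d}(0)=0$ from \eqref{eq:gammad00} for the boundary). This produces four families of terms:
\[
\binom{m}{i}q^{m+1-i}p^{i}\gamma_{d-i}(t_{0}-m-1),\ \binom{m}{i}q^{m-i}p^{i+1}\gamma_{d-i-1}(t_{0}-m-1),\ \binom{m}{i}q^{m+1-i}p^{i}\gamma_{d-i}(t),\ \binom{m}{i}q^{m-i}p^{i+1}\gamma_{d-i-1}(t).
\]
Combining the first two families via the index shift $j=i+1$ in the second and Pascal's identity $\binom{m}{j}+\binom{m}{j-1}=\binom{m+1}{j}$ yields
\[
\sum_{j=0}^{m+1}\binom{m+1}{j}q^{m+1-j}p^{j}\gamma_{d-j}(t_{0}-m-1),
\]
which is exactly the first sum in \eqref{eq:gam_m_rec} with $m$ replaced by $m+1$.

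The remaining work is to check that the two families involving $\gamma_{\ast}(t)$ — combined with the existing $\sum_{i=1}^{m}q^{i}\gamma_{d}(t)$ and $\sum_{j=1}^{m}p^{j}\gamma_{d-j}(t)\bigl(\sum_{i=0}^{m-j}\binom{j+i}{i}q^{i}\bigr)$ terms of the inductive hypothesis — match the corresponding sums at level $m+1$. For the $\gamma_{d}(t)$ coefficient, the only new contribution is the $i=0$ summand $q^{m+1}\gamma_{d}(t)$, extending the range to $\sum_{i=1}^{m+1}q^{i}\gamma_{d}(t)$. For each $j\geq 1$, the coefficient of $p^{j}\gamma_{d-j}(t)$ picks up $\binom{m}{j}q^{m+1-j}$ (from the third family at $i=j$) and $\binom{m}{j-1}q^{m+1-j}$ (from the fourth family at $i=j-1$); Pascal's identity collapses these into $\binom{m+1}{j}q^{m+1-j}=\binom{j+(m+1-j)}{m+1-j}q^{m+1-j}$, which is precisely the term needed to extend $\sum_{i=0}^{m-j}\binom{j+i}{i}q^{i}$ to $\sum_{i=0}^{m+1-j}\binom{j+i}{i}q^{i}$. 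Finally, the $j=m+1$ term $p^{m+1}\gamma_{d-m-1}(t)$, coming from the $i=m$ summand of the fourth family, is new and has the correct coefficient $1=\binom{m+1}{0}q^{0}$.

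The only real obstacle is careful bookkeeping of the binomial coefficients and making sure that the index shifts line up through Pascal's identity; the argument itself has no deeper content than re-indexing and one application of the basic recurrence to each term already accumulated.
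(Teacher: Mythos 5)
Your proposal is correct and follows essentially the same route as the paper: induction on $m$ with base case \eqref{eq:gam_k_rec}, applying the one-step recurrence to each $\gamma_{d-i}(t_{0}-m)$ term and regrouping via Pascal's identity, exactly as in the paper's computation of $A_{1}$ and $A_{2}$. The bookkeeping of the index shifts and the extension of the inner sum $\sum_{i=0}^{m-j}\binom{j+i}{i}q^{i}$ by its $i=m+1-j$ term all check out.
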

\begin{proof}
The base case for $m=1$ is given by equation \ref{eq:gam_k_rec}.
Let the recurrence be true for $m$. Thus, 
\[
\gamma_{d}(t_{0})\leq\sum_{i=0}^{m}{m \choose i}q^{m-i}p^{i}\gamma_{d-i}(t_{0}-m)+\sum_{i=1}^{m}q^{i}\gamma_{d}(t)+\sum_{j=1}^{m}p^{j}\gamma_{d-j}(t)\left(\sum_{i=0}^{m-j}{j+i \choose i}q^{i}\right).
\]
Let $t_{0}'=t_{0}-m-1$. Using the recursion in equation \ref{eq:gam_k_rec}
for $\gamma_{d-i}(t_{0}-m)$, we can write 
\begin{eqnarray*}
\gamma_{d}(t_{0}) & \leq & \sum_{i=0}^{m}{m \choose i}q^{m-i}p^{i}\left(q\gamma_{d-i}(t_{0}')+q\gamma_{d-i}(t)+p\gamma_{d-i-1}(t_{0}')+p\gamma_{d-i-1}(t)\right)\\
 &  & +\sum_{i=1}^{m}q^{i}\gamma_{d}(t)+\sum_{j=1}^{m}p^{j}\gamma_{d-j}(t)\left(\sum_{i=0}^{m-j}{j+i \choose i}q^{i}\right).
\end{eqnarray*}
We can sum the terms based on whether the parameter in $\gamma$ is
$t_{0}'$ or $t$. Summing the first and the third terms above, we
can rewrite them as 
\begin{eqnarray*}
A_{1} & = & \sum_{i=0}^{m}{m \choose i}q^{m-i}p^{i}\left(q\gamma_{d-i}(t_{0}')+p\gamma_{d-i-1}(t_{0}')\right)\\
 & = & q^{m+1}\gamma_{d}(t_{0}')+p^{m+1}\gamma_{d-m-1}(t_{0}')+\sum_{i=1}^{m}\left({m \choose i}+{m \choose i-1}\right)q^{m-i+1}p^{i}\gamma_{d-i}(t_{0}')\\
 & = & \sum_{i=0}^{m+1}{m+1 \choose i}q^{m+1-i}p^{i}\gamma_{d-i}(t_{0}').
\end{eqnarray*}
And summing the remaining terms, we get, 
\begin{eqnarray*}
A_{2} & = & q^{m+1}\gamma_{d}(t)+\sum_{i=1}^{m}{m \choose i}q^{m-i+1}p^{i}\gamma_{d-i}(t)+\sum_{i=0}^{m}{m \choose i}q^{m-i}p^{i+1}\gamma_{d-i-1}(t)\\
 &  & +\sum_{i=1}^{m}q^{i}\gamma_{d}(t)+\sum_{j=1}^{m}p^{j}\gamma_{d-j}(t)\left(\sum_{i=0}^{m-j}{j+i \choose i}q^{i}\right)\\
 & = & \sum_{i=1}^{m+1}q^{i}\gamma_{d}(t)+p^{m+1}\gamma_{d-m-1}(t)+\sum_{j=1}^{m}p^{j}\gamma_{d-j}(t)\left({m+1 \choose j}q^{m-j+1}+\sum_{i=0}^{m-j}{j+i \choose i}q^{i}\right)\\
 & = & \sum_{i=1}^{m+1}q^{i}\gamma_{d}(t)+\sum_{j=1}^{m+1}p^{j}\gamma_{d-j}(t)\left(\sum_{i=0}^{m+1-j}{j+i \choose i}q^{i}\right).
\end{eqnarray*}
Taking $A_{1}+A_{2}$ gives the inequality for the $(m+1)$'th iteration,
and proves the claim. 
\end{proof}

\subsection{Upper bound on $\gamma$}

Setting $t_{0}=t$ and $m=t$ in equation \ref{eq:gam_m_rec} and
using \ref{eq:gammad00}, we get, 
\begin{eqnarray*}
\frac{1-2q+q^{t+1}}{1-q}\gamma_{d}(t) & \leq & \sum_{j=1}^{t}p^{j}\gamma_{d-j}(t)\left(\sum_{i=0}^{t-j}{j+i \choose i}q^{i}\right).
\end{eqnarray*}
Using the induction hypothesis, for $\mu=\kappa2^{t}$, we have that
$\gamma_{d-c}(t)\leq(\mu p)^{d-c}$ for all $p,c,d,t$ where $p\leq\frac{1}{\mu}$.
Note that $\gamma$ is a probability and since $p\leq\frac{1}{\mu},$
it is also valid when $c>d$. \footnote{In fact, whenever $d<t$, we can indeed get much fewer terms in the
summation, and get a better bound, although asymptotically it does
not make a difference. } Then we have,
\begin{eqnarray}
\frac{1-2q+q^{t+1}}{1-q}\gamma_{d}(t) & \leq & (\mu p)^{d}\left(\sum_{j=1}^{t}\sum_{i=0}^{t-j}\left(\frac{1}{\mu}\right)^{j}{j+i \choose i}q^{i}\right).\label{eq:gam_bef_solving}
\end{eqnarray}
To get that $\gamma_{d}(t)\leq(\mu p)^{d}$, we only need to show
the following lemma. 
\begin{lem}
For $t\geq1$ and $\mu=4\cdot2^{t}$, 
\[
\left(\sum_{j=1}^{t}\sum_{i=0}^{t-j}\left(\frac{1}{\mu}\right)^{j}{j+i \choose i}q^{i}\right)\left(\frac{1-q}{1-2q+q^{t+1}}\right)\leq1.
\]
\end{lem}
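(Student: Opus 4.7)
The plan is to compute $S := \sum_{j=1}^{t}\sum_{i=0}^{t-j}(1/\mu)^{j}\binom{j+i}{i}q^{i}$ in closed form via a change of variables, and then reduce the desired inequality to a simple one-variable inequality that admits an elementary case analysis.

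First, I would swap to the index $k = j+i$. The constraints $j \geq 1$, $i \geq 0$, $j + i \leq t$ become $1 \leq k \leq t$ and $1 \leq j \leq k$, and using $\binom{j+i}{i} = \binom{k}{j}$ this rewrites the sum as
\[
S \;=\; \sum_{k=1}^{t}\sum_{j=1}^{k} \binom{k}{j}\left(\frac{1}{\mu}\right)^{j} q^{k-j}
\;=\; \sum_{k=1}^{t}\Bigl[\bigl(q+\tfrac{1}{\mu}\bigr)^{k} - q^{k}\Bigr],
\]
where the last equality is the binomial theorem applied to $(q+1/\mu)^k$, with the $j=0$ term $q^k$ subtracted off. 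Setting $\alpha := q + \tfrac{1}{\mu}$ and summing the two geometric series,
\[
S \;=\; \frac{\alpha - \alpha^{t+1}}{1-\alpha} \;-\; \frac{q - q^{t+1}}{1-q}.
\]
(One checks $\alpha < 1$ since $q \leq 1/2$ and $1/\mu = 2^{-(t+2)} \leq 1/4$, so both denominators are positive.)

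Next, I would use the identity $1 - 2q + q^{t+1} = (1-q) - (q - q^{t+1})$. Multiplying $S$ by $(1-q)/(1-2q+q^{t+1})$ and subtracting $1$, the desired inequality $S \cdot (1-q)/(1-2q+q^{t+1}) \leq 1$ is equivalent after a short algebraic manipulation to
\[
\frac{(\alpha - \alpha^{t+1})(1-q)}{1-\alpha} \;\leq\; 1-q,
\]
which, since $1-q > 0$ and $1-\alpha > 0$, reduces to $\alpha - \alpha^{t+1} \leq 1 - \alpha$, i.e.\
\[
2\alpha \;\leq\; 1 + \alpha^{t+1}.
\]

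Finally, I would finish with a two-case analysis on $\alpha = q + 2^{-(t+2)}$, using only $q \in [0,1/2]$:
\begin{itemize}
\item If $\alpha \leq 1/2$, then $2\alpha \leq 1 \leq 1 + \alpha^{t+1}$ trivially.
\item If $\alpha > 1/2$, then $\alpha \leq 1/2 + 2^{-(t+2)}$ forces $2\alpha - 1 \leq 2^{-(t+1)}$, while $\alpha > 1/2$ gives $\alpha^{t+1} \geq (1/2)^{t+1} = 2^{-(t+1)}$, so $2\alpha - 1 \leq \alpha^{t+1}$.
\end{itemize}
The only conceptual obstacle is seeing the initial change of variables that collapses the double sum to a difference of two geometric series; once that is in hand, the remaining steps are algebraic and the final case analysis is where the choice $\mu = 4 \cdot 2^{t}$ (so that $1/\mu = 2^{-(t+2)}$) is used in precisely the right way to make the critical case $q \to 1/2$ go through.
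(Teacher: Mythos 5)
Your proposal is correct. The initial collapse of the double sum (reindexing by $k=j+i$ and applying the binomial theorem to get $\sum_{k=1}^{t}(\alpha^{k}-q^{k})$ with $\alpha=q+1/\mu$) is exactly what the paper does. Where you diverge is in what comes next: the paper defines $U = S\cdot\frac{1-q}{1-2q+q^{t+1}}$ and spends about a page showing $\frac{dU}{dp}<0$ by differentiating the two factors separately, so that the maximum is attained at $p=0$ (i.e., $q=1/2$), and only then verifies the endpoint condition $2r-r^{t+1}\leq1$. You instead observe that $1-2q+q^{t+1}=(1-q)-(q-q^{t+1})$, so the $-\frac{q-q^{t+1}}{1-q}$ piece of $S$ cancels exactly against part of the right-hand side, and the whole inequality is \emph{pointwise equivalent} to $2\alpha\leq1+\alpha^{t+1}$ for every admissible $q$ -- which is the same polynomial condition the paper checks, but now needed on the whole interval $\alpha\in[2^{-(t+2)},\,\tfrac12+2^{-(t+2)}]$ rather than only at the endpoint. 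Your two-case analysis ($\alpha\leq1/2$ trivial; $\alpha>1/2$ giving $2\alpha-1\leq2^{-(t+1)}\leq\alpha^{t+1}$) handles this correctly, and I verified the algebraic equivalence step ($1-q>0$ and $1-\alpha>0$ justify the divisions). The net effect is that your route eliminates the calculus/monotonicity argument entirely, which is a genuine simplification; the paper's monotonicity step buys nothing here since the inequality factors cleanly. The trade-off is negligible: you must check the reduced inequality on an interval instead of at one point, but the check is equally elementary.
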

\begin{proof}
The summations in the first term stated above can be made simpler
by adding diagonally, i.e., for values for which $j+i=l$. Note that
$1\leq j+i\leq t$. Thus, setting $i=l-j$, and using the fact that
${j+i \choose i}={j+i \choose j}$, 
\begin{eqnarray}
\sum_{j=1}^{t}\sum_{i=0}^{t-j}\left(\frac{1}{\mu}\right)^{j}{j+i \choose i}q^{i} & = & \sum_{j=1}^{t}\sum_{l=j}^{t}\left(\frac{1}{\mu}\right)^{j}{l \choose j}q^{l-j}\nonumber \\
 & = & \sum_{l=1}^{t}\sum_{j=1}^{l}\left(\frac{1}{\mu}\right)^{j}{l \choose j}q^{l-j}\nonumber \\
 & = & \sum_{l=1}^{t}\left(\frac{1}{\mu}+q\right)^{l}-\sum_{l=1}^{t}q^{l}\label{eq:mid_sum_1-1}\\
 & = & r\frac{1-r^{t}}{1-r}-q\frac{1-q^{t}}{1-q}\nonumber 
\end{eqnarray}
where 
\begin{equation}
r=q+\frac{1}{\mu}.\label{eq:r_eg}
\end{equation}
Let 
\begin{eqnarray}
U & = & \left(r\frac{1-r^{t}}{1-r}-q\frac{1-q^{t}}{1-q}\right)\frac{1-q}{1-2q+q^{t+1}}.\label{eq:U_eq}
\end{eqnarray}
We first show that for $t\geq1$, $\frac{dU}{dp}<0.$ From equation
\ref{eq:mid_sum_1-1}, let 
\[
A=\sum_{l=1}^{t}\left(\frac{1}{\mu}+q\right)^{l}-\sum_{l=1}^{t}q^{l}
\]
and 
\[
B=\frac{1-q}{1-2q+q^{t+1}}=\frac{1+p}{2p+\frac{(1-p)^{t+1}}{2^{t}}}=\frac{C}{D}.
\]
Since $\frac{1}{\mu}+q>q$, we have that $A>0$. Since $0\leq p\leq1$,
we have that $B>0$ and $D>0$.

\[
\frac{dA}{dp}=\frac{1}{2}\left(\sum_{l=1}^{t}l\left(q^{l-1}-\left(\frac{1}{\mu}+q\right)^{l-1}\right)\right)<0
\]
since each of the terms inside the summation is strictly less than
0. Further, 
\begin{eqnarray*}
\frac{dB}{dp} & = & \frac{1}{D^{2}}\left(2p+\frac{(1-p)^{t+1}}{2^{t}}-(1+p)\left(2-\frac{t+1}{2^{t}}(1-p)^{t}\right)\right)\\
 & = & \frac{1}{D^{2}}\left(2p+\frac{(1-p)^{t+1}}{2^{t}}-2-2p+\frac{t+1}{2^{t}}(1-p)^{t}(1+p)\right)\\
 & = & \frac{1}{D^{2}}\left(\frac{(1-p)^{t+1}}{2^{t}}+\frac{t+1}{2^{t}}(1-p)^{t-1}(1-p^{2})-2\right)
\end{eqnarray*}
where in the last line, we used the fact that $t\geq1$. Note that
since the terms involving $p$ are all decreasing in $p$, they are
maximized for $p=0$. Thus, in the last line
\begin{eqnarray*}
\frac{dB}{dp} & \leq & \frac{1}{D^{2}}\left(\frac{t+2}{2^{t}}-2\right)<0,
\end{eqnarray*}
where the last inequality used the fact that $t\geq1$. Thus, we have
that 
\[
\frac{dU}{dp}=A\frac{dB}{dp}+B\frac{dA}{dp}<0.
\]
Since $U$ is a decreasing function of $p$, setting $p=0$, and $\mu=\kappa2^{t}$
in equation \ref{eq:r_eg}, we get, 
\[
r=\frac{1}{2}\left(1+\frac{2}{\kappa2^{t}}\right),
\]
and substituting $p=0$ in \ref{eq:U_eq}, we require 

\[
U=2^{t}\left(r\frac{1-r^{t}}{1-r}-1+2^{-t}\right)\leq1
\]
or

\[
2r-r^{t+1}\leq1
\]
or 
\[
\frac{4}{\kappa}\leq\left(1+\frac{2}{\kappa2^{t}}\right)^{t+1}
\]
which is implied for all $t\geq1$ by setting $\kappa=4$, and we
get the required bound. 
\end{proof}
We have thus shown that $\gamma_{d}(t)\leq(4p2^{t})^{d}$, and Theorem
\ref{thm:upper_bound}, which we reproduce here for convenience. 
\begin{thm}
For any boolean function $f$ that has a $t$-clipped decision tree,
for a random $p$-restriction $\rho$, 
\[
\Pr_{\rho}[\text{DT}_{\text{depth}}(f|_{\rho})\geq d]\leq(4p2^{t})^{d}.
\]

\end{thm}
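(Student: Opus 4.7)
My plan is to induct on the depth $d$, conditioning on the value assigned to the root variable in the spirit of Hastad's original switching lemma proof. To make the induction work cleanly for clipped trees, I would introduce a refined quantity $\gamma_d(t_0)$ bounding $\Pr_\rho[\text{DT}_\text{depth}(T|_\rho) \geq d]$ over all $(t_0, t)$-clipped trees (root within distance $t_0$ of some leaf, every other vertex within $t$). The target is $\gamma_d(t) \leq (4p\cdot 2^t)^d$, with boundary condition $\gamma_d(0) = 0$ since a $(0,t)$-clipped tree is a leaf. Given such a tree with root variable $x_1$ and subtrees $T_0, T_1$, I would condition on $\rho(x_1)$: if $x_1$ is set to $0$ or $1$ (each with probability $q$), the remaining tree is $(t_0 - 1, t)$-clipped or $(t,t)$-clipped respectively; if $x_1$ stays free (probability $p$), Hastad's observation applies, namely if both subtrees admit decision trees of depth $< d-1$, they can be glued at $x_1$ to yield a tree of depth $< d$ for $T|_\rho$. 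This produces the one-step recurrence
\[
\gamma_d(t_0) \leq q\gamma_d(t_0-1) + q\gamma_d(t) + p\gamma_{d-1}(t_0-1) + p\gamma_{d-1}(t).
\]

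Only the $\gamma_d(t_0 - 1)$ term has a smaller first parameter, so I would iterate the recurrence on that term alone, $t$ times, to drive $t_0$ down to $0$. Bookkeeping the binomial coefficients that accumulate as $p$'s and $q$'s alternate, after $m$ iterations one should obtain an expression of the form $\sum_{i=0}^m \binom{m}{i} q^{m-i} p^i \gamma_{d-i}(t_0 - m)$ plus tail sums involving only $\gamma_{d-j}(t)$ weighted by $p^j q^i$ with appropriate binomial factors. Setting $m = t_0 = t$ annihilates the first sum via $\gamma_d(0) = 0$ and leaves a clean inequality relating $\gamma_d(t)$ to $\gamma_{d-j}(t)$ for $j \geq 1$, which is ripe for the induction hypothesis.

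The main obstacle, and where the constant $\mu = 4 \cdot 2^t$ must be chosen carefully, is the subsequent combinatorial verification. Applying the hypothesis $\gamma_{d-j}(t) \leq (\mu p)^{d-j}$ and factoring out $(\mu p)^d$ reduces the claim to showing that a coefficient of the shape
\[
\left(\sum_{j=1}^t \sum_{i=0}^{t-j} \mu^{-j} \binom{j+i}{i} q^i\right) \cdot \frac{1-q}{1 - 2q + q^{t+1}}
\]
is at most $1$ on the full range of $p$. I would reorganize the double sum along the diagonal $l = j + i$, turning it into the geometric difference $\sum_{l=1}^t (q+1/\mu)^l - \sum_{l=1}^t q^l$. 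Then, after arguing by separate termwise differentiation that each of the two factors is a decreasing function of $p$ on $[0,1]$, it suffices to check the inequality at $p = 0$. At $p = 0$ the expression collapses to a one-parameter condition in $t$, which I expect to reduce to an elementary inequality verified by $\mu = 4 \cdot 2^t$ for all $t \geq 1$. I anticipate that the monotonicity step is where subtlety appears: the numerator grows while the denominator shrinks, so the bound $2-(t+2)/2^t$ from Lemma \ref{claim:basic_ineqs} will be needed to control the derivative of the second factor.
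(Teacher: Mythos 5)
Your proposal follows the paper's own proof essentially step for step: the same $(t_0,t)$-clipped refinement $\gamma_d(t_0)$ with $\gamma_d(0)=0$, the same one-step recurrence from conditioning on $\rho(x_1)$, the same $m=t_0=t$ iteration to kill the first sum, and the same diagonal resummation plus monotonicity-in-$p$ argument reducing the coefficient check to $p=0$ with $\mu=4\cdot 2^t$. The approach is correct and is the paper's approach.
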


\subsection{\label{sub:Obs_uppbound}Observations from the upper bound}

We make a few observations from the upper bound proof that will help
us create structures for which we could prove a lower bound. In the
recurrence \ref{eq:prob_events_for_gamma} of total probability, since
we used the union bound in our proof of Theorem \ref{thm:upper_bound},
we neglected the term 
\[
\Pr_{\rho'}[E_{T_{0},d-1}\bigcap E_{T_{1},d-1}].
\]
How worse can this term be? First we make the following observation:
Without loss of generality, let 
\[
\phi=x_{1}x_{2}\ldots x_{t}\vee\phi'
\]
be a $t$-DNF where $\phi'$ is also a $t$-DNF. If we write $\phi$
as a $t$-clipped decision tree, the variables $x_{1}$ to $x_{t}$
will be connected by 0-edges and end in a leaf. Further, the 1-edges
out of each of the variables $x_{i}$ will be connected to a $t$-clipped
decision $T_{i}$. But note that each of the $T_{i}$'s is a decision
tree for $\phi'$. However, each $T_{i}$ could have a different value
of $\text{DT}_{\text{depth}}(T_{i})$, due to the different variables
to which they are connected. More specifically, for instance, the
tree $T_{2}$ does not \emph{know} the value of $x_{3}$, but $T_{4}$
\emph{knows} that that $x_{3}=0$. However, since we will consider
the \emph{maximum }of $\text{DT}_{\text{depth}}(T_{i})$ over all
$i$, if we take $t$ to be much smaller than $n$, it would be safe
to assume that for every $i$, the value of $\text{DT}_{\text{depth}}(T_{i})$
is approximately the same. As a result, we would have 
\[
\Pr_{\rho'}[E_{T_{0},d-1}|E_{T_{1},d-1}]\approx1.
\]
Thus, the loss due to the union bound in this case is significant,
and if we want the union bound to be tight, we would essentially want
that 
\[
\Pr_{\rho'}[E_{T_{0},d-1}\bigcap E_{T_{1},d-1}]\approx0.
\]
However, this would require creating intricate (anti)-correlations
between variables, and it is unclear how that can be done. Instead,
it is possible to get, 
\[
\Pr_{\rho'}[E_{T_{0},d-1}\bigcap E_{T_{1},d-1}]\approx\Pr_{\rho'}[E_{T_{0},d-1}]\Pr_{\rho'}[E_{T_{1},d-1}],
\]
by ensuring that there are no correlations between the two events,
which would be true, at least if the variables in the two subtrees
$T_{0}$ and $T_{1}$ are different. If we had such a case, the one-step
recurrence \ref{eq:gam_k_rec} for $\gamma$ would become
\[
\gamma_{d}(t_{0})\approx q\gamma_{d}(t_{0}-1)+q\gamma_{d}(t)+p\gamma_{d-1}(t_{0}-1)+p\gamma_{d-1}(t)-p\gamma_{d-1}(t_{0}-1)\gamma_{d-1}(t).
\]
Further, since a $(t_{0}-1,t)$ clipped tree is also a $(t,t)$ clipped
tree for $t_{0}\leq t$, we have that 
\[
\gamma_{d-1}(t_{0}-1)\leq\gamma_{d-1}(t),
\]
and we can write 
\[
\gamma_{d}(t_{0})\apprge q\gamma_{d}(t_{0}-1)+q\gamma_{d}(t)+p\gamma_{d-1}(t_{0}-1)+p\gamma_{d-1}(t)-p\gamma_{d-1}^{2}(t).
\]
But note that we expect the variables $\gamma_{d-1}$ to exponentially
decrease with increase in $d$ in the final bound, and thus 
\[
\gamma_{d-1}^{2}(t)\ll\gamma_{d-1}(t),
\]
and we should have 
\[
\gamma_{d}(t_{0})\approx q\gamma_{d}(t_{0}-1)+q\gamma_{d}(t)+p\gamma_{d-1}(t_{0}-1)+p\gamma_{d-1}(t).
\]
This intuition turns out to be correct, as we show in section \ref{sec:Lower-Bounds},
for functions that we define next. 
\begin{rem}
The reason why we do not get a bound as strong as that for $t$-DNFs,
i.e. $O(pt)^{d}$, in Theorem \ref{thm:upper_bound} is that $t$-DNFs
have \emph{additional }structure which $t$-clipped decision trees
do not. Note that in setting up the recursion in equation \ref{eq:gam_k_rec},
we used a union bound for \emph{every} variable that was assigned
$*$ by $\rho$. However, in the case of $t$-DNFs, if we try to write
the same recursive expressions for every clause, if any variable in
the clause is assigned the value $0$ by $\rho$, the clause evaluates
to 0 \emph{even if} some other variables of the clause are assigned
$*$ by $\rho$, and a union bound would \emph{not }be necessary in
that case. Only if every variable in the clause is assigned $1$ or
$*$ by $\rho$ do we need to use a union bound for the variables
in a clause, a luxury that we do not have in the case of $t$-clipped
decision trees. 
\end{rem}

\section{\label{sec:Tree-tribes}Tree tribes}

\subsection{Tree tribes}

We now formally define tree tribes and their variants. A specific
variant, $t$-clipped xor tree tribe, will be the function that will
help to achieve the bounds stated in Theorem \ref{thm:lower_bound}. 
\begin{defn}
\emph{(Tree tribe)} A boolean function $f$ is called a tree tribe,
denoted by $\Xi$, if there is a decision tree $T$ deciding $f$,
such that all the variables at all the vertices of $T$ are distinct,
and from every vertex of $T$, there is a path to a leaf labelled
0 and a path to a leaf labelled 1. 
\end{defn}
An example for a tree tribe is the $OR$ function. 
\begin{rem}
In the definition of $\Xi$, the condition of having a path to a 0-leaf
and a path to a 1-leaf only ensures that no vertex is redundant, since
if all the paths from some vertex $x$ go to a 0-leaf (or 1-leaf),
$x$ can be replaced by a 0-leaf (or 1-leaf).
\end{rem}
Given definition of a tree tribe, it is possible to derive a variety
of different tree tribes by imposing additional structure, and we
define the specific structure that we'll need. 
\begin{defn}
\emph{(Complete clipped decision trees)} Denote a complete $t$-clipped
decision tree $T$ on $r$ levels by $W_{t}(r)$, and define it recursively
as follows: $W_{t}(0)$ is a leaf. $W_{t}(r)$ consists of vertices
$\{v_{1},\ldots,v_{t}\}$, a leaf denoted by $v_{t+1}$, and edges
$e_{i,0}$ and $e_{i,1}$ for $i\in\{1,\ldots,t\}$. The vertices
$v_{1}$ to $v_{t}$ will be said to belong to layer or level 1. Each
edge $e_{i,0}$ is labelled 0 and it will be called a $0$-edge, and
it connects $v_{i}$ and $v_{i+1}$. Each edge $e_{i,1}$ is labelled
1 and it will be called a $1$-edge, and it connects $v_{i}$ to the
root of $W_{t}(r-1)$.
\begin{defn}
\emph{(Clipped xor tree tribe)} A boolean function $f$ is called
a $t$-clipped xor tree tribe on $r$ levels, denoted by $\Xi_{t}(r)$,
if $T(f)$ is a tree tribe, and can be expressed as a complete $t$-clipped
decision tree on $r$ levels, in which the leaves are labelled by
the parity of the edges on the path from the root to the leaf.
\begin{defn}
\emph{(Level of variable $x$ in $\Xi_{t}(r)$)} We can define the
level formally, but the informal definition is cleaner. We define
the level of some variable $x$ in $\Xi_{t}(r)$ by the \emph{recursive}
step at which it was added to $\Xi_{t}(r)$. The variables at level
1 are $x_{1}$ to $x_{t}$, each of which is connected to a copy of
$\neg\Xi_{t}(r-1)$ on distinct variables. The first $t$ variables
in each of the $t$ independent copies of$\neg\Xi_{t}(r-1)$, a total
of $t^{2}$ variables, are at level 2, and each of them is connected
to a copy of $\Xi_{t}(r-2)$, and so on. 
\end{defn}
\end{defn}
\end{defn}
\noindent An example for $\Xi_{2}(3)$ is given in figure \ref{fig1}. The functions $\Xi_{t}(r)$ exhibit many nice properties, which we
discuss next.

\begin{figure}
	\centering
	\def\svgwidth{300bp}
	\def\svgscale{1}
	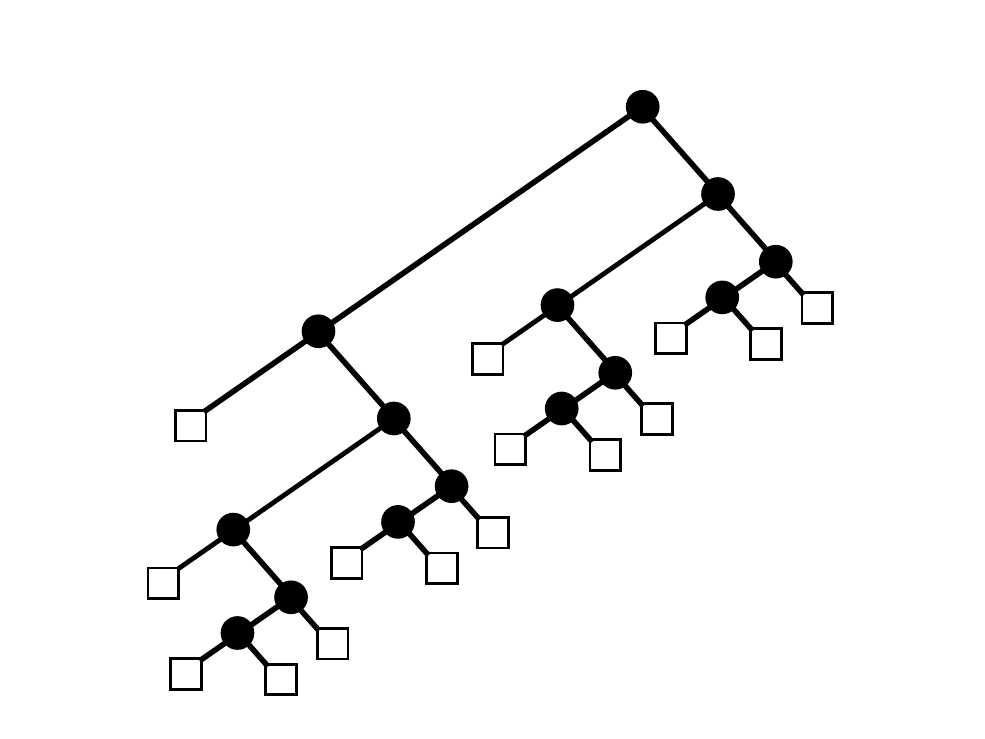
	\caption{A 2-clipped xor tree tribe on 3 levels, or $\Xi_2(3)$. The variables at each of the vertices are distinct, and the leaves are labelled by the parity of the edges along the root to leaf path. Note that there are 2 vertices at level 1, 4 vertices at level 2, and 8 vertices at level 3.}
	\label{fig1}
\end{figure}

\subsection{Properties of tree tribes}

We discuss some properties and observations about $\Xi$ and $\Xi_{t}(r)$.
Varying the parameters $t$ and $r$ lead to many interesting properties.

\subsubsection*{Basic properties}

The following properties of $\Xi_{t}(r)$ follow directly from the
definitions. 
\begin{enumerate}
\item The number of variables $n$ in $\Xi_{t}(r)$ is 
\[
n=\sum_{i=1}^{r}t^{i}=t\frac{t^{r}-1}{t-1}\approx t^{r}
\]
and if $t=1$, then $n=r$. 
\item All the leaves at the same level in $\Xi_{t}(r)$ have the same value,
i.e., the odd levels have leaves evaluating to 0 and the even levels
have leaves evaluating to 1. 
\item $\Xi_{t}(r)$ can be written as a DNF of width $O\left(\frac{t}{\log t}\log n\right)$. 
\item $\Xi_{t}(r)$ preserves its substructure. If $\neg\Xi_{t}(r)$ is
the negation of the function $\Xi_{t}(r)$, then all the 1-edges out
of the vertices in the 1st level of $\Xi_{t}(r)$ are connected to
$\neg\Xi_{t}(r-1)$. This is exactly the manner in which it behaves
like parity or xor of variables. 
\item \label{enu:t_delayed_parity}Every vertex in $\Xi_{t}(r)$ is at a
distance of at most $t+1$ from a leaf that evaluates to 0, \emph{and}
at a distance of at most $t+1$ from a leaf that evaluates to 1. 
\end{enumerate}
Amongst all these, we feel that observation (5)  is the main reason
that $\Xi_{t}(r)$ turns out to be resilient towards random $p$-restrictions.
Due to (5), $\Xi_{t}(r)$ behaves like a \emph{delayed }parity, or
more specifically a $t$-\emph{delayed} parity, since the value of
a vertex can affect the function after $(t+1)$ steps, and cause it
to evaluate to any of $0$ or $1$. However, it is different from
parity since a variable can affect the value of the function \emph{only
}if an input string takes a path from the root to the leaf through
that variable in the decision tree. This fact puts a limit on the
depth to which we can prove Theorem \ref{thm:lower_bound}.

\subsubsection*{Bias}

The bias of $\Xi_{t}(r)$, as is immediate from equations \ref{eq:exact_const_p0_odd}
and \ref{eq:exact_const_p0_even} that will be derived later, is given
by
\[
\text{bias}(\Xi_{t}(r))=\left|\frac{1-\left(\dfrac{1}{2^{t}}-1\right)^{r+1}}{2-\dfrac{1}{2^{t}}}-\frac{1}{2}\right|.
\]
For the case $r\gg t$, the bias is about $\frac{1}{2-2^{-t}}-\frac{1}{2}$.
If $t=1$, the bias is about $\frac{1}{6}$. As $t$ becomes large
enough, the bias goes to 0. For the case $r\ll t$, the bias is about
$\frac{1}{2}$. Note that if $r=1$, then $\Xi_{t}(1)$ is simply
the $OR$ function on $t$ variables. If $r\approx t$, then the bias
is about $\frac{1}{2}-\frac{t}{2^{t}}$. If $r\approx2^{t}$, then
the bias is about a small constant less than $\frac{1}{2}$.

\subsubsection*{Correlation with parity }

As observed in \ref{enu:t_delayed_parity}, the function $\Xi_{t}(r)$
behaves like a $t$-delayed parity in a certain sense. However, if
$t\geq2$, it has correlation exactly $\frac{1}{2}$ with parity,
i.e., same as that of a constant function. To see this, let $\overline{x}$
be a bit string on which $T(\overline{x})$ evaluates to 0, taking
some path from the root $x_{1}$ to some leaf $l$. Since all the
variables of $\Xi_{t}(r)$ are different and $t\geq2$, there is some
variable $y$ that is not on the path from $x_{1}$ to $l$, and flipping
its value cannot change the value of $T(\overline{x})$, but flipping
$y$ will always change the parity of the variables. Thus the correlation
of $\Xi_{t}(r)$ with parity is exactly $\frac{1}{2}$, when $t\geq2$.
When $t=1$, $\Xi_{1}(r)$ has exactly $r$ variables, and the correlation
with parity is about $\frac{1}{2}+\frac{1}{2^{r}}$ which again goes
to $\frac{1}{2}$ for large enough $r$.

\subsubsection*{Influence of variables}

An interesting thing about a tree tribe $\Xi$ in general is the influence
of variables. Let a variable $y$ be at a distance $d$ from the root.
Then note that $y$ has no influence over the value of the function
for any string $\overline{x}$ that takes a path in $\Xi$ from the
root to a leaf not passing through $y$. Thus, 
\[
\text{Inf}(y)\leq2^{-d}.
\]
As a result, the variables in a tree tribe have exponentially decreasing
influences with distance from the root.\footnote{The name \emph{tree dictators} seems more suitable due to this property,
but we call them tree tribes for succinctness.}

\subsubsection*{Fourier coefficients}

The fourier coefficients of $\Xi_{t}(r)$ exhibit interesting combinatorics.
\\

\noindent \textbf{The case $t=1$:} For $t=1$, we compute the fourier
coefficients recursively. Let the variables of $\Xi_{1}(n)$ be $X=(x_{1},\ldots,x_{n})$
taken naturally from the root $x_{1}$. We claim the following.
\begin{lem}
\label{lem:four_coeff_t_1}Let $S\subseteq X$ be a subset of $X=\{x_{1},\ldots,x_{n}\}$,
let the variables of $f=\Xi_{1}(n)$ be ordered naturally with $x_{1}$
as the root, and let $n$ be some odd number. Let $j\in\{1,\ldots,n\}$
be the largest index of a variable appearing in $S$. Then 
\[
\hat{f}_{S}=\frac{(-1)^{|S|+j}}{2^{n-1}}J_{n-j+1}
\]
where $J_{i}$ is the $i$'th Jacobsthal number, given by
\[
J_{i}=\frac{2^{i}-(-1)^{i}}{3}.
\]
\end{lem}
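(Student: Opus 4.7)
The plan is to induct on $n$ using the self-similar structure of $\Xi_1(n)$. Since the $0$-edge at the root of $W_1(n)$ leads immediately to a $0$-leaf while the $1$-edge enters a fresh copy of $W_1(n-1)$ on disjoint variables, the $\{0,1\}$-valued function satisfies $F_n(x_1,\ldots,x_n) = x_1\bigl(1 - F_{n-1}(x_2,\ldots,x_n)\bigr)$. Translating to the $\pm 1$ Fourier convention (with $0\mapsto 1$, $1\mapsto -1$ on both inputs and outputs), and writing $g_m$ for $\Xi_1(m)$ in $\pm 1$ form, this becomes
\[
g_n(y_1,\ldots,y_n) \;=\; \tfrac{1}{2}(1+y_1) \,-\, \tfrac{1}{2}(1-y_1)\,g_{n-1}(y_2,\ldots,y_n),
\]
with base case $g_1(y_1) = y_1$, which matches the claim since $J_1 = 1$.

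Expanding the right-hand side and reading off Fourier coefficients on each $S \subseteq [n]$ yields four cases, where $T-1$ denotes the downshift of indices in $T \subseteq \{2,\ldots,n\}$ into $[n-1]$: for nonempty $S \subseteq \{2,\ldots,n\}$, $\hat{g_n}_S = -\tfrac{1}{2}\hat{g_{n-1}}_{S-1}$; for $S = \{1\} \cup T$ with $T \neq \emptyset$, $\hat{g_n}_S = \tfrac{1}{2}\hat{g_{n-1}}_{T-1}$; and the exceptional cases $\hat{g_n}_\emptyset = \tfrac{1}{2}(1 - c_{n-1})$ and $\hat{g_n}_{\{1\}} = \tfrac{1}{2}(1 + c_{n-1})$, where $c_m := \hat{g_m}_\emptyset$.

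For every $S \neq \emptyset, \{1\}$ the induction closes immediately: the downshift sends the largest index $j$ to $j-1$, $|S|$ is unchanged, and the $\pm\tfrac{1}{2}$ prefactor absorbs both the sign flip $(-1)^{|S|+j-1}\to(-1)^{|S|+j}$ and the denominator shift $2^{n-2}\to 2^{n-1}$, giving exactly $(-1)^{|S|+j}\,J_{n-j+1}/2^{n-1}$. A small bookkeeping check confirms that in the $1 \in S, |S| \geq 2$ branch the extra $(-1)$ coming from increasing $|S|$ by one cancels the absence of the minus sign in front of $\tfrac{1}{2}$.

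The one case that does not close by direct plug-in is $S = \{1\}$, which requires a closed form for $c_m$. The affine recurrence $c_n = \tfrac{1}{2}(1 - c_{n-1})$ with $c_1 = 0$ unfolds to $c_m = (2^{m-1} + (-1)^m)/(3\cdot 2^{m-1})$---and the Jacobsthal numbers appear in the statement precisely because this is the closed form of that recurrence. Substituting into $\tfrac{1}{2}(1 + c_{n-1})$ and using $J_n = (2^n - (-1)^n)/3$ gives $J_n/2^{n-1}$, which is exactly the formula's prediction at $S = \{1\}$ (where $j = |S| = 1$). The main obstacle is purely bookkeeping: tracking the interaction of the index downshift with the sign $(-1)^{|S|+j}$, and recognizing that the $S = \{1\}$ case must be closed by the separate auxiliary identity for $c_{n-1}$ rather than by the inductive hypothesis itself.
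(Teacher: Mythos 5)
Your proof is correct, and it organizes the computation differently from the paper. The paper computes $\hat{f}_S$ in one shot: it expands $f$ as a sum over all root-to-leaf paths of products $\frac{1\pm x_i}{2}$, observes that only the leaves below $x_j$ contribute to $\chi_S$, and sums the resulting alternating geometric series $\frac{1}{2}+\frac{1}{4}\sum_{i}(-\tfrac{1}{2})^i$ directly into the Jacobsthal closed form. You instead exploit the one-step self-similarity $g_n=\tfrac{1}{2}(1+y_1)-\tfrac{1}{2}(1-y_1)g_{n-1}$ (which is the correct $\pm1$ translation of $F_n=x_1(1-F_{n-1})$, consistent with the paper's convention $0\mapsto 1$, $1\mapsto -1$) and induct on $n$. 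Your case analysis is right: for $S\neq\emptyset,\{1\}$ the prefactor $\mp\tfrac{1}{2}$ exactly accounts for the index downshift in $(-1)^{|S|+j}$ and the denominator, while the Jacobsthal index $n-j+1$ is invariant under $(n,j)\mapsto(n-1,j-1)$; and the only place the geometric series actually has to be summed is the bias recurrence $c_n=\tfrac{1}{2}(1-c_{n-1})$, $c_1=0$, which feeds the $S=\{1\}$ case. This isolates cleanly \emph{where} the Jacobsthal numbers come from (the subtree bias), whereas the paper's direct summation is shorter but mixes the prefix sign bookkeeping with the series. One remark: although the lemma is stated for odd $n$, your induction necessarily passes through even $n-1$, so you are really proving the formula for all $n\geq 1$ and all nonempty $S$ (together with the auxiliary formula for $\hat{g}_{n,\emptyset}$); that strengthening is true — the parity restriction in the statement is not needed — but it is worth stating explicitly as the inductive invariant.
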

\begin{proof}
We assume the variables and leaves take values in $1$ or $-1$ without
loss of generality, corresponding to $0$ to $1$. If we write the
multilinear expansion of $f$, taking products of the terms $\frac{1\pm x_{i}}{2}$
from every root to leaf path, the term $\chi_{S}$ will appear only
for the leaves that appear after the vertex $x_{j}$. Further, since
$j$ is the largest index in $S$, the terms after $j$ will contribute
only the value $\pm\frac{1}{2}$. Thus we get that, 
\begin{eqnarray*}
\hat{f}_{S} & = & \frac{(-1)^{|S|-1}}{2^{j-1}}(-1)^{j+1}\left(\frac{1}{2}+\frac{1}{4}\sum_{i=0}^{n-j-2}\left(-\frac{1}{2}\right)^{i}\right)\\
 & = & \frac{(-1)^{|S|-1}}{2^{j-1}}(-1)^{j+1}\left(\frac{1}{2}+\sum_{i=2}^{n-j}\left(-\frac{1}{2}\right)^{i}\right)\\
 & = & \frac{(-1)^{|S|-1}}{2^{j-1}}(-1)^{j+1}\frac{2}{3}\left(1-\left(-\frac{1}{2}\right)^{n-j+1}\right)\\
 & = & \frac{(-1)^{|S|+j}}{2^{n-1}}J_{n-j+1}
\end{eqnarray*}
as required. Further properties of these numbers can be found at \cite{jacobsthalseq}.
\end{proof}
\noindent \textbf{The case for any general $t$.} 
\begin{lem}
\label{lem:four_coeff}Let $S$ be a subset of variables in $f=\Xi_{t}(r)$.
Let $y\in S$ be the variable which is farthest from the root of $\Xi_{t}(r)$.
Let $y$ be at a a distance $d$ from the root, at a distance $k$
from the closest leaf, and at level $l$ of $\Xi_{t}(r)$. Let $\alpha=2^{-t}$
and $\beta=2^{-t}-1$.

If the variables in $S$ do not lie on a path in $\Xi_{t}(r)$, then
$\hat{f}_{S}=0$. If all the variables of $S$ lie on a path, then
\[
\hat{f}_{S}=\pm\frac{1}{2^{k+d-1}}\left(\frac{1-\beta^{r-l+1}}{2-\alpha}\right).
\]
\end{lem}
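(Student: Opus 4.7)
The plan is to extract Fourier coefficients directly from the path expansion of $f$ induced by its decision tree. For a root-to-leaf path $\pi$ with internal vertices $v_1,\ldots,v_{m_\pi}$ carrying variables $x_{v_i}$ and outgoing $\pm 1$ edge labels $z_{v_i}^\pi$, the indicator that the input follows $\pi$ is $\prod_i (1+z_{v_i}^\pi x_{v_i})/2$, and since leaves of $\Xi_t(r)$ carry parity-of-edges labels, the leaf value in $\pm 1$ encoding is $\ell^\pi = \prod_i z_{v_i}^\pi$. Expanding $f=\sum_\pi \ell^\pi\prod_i (1+z_{v_i}^\pi x_{v_i})/2$ and picking off the coefficient of $\chi_S$ by orthogonality of characters gives
\[
\hat f_S \;=\; \sum_{\pi:\,V(\pi)\supseteq S} \frac{1}{2^{m_\pi}} \prod_{v\in V(\pi)\setminus S} z_v^\pi,
\]
the $\ell^\pi$ factor having cancelled the $S$-portion of the product. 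If no root-to-leaf path contains every variable of $S$, this sum is empty and $\hat f_S = 0$, proving the first assertion.

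For the second, let $y\in S$ be the deepest variable of $S$, at edge-distance $d$ from the root. Every path $\pi\supseteq S$ shares the unique root-to-$y$ segment and then extends into the subtree $T_y$ rooted at $y$. I will split the product over $V(\pi)\setminus S$ into the ancestor-of-$y$ portion (a fixed sign $\sigma\in\{\pm 1\}$, since both the path and which ancestors lie in $S$ are fixed) and the portion over internal vertices strictly below $y$; factoring out the uniform depth contribution $2^{-(d+1)}$ yields
\[
\hat f_S \;=\; \frac{\sigma}{2^{d+1}}\,B(T_y),
\qquad
B(T_y) \;=\; \sum_{\pi}\,\frac{1}{2^{q_\pi}}\,\prod_{v\in V(\pi)\setminus\{y\}} z_v^\pi,
\]
where $\pi$ now ranges over paths from $y$ to leaves in $T_y$ and $q_\pi = m_\pi - 1$ counts internal vertices strictly below $y$. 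A second application of the path expansion, now inside $T_y$ with leaves carrying the parity-from-$y$ labeling, identifies $B(T_y) = 2\,\hat g_{\{y\}}$ (essentially because multiplying $\ell^\pi$ by $z_y^\pi$ removes $y$ from the parity product), reducing the problem to the degree-one Fourier coefficient of $y$ in the function $g$ that $T_y$ computes.

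Finally, I set up a two-parameter recursion for the mean. The subtree $T_y$ is a chain of length $k$ starting at $y$ and ending in a leaf, where each chain vertex hangs a copy of $W_t(s-1)$ on its 1-edge, with $s=r-l+1$. Let $h_{k,s}$ denote this structure with parity-from-$y$ leaf labels, $\mu_{k,s}=\mathbb E[h_{k,s}]$ in $\pm 1$, $\mu_s := \mu_{t,s}$, and base case $\mu_0 := 1$. Conditioning on the edge at the root gives $\mu_{k,s}=\tfrac12\mu_{k-1,s}-\tfrac12\mu_{s-1}$, the minus sign arising because each 1-edge flips the parity of every leaf in the attached $W_t(s-1)$. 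This telescopes to $\mu_{k,s}+\mu_{s-1}=(1+\mu_{s-1})/2^k$; specializing $k=t$ gives the one-step recursion $\mu_s=\alpha+\beta\mu_{s-1}$, whose closed form is $\mu_s=(\alpha-2\beta^{s+1})/(2-\alpha)$, so $1+\mu_{s-1}=2(1-\beta^s)/(2-\alpha)$. Combining, $\hat g_{\{y\}}=\tfrac12(\mu_{k-1,s}+\mu_{s-1})=(1+\mu_{s-1})/2^k=(1-\beta^s)/(2^{k-1}(2-\alpha))$, and plugging into $\hat f_S=(\sigma/2^{d+1})\cdot 2\hat g_{\{y\}}$ yields the claimed $\pm\,2^{-(d+k-1)}(1-\beta^{r-l+1})/(2-\alpha)$. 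The chief care point is sign bookkeeping (the ancestor sign $\sigma$, the overall parity absorbed into $T_y$'s leaf labels when passing from $f_{T_y}$ to $h_{k,s}$, and the minus from 1-edge parity flips), all of which collapse into the single $\pm$ of the final answer, so only magnitudes need precise tracking.
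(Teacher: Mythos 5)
Your proposal is correct and follows essentially the same route as the paper: the vanishing of $\hat f_S$ off paths via the multilinear (path) expansion, reduction to the deepest variable $y$ and the subtree hanging from it, and the identical recurrence $\mu_{k,s}=\tfrac12\mu_{k-1,s}-\tfrac12\mu_{s-1}$ with base case $\mu_{0,s}=1$, solved and recombined to give $\pm 2^{-(k+d-1)}(1-\beta^{r-l+1})/(2-\alpha)$. The only cosmetic difference is that you package the subtree contribution as the degree-one coefficient $\hat g_{\{y\}}$ and telescope the $k$-recursion explicitly, whereas the paper substitutes the closed form for $\mu_{k}(r)$ directly; the computations agree.
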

\begin{proof}
We make two observations about $\hat{f}_{S}$. 
\begin{enumerate}
\item If the variables in $S$ do not all belong to the same path, i.e.,
for any two variables $x$ and $y$, if their least common ancestor
in $\Xi_{t}(r)$ is not one of $x$ or $y$, then $\hat{f}_{S}=0$.
This follows because if we write the multilinear expansion of $f$
for every root to leaf path, the term $\chi_{S}$ would never appear,
since all the variables in $\Xi_{t}(r)$ are different and there is
no path that contains both $x$ and $y$. 
\item The second observation is similar to the observation made in Lemma
\ref{lem:four_coeff_t_1}. Given observation (1), assume that all
the variables of $S$ lie on some path. If we write the multilinear
expansion for $f$, any path not passing through $y$ will not contain
the term $\frac{1\pm y}{2}$, and contribute to $0$ to the coefficient
of $\chi_{S}$. Further, since $y$ is the most distant variable from
the root, the variables after $y$ will contribute only the constant
values $\pm\frac{1}{2}$, and $y$ is connected to the function $\Xi_{t}(r-l+1)$
where the root has distance $k$ to a leaf out of its 0 edge, and
to the function $\neg\Xi_{t}(r-l)$ out of its 0 edge. 
\end{enumerate}
Let the parameter $t$ be fixed. Let $T$ be exactly $\Xi_{t}(r)$,
except that the root is at a distance $k$ from the leaf, and all
other vertices are at a distance $t$ from some leaf. If there is
probability $\frac{1}{2}$ of choosing any edge out of any vertex
of $T$, let $\mu_{r}(k)$ be the difference between the probability
of reaching an edge labelled 1 and an edge labelled -1. Then the recurrence
for $\mu$ is as follows:

\begin{eqnarray*}
\mu_{k}(r) & = & \frac{1}{2}\mu_{k-1}(r)-\frac{1}{2}\mu_{t}(r-1)\\
\mu_{0}(r) & = & 1.
\end{eqnarray*}
Solving the recurrence, we get,
\begin{eqnarray*}
\mu_{k}(r) & = & \frac{1}{2^{k}}-\left(\sum_{i=1}^{k}\frac{1}{2^{i}}\right)\mu_{t}(r-1)
\end{eqnarray*}
and evaluating for $k=t$, we get, 
\begin{eqnarray*}
\mu_{t}(r) & = & \frac{1}{2^{t}}-\left(\sum_{i=1}^{t}\frac{1}{2^{i}}\right)\mu_{t}(r-1)\\
 & = & \alpha+\beta\mu_{t}(r-1)\\
 & = & \alpha\sum_{i=0}^{r-1}\beta^{i}+\beta^{r}\\
 & = & \alpha\frac{1-\beta^{r}}{1-\beta}+\beta^{r}.
\end{eqnarray*}
Substituting the value for $\mu_{k}(r)$, we get 

\[
\mu_{k}(r)=\frac{1}{2^{k}}+\left(\frac{1}{2^{k}}-1\right)\left(\alpha\frac{1-\beta^{r-1}}{1-\beta}+\beta^{r-1}\right).
\]
Let us compute $\hat{f}_{S}$. Let $S'=S\backslash y$. In the multilinear
expansion of $f$, summing the terms containing coefficient of $\chi_{S}$
for all the paths passing through $y$, we get,
\[
\pm\frac{\chi_{S'}}{2^{d}}\left(\frac{1+y}{2}\mu_{k-1}(r-l+1)-\frac{1-y}{2}\mu_{t}(r-l)\right)=\pm\frac{\chi_{S}}{2^{d+1}}\left(\mu_{k-1}(r-l+1)+\mu_{t}(r-l)\right)
\]
and thus
\begin{eqnarray*}
\hat{f}_{S} & = & \pm\frac{1}{2^{d+1}}\left(\frac{1}{2^{k-1}}+\left(\frac{1}{2^{k-1}}-1\right)\mu_{t}(r-l)+\mu_{t}(r-l)\right)\\
 & = & \pm\frac{1}{2^{k+d}}\left(1+\mu_{t}(r-l)\right)
\end{eqnarray*}
and replacing the value of $\mu_{t}(r-l)$, it becomes
\[
\hat{f}_{S}=\pm\frac{1}{2^{k+d-1}}\left(\frac{1-\beta^{r-l+1}}{2-\alpha}\right).
\]

\end{proof}
Note that the magnitude of the fourier coefficients in Lemma \ref{lem:four_coeff}
depends only on the parameters $k$, $l$, and $d$ for any set $S$.
In particular, the parameters for the farthest variable from the root
in $S$ completely determines the \emph{magnitude} of $\hat{f}_{S}$,
and other variables of $S$ only affect the \emph{sign} of $\hat{f}_{S}$.
Further, if $r\gg t$, then for paths or sets $S$ in which the farthest
vertex is close to the root, $\beta\approx0$, and 
\[
|\hat{f}_{S}|\approx\frac{1}{2^{k+d-1}}\left(\frac{1}{2-2^{-t}}\right),
\]
implying that all the fourier coefficients are exponentially decreasing
in $k$ and $d$. If we consider $r$ and $t$ as parameters, we get
many different sequences of numbers based on the manner in which the
parameters - $r$, $t$, $k$, $l$, $d$ - are set, and these sequences
might demonstrate many interesting properties of the function, but
we do not explore that here.

\section{\label{sec:Lower-Bounds}Lower bound}

We now prove Theorem \ref{thm:lower_bound} for $t$-clipped xor tree
tribes or $\Xi_{t}(r)$, by induction on $d$. However, we cannot
use $d=0$ as the base case, and we discuss this when we do the inductive
step. The base case will be $d=1$, which we solve next. We do not
optimize any constants in this section.

\subsection{The case for $d=1$}

The base case $d=1$ turns out to be the most interesting. Here, we
want to evaluate the probability that the function $\Xi_{t}(r)$ has
depth more than 1 after being hit with a random $p$-restriction.
More specifically, we want to show the following.

\[
\Pr_{\rho}[\text{DT}_{\text{depth}}(\Xi_{t}(r)|_{\rho})\geq1]\geq c_{0}p2^{t}.
\]
Note that this cannot be true for small $r$, since if the function
has fewer than $\sim2^{t}$ variables, then the number of variables
assigned $*$ will be low on average, and the event $\text{DT}_{\text{depth}}(\Xi_{t}(r)|_{\rho})\geq1$
would be extremely unlikely. Thus, we would show the following.
\begin{lem}
\label{lem:d_1_lowerbound}For some universal constants $c_{0}$ and
$c_{p}$, for $r\in\Omega(2^{t})$ and $0\leq p\leq c_{p}2^{-t}$,
if $\rho$ is a random $p$-restriction, then
\[
\Pr_{\rho}[\text{DT}_{\text{depth}}(\Xi_{t}(r)|_{\rho})\geq1]\geq c_{0}p2^{t}.
\]

\end{lem}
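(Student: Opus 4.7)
The plan is to show the bound via a recursion on $r$. Define $a_r := \Pr_\rho[\Xi_t(r)|_\rho \equiv 0]$ and $b_r := \Pr_\rho[\Xi_t(r)|_\rho \equiv 1]$, so the target quantity is $c_r = 1 - a_r - b_r$. By conditioning on the assignments of the level-1 spine $v_1,\ldots,v_t$, and using the two structural facts that (i) the 0-edges form a chain $v_1 \to \cdots \to v_t$ ending in a leaf of value $0$, and (ii) each 1-edge leads into a fresh, disjoint copy of $\neg\Xi_t(r-1)$ (so the events that different subtrees become constants are independent), I obtain a closed-form one-step recurrence $a_r = u^t + (q b_{r-1}) \cdot \tfrac{1 - u^t}{1 - u}$ with $u = q + p b_{r-1}$, together with the analogous formula for $b_r$ without the leading $u^t$ term (since the terminal leaf on the spine has value $0$, not $1$). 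Independence of the subtrees across different spine positions, a consequence of the tree-tribe property that all variables are distinct, is exactly what produces the clean geometric factor.

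I would next perform an asymptotic analysis of this recurrence. At $p = 0$ it simplifies to $a_r = 2^{-t} + (1 - 2^{-t}) b_{r-1}$ and $b_r = (1 - 2^{-t}) a_{r-1}$, with unique fixed point $(a^\ast, b^\ast) = \bigl(\tfrac{1}{2-2^{-t}},\, \tfrac{1-2^{-t}}{2-2^{-t}}\bigr)$ (matching the bias formula derived in Section~\ref{sec:Tree-tribes}) and contraction rate $1 - 2^{-t}$, so for $r \geq C \cdot 2^t$ the $p=0$ values lie within $O(e^{-C})$ of the fixed point. Differentiating in $p$ at $p = 0$, the dominant contribution to $\partial_p c_r$ comes from perturbing the factor $\tfrac{1 - u^t}{1 - u}$ near $u = 1/2$; a short calculation produces a main term of size $\tfrac{1 - 2^{-t}}{2^{-t}} = 2^t - 1$, while the other contributions are lower-order in $t$. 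Combined with the convergence at $p = 0$, this yields $\partial_p c_r\bigr|_{p=0} \geq c_1 \cdot 2^t$ for $r \geq C\cdot 2^t$ and some absolute constant $c_1 > 0$.

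Finally I would dispose of the higher-order corrections in $p$. Since $a_r, b_r$ are polynomials in $p$ (by the recurrence), the remainder $R(p)$ in the Taylor expansion $a_r = a_r(0) + p \,\partial_p a_r(0) + R(p)$ can be bounded inductively as $|R(p)| \leq O(p^2 \cdot 4^t)$, and analogously for $b_r$; for $p \leq c_p \cdot 2^{-t}$ this is at most $O(c_p^2)$, which is strictly smaller than $\tfrac{c_1}{2} p \cdot 2^t$ once $c_p$ is small enough. Combining everything gives $c_r \geq \tfrac{c_1}{2} \cdot p \cdot 2^t$ for $r \geq C \cdot 2^t$ and $p \leq c_p 2^{-t}$, proving the lemma with $c_0 := c_1/2$. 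The main obstacle is this last step: obtaining a clean bound on the higher Taylor coefficients without losing a large factor in $t$. I expect the simplest route is to work directly with the exact closed form for $a_r$ (and its one-step iterate) and to estimate the relevant expressions as functions of $p$ in one stroke, rather than bounding each Taylor coefficient separately.
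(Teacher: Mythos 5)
Your setup is the paper's own: your $a_r,b_r,c_r$ are exactly $P_0(r),P_1(r),P_*(r)$, your one-step recurrence is identical to the paper's Lemma \ref{lem:p0r_and_p1r} (derived the same way, via independence of the disjoint subtrees hanging off the spine), and your three-part plan (constant term $=1$ at $p=0$, linear coefficient of order $-2^t$, quadratic tail $O(4^t)$ killed by $p\leq c_p2^{-t}$) is precisely Steps 2--4 of the paper. So the route is the same; the question is whether the two hard steps are actually carried out, and there is a genuine problem with the first of them.

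Your explanation of where the factor $2^t$ in $\partial_p c_r|_{p=0}$ comes from is wrong. You attribute it to ``perturbing the factor $\frac{1-u^t}{1-u}$ near $u=1/2$,'' but that factor is the inner geometric sum over the $t$ spine positions: at $u=1/2$ it equals $2(1-2^{-t})$ and its derivative in $u$ is $\sum_{k=1}^{t-1}k\,2^{-(k-1)}\leq 4$, both $O(1)$; since $\partial_p u|_{p=0}$ is also $O(1)$, a single level contributes only $O(1)$ to the derivative. The factor $2^t$ actually arises from the \emph{outer} recursion over levels: writing $g(r)=[p](P_0(r)+P_1(r))$, one gets $g(r)=\beta\,g(r-1)+h_r$ with $\beta=1-2^{-t}$ and $|h_r|=O(1)$, and iterating accumulates $\sum_i\beta^i h_{r-i}\approx h/(1-\beta)=\Theta(2^t)$ --- which is exactly why the lemma needs $r=\Omega(2^t)$ and fails for small $r$ (your sketch uses $r=\Omega(2^t)$ only to reach the fixed point of the constant terms, which would not by itself explain the restriction). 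Moreover the sign is delicate: the per-level term $h_r=C_0+C_1u_{r-1}+C_2u_{r-1}^2$ has $C_1\approx-4$, $C_2\approx 4$, $u\approx\tfrac12$, so it is a near-cancellation that happens to leave $h_r\approx-1$; the paper's Lemma \ref{lem:lem_main_1} has to solve the recursion exactly (tracking the odd/even oscillation of $u_r$) to certify $g(r)\leq-\tfrac16 2^t$. Without this computation your claim $\partial_p c_r|_{p=0}\geq c_1 2^t$ is unsupported. The second gap --- the $O(4^t)$ bound on the quadratic tail --- you explicitly acknowledge as the main obstacle, and your instinct to bound the tail $[\uparrow p^2]$ of the whole closed form rather than individual coefficients is in fact what the paper does in Lemma \ref{lem:main_higherpowers}; but note that the argument there leans on two structural facts you would need to rediscover: one may substitute $U=1$ after factoring out the dangerous powers of $p$, and the final spine term is $(q+pU)^t$ with no $qU$ prefactor precisely because the tree is $t$-clipped, which is what keeps the induction from blowing up.
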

Note that in Lemma \ref{lem:d_1_lowerbound}, we require that the
\emph{smallest }depth decision tree for $\Xi_{t}(r)|_{\rho}$ has
depth greater than 1 with good probability, which means that \emph{any}
decision tree representing $\Xi_{t}(r)|_{\rho}$ must query at least
one variable. This will be made possible by a simple observation. 
\begin{lem}
\label{claim:obs_path_to_0_1}$\text{DT}_{\text{depth}}(\Xi_{t}(r)|_{\rho})\geq1$
if and only there is a path in $\Xi_{t}(r)|_{\rho}$ from the root
to a leaf that evaluates to 0 and to a leaf that evaluates to 1.\end{lem}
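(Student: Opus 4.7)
The plan is to reduce the equivalence to the elementary fact that a boolean function has decision-tree depth $0$ if and only if it is constant, and then to identify constantness of $\Xi_t(r)|_\rho$ with the absence of a path to one of the two leaf-values in the restricted tree.

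First I would make precise what $\Xi_t(r)|_\rho$ looks like as a tree. Let $T$ be the $t$-clipped xor tree tribe decision tree for $\Xi_t(r)$. Since all variables of $\Xi_t(r)$ are distinct, each variable appears along any root-to-leaf path of $T$ at most once, so applying $\rho$ is unambiguous: at every vertex whose variable is set to $0$ or $1$ by $\rho$, delete the subtree hanging off the opposite edge; at every vertex whose variable is assigned $*$, keep both subtrees. The resulting pruned tree, which I will call $T|_\rho$, is a decision tree on the $*$-variables, and it computes $\Xi_t(r)|_\rho$. Its root-to-leaf paths are precisely the root-to-leaf paths of $T$ that are consistent with $\rho$, and each such path inherits its label from $T$.

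Next I would handle the $(\Leftarrow)$ direction. Suppose $T|_\rho$ contains a root-to-leaf path ending at a leaf labelled $0$ and another ending at a leaf labelled $1$. By extending each of these paths to a full assignment of the $*$-variables arbitrarily, I get two inputs on which $\Xi_t(r)|_\rho$ evaluates to $0$ and to $1$ respectively. Hence $\Xi_t(r)|_\rho$ is non-constant. But any depth-$0$ decision tree is a single leaf and computes a constant function, so $\text{DT}_{\text{depth}}(\Xi_t(r)|_\rho) \geq 1$.

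For $(\Rightarrow)$ I would argue the contrapositive. If every leaf of $T|_\rho$ has the same label $b \in \{0,1\}$, then every input to $\Xi_t(r)|_\rho$ traces a path that ends at some $b$-leaf, so $\Xi_t(r)|_\rho$ is the constant function $b$, which has $\text{DT}_{\text{depth}} = 0$. Thus the absence of either a $0$-leaf-path or a $1$-leaf-path in $T|_\rho$ forces $\text{DT}_{\text{depth}}(\Xi_t(r)|_\rho) = 0 < 1$. Combining the two directions gives the stated equivalence. There is no substantive obstacle; the only point requiring care is the first step, verifying that $T|_\rho$ genuinely represents $\Xi_t(r)|_\rho$, which is immediate from the distinctness of variables in $\Xi_t(r)$.
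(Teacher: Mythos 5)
Your proof is correct and follows essentially the same route as the paper: both directions reduce to the observation that depth $0$ is equivalent to constantness, with the forward direction handled by contraposition/contradiction and the backward direction by exhibiting a $0$-input and a $1$-input from the two leaf-paths. The extra care you take in defining the pruned tree $T|_\rho$ is a harmless elaboration of what the paper leaves implicit.
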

\begin{proof}
Let $\text{DT}_{\text{depth}}(\Xi_{t}(r)|_{\rho})\geq1$. If all the
paths from the root of $\Xi_{t}(r)|_{\rho}$ were to a 0-leaf (or
1-leaf), then the function is the 0 function (or 1 function), which
has $\text{DT}_{\text{depth}}(\Xi_{t}(r)|_{\rho})=0$, a contradiction.
Now assume that there is a path from the root of $\Xi_{t}(r)|_{\rho}$
to a 0-leaf and a path to a 1-leaf. This means that in the truth-table
of $\Xi_{t}(r)|_{\rho}$, there is a string for which the function
is 0 and a string for which the function is 1, implying that it is
not equivalently the 0 or 1 function, and $\text{DT}_{\text{depth}}(\Xi_{t}(r)|_{\rho})>0$
or $\text{DT}_{\text{depth}}(\Xi_{t}(r)|_{\rho})\geq1$. 
\end{proof}
We define the probabilities that $\Xi_{t}(r)$ evaluates equivalently
to 0, or 1, or has $\text{DT}_{\text{depth}}\geq1$ respectively. 
\begin{defn}
\label{def:p0,p1,pstar}Let $\Xi_{t}(r$) be a $t$-clipped xor tribe
on $r$ levels and $\rho$ a random $p$-restriction. Let $P_{0}(r)$
be the probability that $\Xi_{t}(r$) evaluates to the 0 function
after being hit by a random $p$-restriction $\rho$, $P_{1}(r)$
the probability that $\Xi_{t}(r$) evaluates to the 1 function after
being hit by $\rho$, and $P_{*}(r)$ the probability that $\Xi_{t}(r)$
has $\text{DT}_{\text{depth}}\geq1$ after being hit by a random restriction.
More formally, letting the parameters $p$ and $t$ be implicit,
\begin{eqnarray}
P_{0}(r) & = & \Pr_{\rho}[f(\Xi_{t}(r)|_{\rho})\equiv0],\nonumber \\
P_{1}(r) & = & \Pr_{\rho}[f(\Xi_{t}(r)|_{\rho})\equiv1],\nonumber \\
P_{*}(r) & = & \Pr_{\rho}[\text{DT}_{\text{depth}}(\Xi_{t}(r)|_{\rho})\geq1]\nonumber \\
 & = & 1-P_{0}(r)-P_{1}(r).\label{eq:p_star_r}
\end{eqnarray}
Given Lemma \ref{claim:obs_path_to_0_1} and definition \ref{def:p0,p1,pstar},
our proof strategy for Lemma \ref{lem:d_1_lowerbound} is as follows:
\end{defn}
\noindent \textbf{Step 1:} We write the exact expressions for $P_{0}(r)$
and $P_{1}(r)$ as polynomials in $p$, in Lemmas \ref{lem:p01_and_p11}
and \ref{lem:p0r_and_p1r}. The expressions are obtained by counting
arguments and using recursion.\\

\noindent \textbf{Step 2:} In the next step, we reason about the constant
coefficients in $P_{0}(r)$ and $P_{1}(r)$. We derive the expressions
for $[1]P_{0}(r)$ and $[1]P_{1}(r)$ in Lemmas \ref{lem:coeff_consts_rec}
and \ref{lem:coeff_const_exact}, and show the following in Lemma
\ref{lem:sum_constcoeff_1}:
\[
[1]P_{0}(r)+[1]P(r)=1.
\]

\noindent \textbf{Step 3:} In the third step, we first compute the
recursive expressions for $[p]P_{0}(r)$ and $[p]P_{1}(r)$ in Lemma
\ref{lem:coeff_p}, and in Lemmas \ref{lem:pcoeff_upperbound} and
\ref{lem:lem_main_1}, show that for $r\in\Omega(2^{t}),$
\[
-4\cdot2^{t}\leq[p](P_{0}(r)+P_{1}(r))\leq-\frac{1}{6}2^{t}.
\]

\noindent \textbf{Step 4:} We show that higher powers of $p$ do not
substantially affect the coefficient of $p$. In Lemma \ref{lem:main_higherpowers},
we show that for all $r$, for $0\leq p\leq c_{p}2^{-t}$ where $c_{p}=\frac{1}{420}$,
\[
G_{2}(P_{0}(r)+P_{1}(r))\leq30\cdot2^{2t}.
\]

\noindent \textbf{Step 5:} Using the conclusions of Steps 2,3,4, we
infer Lemma \ref{lem:d_1_lowerbound}.\\

We start by writing the recurrence relations for $P_{0}(r)$ and $P_{1}(r)$.

\subsubsection{Recurrence relations}

Note that we set independently set a variable to $*$ with probability
$p$, and to $0$ or $1$ with probability $q=\frac{1}{2}(1-p).$ 
\begin{lem}
\label{lem:p01_and_p11} For the base case, $P_{0}(1)$ and $P_{1}(1)$
are
\begin{eqnarray}
P_{0}(1) & = & q^{t},\label{eq:base_p0}\\
P_{0}(1) & = & 1-(1-q)^{t}.\label{eq:base_p1}
\end{eqnarray}
\end{lem}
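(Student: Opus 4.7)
The plan is to unwind the definition of $\Xi_t(1)$. By the recursive construction of $W_t(1)$, the root $v_1$ has a 0-edge to $v_2$, which has a 0-edge to $v_3$, and so on up to $v_t$, whose 0-edge ends at the leaf $v_{t+1}$; from each $v_i$ there is also a 1-edge to a leaf. Labelling leaves by the parity of edge labels on the root-to-leaf path, the leaf $v_{t+1}$ (reached by $t$ consecutive 0-edges) is labelled $0$, while every leaf reached via a single 1-edge is labelled $1$. Hence $\Xi_t(1)$ is exactly $\mathrm{OR}(x_1,\ldots,x_t)$ where $x_1,\ldots,x_t$ are the $t$ distinct variables at level $1$.

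With this identification, both quantities become elementary probability computations over the independent assignments made by $\rho$, where each variable receives $0$ with probability $q$, $1$ with probability $q$, and $*$ with probability $p$. For $P_0(1)$, observe that $\mathrm{OR}(x_1,\ldots,x_t)|_\rho \equiv 0$ if and only if every variable is fixed to $0$ (if any variable were assigned $1$, the function would be identically $1$; if any were left as $*$, the function would still depend on that variable). By independence, this happens with probability $q^t$, giving \eqref{eq:base_p0}.

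For $P_1(1)$, note that $\mathrm{OR}(x_1,\ldots,x_t)|_\rho \equiv 1$ if and only if at least one $x_i$ is assigned the value $1$ by $\rho$, since a single such assignment forces $\mathrm{OR}$ to be $1$ on the entire subcube, regardless of what happens to the remaining variables. Conversely, if no variable is assigned $1$, then either all are set to $0$ (giving the constant $0$ function) or at least one remains $*$ (giving a function that still depends on an input). The probability that a given variable is \emph{not} set to $1$ is $1-q$, so by independence the probability that none is set to $1$ is $(1-q)^t$, and therefore $P_1(1) = 1 - (1-q)^t$, matching \eqref{eq:base_p1}.

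There is no real obstacle here; the entire content is recognizing that $\Xi_t(1)=\mathrm{OR}(x_1,\ldots,x_t)$ and then applying independence of the coordinates of $\rho$. Any subtlety, if present, would be in confirming that ``evaluates to the 0/1 function'' in Definition~\ref{def:p0,p1,pstar} is interpreted as ``is identically constant after restriction,'' which is exactly the reading used above.
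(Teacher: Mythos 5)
Your proposal is correct, and the identification $\Xi_t(1)=\mathrm{OR}(x_1,\ldots,x_t)$ together with the computation of $P_0(1)$ is exactly what the paper does. The only difference is in how $P_1(1)$ is evaluated: you compute the complementary event directly (the function is identically $1$ iff some variable is assigned $1$, so $P_1(1)=1-(1-q)^t$ by independence), whereas the paper sets up a recursion in $t$, namely $P_{1,t}(1)=q+(p+q)P_{1,t-1}(1)$ with $P_{1,1}(1)=q$, and telescopes it to the same closed form using $p+q=1-q$. Your direct argument is slightly cleaner and its converse direction (if no variable is set to $1$, the restricted function is either constantly $0$ or still non-constant, hence not identically $1$) is exactly the point the paper is illustrating via Lemma~\ref{claim:obs_path_to_0_1}. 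Both routes are valid; nothing is missing.
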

\begin{proof}
Let the tree tribe be $\Xi_{t}(1)$. Let the variables queried be
$\{x_{1},\ldots,x_{t}\}$. Note that for this case of $r=1$, $\Xi_{t}(1)$
behaves just like the OR function on the variables. Thus, after being
hit by a random restriction, the tree evaluates to the 0 function
only if all the variables are assigned 0 by $\rho$. If there is any
variable assigned $*$ or $1$ by $\rho$, the tree cannot evaluate
to $0$. Thus,
\[
P_{0}(1)=q^{t}.
\]
Computing $P_{1}(1)$ is simple and will illustrate observation \ref{claim:obs_path_to_0_1}.
In this case, it is possible that some variables are assigned $*$
by $\rho$, however, the function evaluates to 1. As an example, consider
the restriction such that $\rho(x_{1})=*$ and $\rho(x_{2})=1$. Here,
although $x_{1}$ is assigned $*$, the function evaluates to 1, specifically
because the root does not have a path to a 0-leaf and a path to a
1-leaf. To compute $P_{1}(1)$ succinctly, let $P_{1,t}(1)$ be the
probability that $\Xi_{t}(1)$ evaluates to 1, i.e.,
\[
P_{1,t}(1)=\Pr_{\rho}[f(\Xi_{t}(1)|_{\rho})\equiv1].
\]
We have made the parameter $t$ explicit to ease the computation of
$P_{1}(1)$. Note that $P_{1,t}(1)=P_{1}(1)$ and $P_{1,1}(1)=q$.
The subtree connected to the 0-edge of $x_{1}$ is exactly $\Xi_{t-1}(1)$,
and along the 1-edge of $x_{1}$ is the leaf labelled 1. Then, writing
the recursion for $P_{1,t}(1)$, we get,
\begin{eqnarray*}
P_{1,t}(1) & = & q+(p+q)P_{1,t-1}(1)\\
 & = & q\sum_{i=0}^{t-2}(p+q)^{i}+(p+q)^{t-1}P_{1,1}(1)\\
 & = & 1-(1-q)^{t}
\end{eqnarray*}
And thus, 
\[
P_{0}(1)=1-(1-q)^{t},
\]
as required.\end{proof}
\begin{lem}
\label{claim:simple_claim_for_recurrence}Let the variables in the
first level of $\Xi_{t}(r)$ be $X=\{x_{1},\ldots,x_{t}\}$, and the
$t$ children out of each of their 1-edges be $Q_{1},\ldots,Q_{t}$
respectively. Let $Y_{0}\subseteq X$ and $Y_{*}\subseteq X$ be subsets
of variables assigned $0$ and $*$ respectively by $\rho$ where
$Y_{0}\cap Y_{*}=\emptyset$, such that the index of every variable
in $Y_{0}$ and $Y_{*}$ is less than $j$, and $\rho(x_{j})=1$.
Then $\Xi_{t}(r)|_{\rho}\equiv0$ if and only if $Q_{i}|_{\rho}\equiv0$
for all $x_{i}\in Y_{*}$ and $Q_{j}|_{\rho}\equiv0$. \end{lem}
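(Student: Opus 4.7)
The plan is to trace the evaluation paths in $\Xi_t(r)|_\rho$ along the level-1 chain $x_1, x_2, \ldots, x_t$ and then exploit the fact that, in a tree tribe, the subtrees $Q_1, \ldots, Q_t$ are on pairwise disjoint variable sets that are also disjoint from $\{x_1, \ldots, x_t\}$. This disjointness is crucial because it implies that $Q_1|_\rho, \ldots, Q_t|_\rho$ depend on pairwise disjoint sets of remaining variables, and so the conditions ``$Q_i|_\rho \equiv 0$'' may be asserted or violated independently of one another.

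To identify the reachable leaves of $\Xi_t(r)|_\rho$, I would walk down the level-1 chain starting at $x_1$. For every $i < j$ with $x_i \in Y_0$, the restriction forces the walk to proceed along the 0-edge to $x_{i+1}$; for every $i < j$ with $x_i \in Y_*$, the variable is still queried in the restricted tree and the walk either continues to $x_{i+1}$ (when $x_i = 0$) or enters $Q_i|_\rho$ (when $x_i = 1$); finally, when the walk reaches $x_j$, the assignment $\rho(x_j) = 1$ sends it into $Q_j|_\rho$. In particular, the variables $x_{j+1}, \ldots, x_t$ and the terminal leaf at the bottom of the level-1 chain are unreachable under any extension of $\rho$, so they play no role. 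Hence every reachable leaf of $\Xi_t(r)|_\rho$ lies inside either $Q_j|_\rho$ or some $Q_i|_\rho$ with $x_i \in Y_*$.

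The \emph{if} direction then follows immediately: if $Q_j|_\rho \equiv 0$ and $Q_i|_\rho \equiv 0$ for every $x_i \in Y_*$, every reachable leaf evaluates to $0$, so $\Xi_t(r)|_\rho \equiv 0$. For the \emph{only if} direction, suppose $\Xi_t(r)|_\rho \equiv 0$. For each $x_i \in Y_*$, I would consider inputs that set $x_i = 1$ while setting every other $x_{i'} \in Y_*$ to $0$; such inputs route the walk into $Q_i|_\rho$, and by the variable-disjointness property the resulting restricted function on the remaining variables is exactly $Q_i|_\rho$, which must therefore be identically $0$. Analogously, setting every $x_i \in Y_*$ to $0$ routes the walk through $x_j$ into $Q_j|_\rho$, forcing $Q_j|_\rho \equiv 0$.

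This is largely a bookkeeping argument and I do not expect a serious obstacle. The only care needed is to invoke the tree-tribe property (all variables distinct) to guarantee that each $Q_i|_\rho$ can be isolated as an independent subfunction, so that the global vanishing of $\Xi_t(r)|_\rho$ can be decomposed into the individual conditions stated in the lemma.
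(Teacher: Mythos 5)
Your proof is correct and follows essentially the same route as the paper: both arguments trace the reachable branches along the level-1 chain and use the disjointness of the variable sets of $Q_1,\ldots,Q_t$ from each other and from $X$ to decouple the subtrees. The only cosmetic difference is that for the ``only if'' direction you isolate each $Q_i|_{\rho}$ by further restricting the other starred chain variables, whereas the paper runs a short case analysis on whether each $Q_i|_{\rho}$ is constant (via its Lemma \ref{claim:obs_path_to_0_1}); both are sound.
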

\begin{proof}
For any $i<j$ where $i$ is the least index such that $x_{i}\in Y_{*}$,
if $Q_{i}|_{\rho}\not\equiv0$ and $Q_{i}|_{\rho}\not\equiv1$, then
$Q_{i}|_{\rho}$ requires a decision tree of depth at least 1, and
by observation \ref{claim:obs_path_to_0_1}, this means that the root
of $Q_{i}|_{\rho}$ has a path to both a 0-leaf and a 1-leaf. Since
the variables of $Q_{i}$ are all different from $X$ and thus $Y_{*}$,
this implies that $x_{1}$ has a path to some 0 and 1 leaves, and
thus $\Xi_{t}(r)|_{\rho}$ requires a decision tree of depth at least
1. Consider the case where each of $Q_{i}|_{\rho}\equiv0$ or $Q_{i}|_{\rho}\equiv1$
for all $x_{i}\in Y_{*}$. But if for some $x_{i}$, $Q_{i}|_{\rho}\equiv1$,
then the function $\Xi_{t}(r)|_{\rho}$ can evaluate to 1 when $x_{i}$
is queried by some decision tree, and $\Xi_{t}(r)|_{\rho}\not\equiv0$.
In case $Y_{*}=\emptyset$, the claim trivially follows. Exactly by
the same argument, it is required that $Q_{j}|_{\rho}\equiv0$ for
$\Xi_{t}(r)|_{\rho}\equiv0$.\end{proof}
\begin{lem}
\label{lem:p0r_and_p1r} Let $U=P_{1}(r-1)$ and $V=P_{0}(r-1)$.
Then, 
\begin{eqnarray}
P_{0}(r) & = & \sum_{k=0}^{t-1}qU\left(q+pU\right)^{k}+\left(q+pU\right)^{t}\label{eq:p0r}\\
 & = & qU\frac{1-(q+pU)^{t}}{1-(q+pU)}+(q+pU)^{t}\nonumber 
\end{eqnarray}
and
\begin{eqnarray}
P_{1}(r) & = & \sum_{k=0}^{t-1}qV\left(q+pV\right)^{k}\label{eq:p1r}\\
 & = & qV\frac{1-(q+pV)^{t}}{1-(q+pV)}.\nonumber 
\end{eqnarray}
\end{lem}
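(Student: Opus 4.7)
The plan is to condition on the restriction $\rho$ applied to the first-level variables $X = \{x_1,\ldots,x_t\}$ of $\Xi_t(r)$, invoking Lemma \ref{claim:simple_claim_for_recurrence} together with the fact that the subtrees $Q_1,\ldots,Q_t$ hanging off the $1$-edges are distinct copies of $\neg\Xi_t(r-1)$ on pairwise disjoint variables. Because $\rho$ acts independently on disjoint variable sets, the events $\{Q_i|_\rho \equiv 0\}$ are mutually independent, each of probability $\Pr[\Xi_t(r-1)|_\rho \equiv 1] = U$, and analogously $\{Q_i|_\rho \equiv 1\}$ are mutually independent, each of probability $V$.

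For $P_0(r)$ I would split on whether any first-level variable is assigned $1$ by $\rho$. In the ``some-$1$'' case, let $j \in \{1,\ldots,t\}$ be the smallest index with $\rho(x_j) = 1$, and let $k$ be the number of $*$-positions among $x_1,\ldots,x_{j-1}$ (the remaining $j-1-k$ being $0$s). The probability of such a pattern on the first $j$ variables is $\binom{j-1}{k}\, p^k q^{j-1-k}\cdot q$, and by Lemma \ref{claim:simple_claim_for_recurrence} the conditional probability that $\Xi_t(r)|_\rho \equiv 0$ is $U^{k+1}$, since the $k+1$ subtrees $\{Q_i : x_i \in Y_*\}\cup\{Q_j\}$ are independent copies each required to be $\equiv 0$. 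Summing over $k$ by the binomial theorem gives $qU(q+pU)^{j-1}$ per $j$, and summing over $j = 1,\ldots,t$ (equivalently $m = j-1 = 0,\ldots,t-1$) produces the first term. In the ``no-$1$'' case every $x_i \in \{0,*\}$; the analogous reasoning (the all-$0$ path terminates at the level-$1$ zero-leaf, and flipping any $*$-position to $1$ diverts the computation into the corresponding $Q_i$) shows that $\Xi_t(r)|_\rho \equiv 0$ iff $Q_i|_\rho \equiv 0$ for every $*$-position, contributing $\sum_{k=0}^{t} \binom{t}{k}\, q^{t-k}(pU)^k = (q+pU)^t$. Adding the two pieces and applying Lemma \ref{claim:basic_ineqs}(1) yields equation \ref{eq:p0r}.

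The derivation of $P_1(r)$ is parallel, with two small differences. First, the ``no-$1$'' case contributes nothing: the all-zero evaluation path always reaches the level-$1$ leaf labelled $0$, so $\Xi_t(r)|_\rho$ cannot be identically $1$ regardless of the $Q_i$'s. Second, because the target output is $1$ and $Q_i \equiv \neg\Xi_t(r-1)$, each probability factor is $V = P_0(r-1)$ rather than $U$; the analogue of Lemma \ref{claim:simple_claim_for_recurrence} for the event $\Xi_t(r)|_\rho \equiv 1$ is immediate from the same case-check (every evaluation path must land at a $1$-leaf, forcing $Q_j|_\rho \equiv 1$ and $Q_i|_\rho \equiv 1$ for each $x_i$ in the $*$-set). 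The identical binomial-plus-geometric collapse gives equation \ref{eq:p1r}.

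The main obstacle is bookkeeping rather than any clever algebra: I have to get right the asymmetry between $P_0$ and $P_1$ caused by the level-$1$ leaves being labelled $0$, and I have to invoke the mirrored form of Lemma \ref{claim:simple_claim_for_recurrence} for $P_1$. Once the case analysis is laid out cleanly, the remaining steps are a binomial expansion followed by the geometric-sum identity.
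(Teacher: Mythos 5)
Your proposal is correct and follows essentially the same route as the paper: partition on the least first-level index assigned $1$ by $\rho$ (plus the no-$1$ case), use the disjointness of the subtrees $Q_i$ for independence, apply Lemma \ref{claim:simple_claim_for_recurrence} (and its mirrored version for $P_1$), and collapse the binomial sum over the number of $*$-positions into $(q+pU)^{j-1}$. The asymmetry you flag---the no-$1$ case contributing $(q+pU)^t$ to $P_0(r)$ but nothing to $P_1(r)$---is exactly how the paper handles it.
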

\begin{proof}
Let the variables in the first level of $\Xi_{t}(r)$ be $X=\{x_{1},\ldots,x_{t}\}$,
and the $t$ children out of each of their 1-edges be $Q_{1},\ldots,Q_{t}$
respectively, where note that by construction, each of the functions
$Q_{i}=\neg\Xi_{t}(r-1)$. Thus, we have that for all $i\in\{1,\ldots,t\}$,
\begin{eqnarray}
\Pr[(Q_{i}|_{\rho})\equiv0] & = & P_{1}(r-1),\label{eq:temp_simple_eqs}\\
\Pr[(Q_{i}|_{\rho})\equiv1] & = & P_{0}(r-1).\nonumber 
\end{eqnarray}
Note that since the variables in each $Q_{i}$ are different, we have
complete independence between the probabilities that they evaluate
to 0 or 1, i.e., 
\[
\Pr_{\rho}[Q_{i_{1}}|_{\rho}\equiv a_{i_{1}},\ldots,Q_{i_{l}}|_{\rho}\equiv a_{i_{l}}]=\Pr_{\rho}[Q_{i_{1}}|_{\rho}\equiv a_{i_{1}}]\ldots\Pr_{\rho}[Q_{i_{l}}|_{\rho}\equiv a_{i_{l}}]
\]
where $a_{i_{j}}\in\{0,1\}$ for $j\in\{1,\ldots,l\}$. To compute
$P_{0}(r)$, we partition the events based on the least index $k\in\{1,\ldots,t\}$
such that $\rho(x_{k})=1$, and the case that no variable $x_{k}$
is assigned 1 by $\rho$. If $k$ is the least index such that $\rho(x_{k})=1$,
then note that for every $i\in\{0,\ldots,k-1\}$, there is a possibility
of having $i$ variables assigned $*$ by $\rho$ in $\{x_{1},\ldots,x_{k-1}\}$,
and the rest assigned 0. To compute $P_{0}(r)$, we would require
that for every $Q_{i}$ where $i\in\{1,\ldots,k-1\}$ and $\rho(x_{i})=*$,
$f(Q_{i})\equiv0$. Further, we would require that $f(Q_{k})\equiv0$.
Summing up the probabilities by partitioning events in this manner,
using \ref{eq:temp_simple_eqs} and claim \ref{claim:simple_claim_for_recurrence},
we get that,

\begin{eqnarray*}
P_{0}(r) & = & \sum_{k=0}^{t-1}qP_{1}(r-1)\sum_{i=0}^{k}{k \choose i}q^{k-i}p^{i}\left(P_{1}(r-1)\right)^{i}+\sum_{i=0}^{t}{t \choose i}q^{t-i}p^{i}\left(P_{1}(r-1)\right)^{i}\\
 & = & \sum_{k=0}^{t-1}qP_{1}(r-1)\left(q+pP_{1}(r-1)\right)^{k}+\left(q+pP_{1}(r-1)\right)^{t}\\
 & = & qU\frac{1-(q+pU)^{t}}{1-(q+pU)}+(q+pU)^{t}.
\end{eqnarray*}

To compute $P_{1}(r)$, note that in this case, at least one of the
variables in $x_{1},\ldots,x_{t}$ must be assigned $1$ by $\rho$,
since if all the variables are assigned either 0 or $*$ by $\rho$,
then by definition of $\Xi_{t}(r)$, there is always a possibility
of the remaining function to evaluate to 0 if all the variables that
are assigned $*$ by $\rho$ take the value 0. All other details remain
exactly the same, and we get,

\begin{eqnarray*}
P_{1}(r) & = & \sum_{k=0}^{t-1}qP_{0}(r-1)\sum_{i=0}^{k}{k \choose i}q^{k-i}p^{i}\left(P_{0}(r-1)\right)^{i}\\
 & = & \sum_{k=0}^{t-1}qP_{0}(r-1)\left(q+pP_{0}(r-1)\right)^{k}\\
 & = & qV\frac{1-(q+pV)^{t}}{1-(q+pV)}.
\end{eqnarray*}

\end{proof}
All of $P_{0}(r)$, $P_{1}(r)$ and $P_{*}(r)$ in equations \ref{eq:p0r},
\ref{eq:p1r} and \ref{eq:p_star_r} are univariate polynomials in
$p$. We reason about the constant coefficients in $P_{0}(r)$ and
$P_{1}(r)$, i.e. $[1]P_{0}(r)$ and $[1]P_{1}(r)$. The straightforward
proofs are given in appendix \ref{sec:app_low_bound_proofs}.

\subsubsection{The constant coefficients, $[1]P_{0}(r)$ and $[1]P_{1}(r)$}
\begin{lem}
\label{lem:coeff_consts_rec}The recursive expressions for $[1]P_{0}(r)$
and $[1]P_{1}(r)$ are as follows:
\begin{eqnarray}
[1]P_{0}(1) & = & \frac{1}{2^{t}}\label{eq:p01_const}\\{}
[1]P_{1}(1) & = & 1-\frac{1}{2^{t}},\label{eq:p11_const}
\end{eqnarray}
and
\begin{eqnarray}
[1]P_{0}(r) & = & \left(1-\frac{1}{2^{t}}\right)[1]P_{1}(r-1)+\frac{1}{2^{t}},\label{eq:const_p0}\\{}
[1]P_{1}(r) & = & \left(1-\frac{1}{2^{t}}\right)[1]P_{0}(r-1).\label{eq:const_p1}
\end{eqnarray}
\end{lem}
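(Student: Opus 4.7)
\textbf{Proof plan for Lemma \ref{lem:coeff_consts_rec}.} The key observation is that for any polynomial $Q \in \mathbb{R}[p]$, the constant coefficient $[1]Q = [p^0]Q$ equals $Q(0)$. Consequently, every equation in the lemma is obtained by specializing the closed forms for $P_0$ and $P_1$ already established in Lemmas \ref{lem:p01_and_p11} and \ref{lem:p0r_and_p1r} to $p = 0$, in which case $q = (1-p)/2 = 1/2$.

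For the base cases, substituting $p = 0$ and $q = 1/2$ into equations \ref{eq:base_p0} and \ref{eq:base_p1} immediately yields $[1]P_0(1) = (1/2)^t = 1/2^t$ and $[1]P_1(1) = 1 - (1/2)^t = 1 - 1/2^t$, which are exactly \ref{eq:p01_const} and \ref{eq:p11_const}.

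For the recursive parts, write $U = P_1(r-1)$ and $V = P_0(r-1)$ as polynomials in $p$, and note that $U(0) = [1]P_1(r-1)$ and $V(0) = [1]P_0(r-1)$. Substituting $p = 0$ into equation \ref{eq:p0r} makes every factor $pU$ vanish, leaving
\[
[1]P_0(r) \;=\; \frac{1}{2}\, U(0) \cdot \frac{1 - (1/2)^t}{1 - 1/2} + (1/2)^t \;=\; \left(1 - \frac{1}{2^t}\right)[1]P_1(r-1) + \frac{1}{2^t}.
\]
The identical substitution into equation \ref{eq:p1r} gives
\[
[1]P_1(r) \;=\; \frac{1}{2}\, V(0) \cdot \frac{1 - (1/2)^t}{1 - 1/2} \;=\; \left(1 - \frac{1}{2^t}\right)[1]P_0(r-1),
\]
matching \ref{eq:const_p0} and \ref{eq:const_p1} respectively.

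There is no serious obstacle once one recognizes that extracting $[1]$ is the same as evaluating at $p = 0$; the proof reduces to careful bookkeeping of the geometric-sum factor $(1 - (1/2)^t)/(1/2)$ and verifying that every $p$-dependent term inside the closed forms vanishes cleanly. As a sanity check, foreshadowing Step 2 of the outline, one can verify by induction that $[1]P_0(r) + [1]P_1(r) = 1$, which is consistent with the fact that at $p = 0$ no variable is assigned $*$ and the restricted function is always a constant (either $0$ or $1$).
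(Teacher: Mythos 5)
Your proposal is correct and is essentially the same argument as the paper's: the paper extracts the constant coefficient by using that the operator $[1]$ is linear and multiplicative (Lemma \ref{claim:poly_props}), which is precisely the statement that $[1]Q = Q(0)$ is the evaluation-at-zero homomorphism you invoke, and both derivations then reduce to the same substitution $q = 1/2$, $pU = 0$ in Lemmas \ref{lem:p01_and_p11} and \ref{lem:p0r_and_p1r} and the same geometric sum $\sum_{k=0}^{t-1}2^{-k} = 2(1-2^{-t})$.
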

\begin{proof}
The proof is given as Lemma \ref{lem:app_coeff_consts_rec} in the
appendix. 
\end{proof}
Using the recurrence relations in Lemma \ref{lem:coeff_consts_rec},
we can prove the following simple but useful fact. 
\begin{lem}
\label{lem:sum_constcoeff_1}$[1]P_{0}(r)+[1]P(r)=1$ and $[1]P_{0}(r)\geq0$,
$[1]P_{1}(r)\geq0$. \end{lem}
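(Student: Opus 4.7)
The plan is to prove both assertions by a short induction on $r$, using the recurrences in Lemma \ref{lem:coeff_consts_rec}.

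For the sum-to-one claim, I will first check the base case: by equations \eqref{eq:p01_const} and \eqref{eq:p11_const}, $[1]P_{0}(1)+[1]P_{1}(1)=\tfrac{1}{2^{t}}+\bigl(1-\tfrac{1}{2^{t}}\bigr)=1$. For the inductive step, assuming $[1]P_{0}(r-1)+[1]P_{1}(r-1)=1$, I add the recursions \eqref{eq:const_p0} and \eqref{eq:const_p1} to obtain
\[
[1]P_{0}(r)+[1]P_{1}(r)=\Bigl(1-\tfrac{1}{2^{t}}\Bigr)\bigl([1]P_{0}(r-1)+[1]P_{1}(r-1)\bigr)+\tfrac{1}{2^{t}} = \Bigl(1-\tfrac{1}{2^{t}}\Bigr)\cdot 1+\tfrac{1}{2^{t}}=1,
\]
which closes the induction.

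For non-negativity, I again induct on $r$. The base case is immediate: $[1]P_{0}(1)=1/2^{t}\geq 0$ and $[1]P_{1}(1)=1-1/2^{t}\geq 0$ (recall $t\geq 1$). For the inductive step, equation \eqref{eq:const_p1} writes $[1]P_{1}(r)$ as a product of two non-negative quantities (using the induction hypothesis that $[1]P_{0}(r-1)\geq 0$), and equation \eqref{eq:const_p0} writes $[1]P_{0}(r)$ as such a product plus $1/2^{t}\geq 0$.

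No step here is a real obstacle, since both claims follow mechanically from the recursions already established in Lemma \ref{lem:coeff_consts_rec}. As a sanity check, both facts also admit a one-line probabilistic interpretation: the constant coefficient equals the value of the polynomial at $p=0$, and setting $p=0$ corresponds to a restriction that assigns every variable a $\{0,1\}$-value, forcing $\Xi_{t}(r)|_{\rho}$ to be identically $0$ or identically $1$. Hence $[1]P_{0}(r)$ and $[1]P_{1}(r)$ are honest probabilities of disjoint and exhaustive events, which is exactly the content of the lemma.
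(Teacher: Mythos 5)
Your proof is correct and follows essentially the same route as the paper: induction on $r$ using the base values and the recursions of Lemma \ref{lem:coeff_consts_rec}, adding the two recursions for the sum-to-one claim and reading off non-negativity termwise. The closing probabilistic remark (evaluating at $p=0$ and interpreting $[1]P_{0}(r)$, $[1]P_{1}(r)$ as probabilities of the restricted function being identically $0$ or $1$) is a valid alternative one-line justification that the paper does not spell out, but it is not needed.
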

\begin{proof}
The fact that the coefficients are always non-negative follows simply
from equations \ref{eq:const_p0} and \ref{eq:const_p1} via straightforward
induction on $r$. We use induction on $r$ again to show that the
coefficients sum to 1. For $r=1$, from equations \ref{eq:p01_const}
and \ref{eq:p11_const}, 
\[
[1]P_{0}(1)+[1]P_{1}(1)=\frac{1}{2^{t}}+1-\frac{1}{2^{t}}=1.
\]
Adding equations \ref{eq:const_p0} and \ref{eq:const_p1}, we get
\[
[1]P_{0}(r)+[1]P_{1}(r)=\left(1-\frac{1}{2^{t}}\right)\left([1]P_{1}(r-1)+[1]P_{0}(r-1)\right)+\frac{1}{2^{t}}
\]
and the induction follows. 
\end{proof}
We compute the exact equations for $[1]P_{0}(r)$ and $[1]P_{1}(r)$,
which will be useful when we compute the coefficients of higher powers
of $p$.
\begin{lem}
\label{lem:coeff_const_exact} Let $\alpha=2^{-t}$ and $\beta=1-\alpha$.
Then the exact expressions for $[1]P_{0}(r)$ and $[1]P_{1}(r)$ are
as follows. If $r=2k+1$, 
\begin{eqnarray}
[1]P_{0}(r) & = & \frac{1-\beta^{r+1}}{1+\beta},\label{eq:exact_const_p0_odd}\\{}
[1]P_{1}(r) & = & \frac{\beta+\beta^{r+1}}{1+\beta},\label{eq:exact_const_p1_odd}
\end{eqnarray}
and if $r=2k$, 
\begin{eqnarray}
[1]P_{0}(r) & = & \frac{1+\beta^{r+1}}{1+\beta},\label{eq:exact_const_p0_even}\\{}
[1]P_{1}(r) & = & \frac{\beta-\beta^{r+1}}{1+\beta}.\label{eq:eq:exact_const_p1_even}
\end{eqnarray}
\end{lem}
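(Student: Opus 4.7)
The plan is to prove the four formulas by straightforward induction on $r$, using the one-step recurrences from Lemma \ref{lem:coeff_consts_rec}, namely
\[
[1]P_{0}(r)=\beta\cdot[1]P_{1}(r-1)+\alpha,\qquad [1]P_{1}(r)=\beta\cdot[1]P_{0}(r-1),
\]
together with the base values $[1]P_{0}(1)=\alpha$ and $[1]P_{1}(1)=\beta$. Since the two recurrences toggle the parity of $r$, it is natural to carry the induction with two sets of formulas (one for $r$ odd, one for $r$ even) and to argue the step by feeding the odd case into the even case and vice versa.

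For the base case $r=1$, I would verify directly that the odd formulas reduce to $\alpha$ and $\beta$ using the factorization $1-\beta^{2}=(1-\beta)(1+\beta)$ and $\alpha=1-\beta$. For the inductive step with $r=2k$ even, I would substitute the odd formulas for $[1]P_{0}(r-1)$ and $[1]P_{1}(r-1)$ into the two recurrences; the computation for $[1]P_{0}(r)$ then becomes
\[
\beta\cdot\frac{\beta+\beta^{r}}{1+\beta}+(1-\beta)=\frac{\beta^{2}+\beta^{r+1}+(1-\beta)(1+\beta)}{1+\beta}=\frac{1+\beta^{r+1}}{1+\beta},
\]
and $[1]P_{1}(r)=\beta(1-\beta^{r})/(1+\beta)=(\beta-\beta^{r+1})/(1+\beta)$ drops out immediately. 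The case $r=2k+1$ odd is symmetric: feeding the even formulas into the recurrences and applying the same simplification $(1-\beta)(1+\beta)=1-\beta^{2}$ gives the required odd expressions.

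There is essentially no obstacle here; the only mild care is to keep track of parity, because $\beta^{r+1}$ enters with opposite signs in the odd and even formulas and one must check that the algebra flips the sign correctly at each step. As a consistency check one can note that the sum $[1]P_{0}(r)+[1]P_{1}(r)=(1+\beta)/(1+\beta)=1$ agrees with Lemma \ref{lem:sum_constcoeff_1}, and that as $r\to\infty$ both expressions converge to the stationary values $1/(1+\beta)$ and $\beta/(1+\beta)$, which is what one expects from the fixed point of the second-order recurrence $[1]P_{0}(r)=\beta^{2}[1]P_{0}(r-2)+\alpha$ obtained by composing the two one-step relations.
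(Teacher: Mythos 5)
Your proposal is correct and rests on the same ingredients as the paper's proof: the base values and one-step recurrences of Lemma \ref{lem:coeff_consts_rec}, split by the parity of $r$. The only cosmetic difference is that the paper composes the two recurrences into $[1]P_{0}(r)=\beta^{2}[1]P_{0}(r-2)+\alpha$ and sums the resulting geometric series (then gets $[1]P_{1}$ from Lemma \ref{lem:sum_constcoeff_1}), whereas you verify all four closed forms directly by a parity-alternating induction; both are routine and your algebra checks out.
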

\begin{proof}
The proof is given as Lemma \ref{lem:app_coeff_consts} in the appendix. 
\end{proof}
We bound the coefficient of $p$ in $P_{0}(r)$ and $P_{1}(r)$.

\subsubsection{The coefficients of $p$, $[p]P_{0}(r)$ and $[p]P_{1}(r)$}
\begin{lem}
\label{lem:coeff_p}The recurrence relations for $[p]P_{0}(r)$ and
$[p]P_{1}(r)$ are as follows:
\begin{eqnarray}
[p]P_{0}(1) & = & -\frac{t}{2^{t}}\label{eq:p_p01}\\{}
[p]P_{1}(1) & = & -\frac{t}{2^{t}}.\label{eq:p_p11}
\end{eqnarray}
and
\begin{eqnarray}
[p]P_{1}(r) & = & \left(1-\frac{1}{2^{t}}\right)[p]P_{0}(r-1)+\left(\frac{t+2}{2^{t}}-2\right)[1]P_{0}(r-1)\nonumber \\
 &  & +2\left(1-\frac{t+1}{2^{t}}\right)[1]P_{0}(r-1){}^{2}\label{eq:pcoeffp1}\\{}
[p]P_{0}(r) & = & \left(1-\frac{1}{2^{t}}\right)[p]P_{1}(r-1)+\left(\frac{t+2}{2^{t}}-2\right)[1]P_{1}(r-1)\nonumber \\
 &  & +2\left(1-\frac{t+1}{2^{t}}\right)[1]P_{1}(r-1){}^{2}-\frac{t}{2^{t}}+\frac{2t}{2^{t}}[1]P_{1}(r-1).\label{eq:pcoeffp0}
\end{eqnarray}
\end{lem}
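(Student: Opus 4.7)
The plan is to apply the coefficient operator $[p]$ directly to the closed-form expressions in Lemma \ref{lem:p0r_and_p1r}, expanding each factor to first order in $p$ around $p = 0$ and invoking Lemma \ref{claim:poly_props}.

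The base cases (\ref{eq:p_p01})--(\ref{eq:p_p11}) are immediate from Lemma \ref{lem:p01_and_p11}: writing $q = \tfrac{1}{2}(1-p)$ gives $P_0(1) = q^t = 2^{-t}(1-p)^t$ and $P_1(1) = 1 - (1-q)^t = 1 - 2^{-t}(1+p)^t$, from which both coefficients of $p$ read off as $-t/2^t$.

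For the recursive step, first consider (\ref{eq:pcoeffp0}). Set $U = P_1(r-1)$, $U_0 = [1]U$, $U_1 = [p]U$, and $s = q + pU$. Using $[1]q = \tfrac{1}{2}$, $[p]q = -\tfrac{1}{2}$, $[1]U = U_0$, $[p]U = U_1$, the rules of Lemma \ref{claim:poly_props} give $[1]s = \tfrac{1}{2}$ and $[p]s = U_0 - \tfrac{1}{2}$; consequently $[1]s^k = 2^{-k}$ and $[p]s^k = k \cdot 2^{-(k-1)}(U_0 - \tfrac{1}{2})$ for $k \geq 1$. Apply $[1]$ and $[p]$ term-by-term to the identity $P_0(r) = qU \sum_{k=0}^{t-1} s^k + s^t$ from Lemma \ref{lem:p0r_and_p1r}. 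The only closed-form sums needed are the geometric sum $\sum_{k=0}^{t-1} 2^{-k} = 2 - 2^{1-t}$ and $T := \sum_{k=1}^{t-1} k \cdot 2^{-(k-1)} = 4 - (t+1) \cdot 2^{2-t}$, the latter being an instance of Lemma \ref{claim:basic_ineqs}(2). Taking $[1]$ reproduces (\ref{eq:const_p0}); taking $[p]$ and grouping the contributions proportional to $U_1$, $U_0$, $U_0^2$, and the constant $1$ yields exactly the right-hand side of (\ref{eq:pcoeffp0}).

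The derivation of (\ref{eq:pcoeffp1}) is identical, except that $U$ is replaced by $V = P_0(r-1)$ and the summand $s^t$ is absent from the formula for $P_1(r)$. Since $s^t$ contributes exactly $[p]s^t = t \cdot 2^{-(t-1)}(U_0 - \tfrac{1}{2}) = (2t/2^t)\,U_0 - t/2^t$ to the $p$-coefficient of $P_0(r)$, its absence precisely accounts for the missing trailing terms in (\ref{eq:pcoeffp1}). The only obstacle is bookkeeping: tracking the first-order contributions from the three $p$-dependent factors $q$, $U$, and $s$, and invoking the closed form of $T$ from Lemma \ref{claim:basic_ineqs}(2). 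No deeper idea is required.
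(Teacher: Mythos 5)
Your proposal is correct and follows essentially the same route as the paper's proof in the appendix: both apply the coefficient-extraction rules of Lemma \ref{claim:poly_props} to the recurrences of Lemma \ref{lem:p0r_and_p1r}, the only cosmetic difference being that you use the Leibniz rule directly on $(q+pU)^k$ whereas the paper first expands it binomially before applying $[p]$ termwise. The bookkeeping you describe (coefficients of $U_1$, $U_0$, $U_0^2$, and the extra $s^t$ term for $P_0(r)$) checks out exactly against equations (\ref{eq:pcoeffp1}) and (\ref{eq:pcoeffp0}).
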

\begin{proof}
The proof is given as Lemma \ref{lem:app_coeff_p} in the appendix. 
\end{proof}
Given the recurrence relations for $[p]P_{0}(r)$ and $[p]P_{1}(r)$,
although it is possible to compute the exact expressions for them,
we will only need specific bounds on them, which we find next. 
\begin{lem}
\label{lem:pcoeff_upperbound}For every $r\geq1$, $-2\cdot2^{t}\leq[p]P_{0}(r)\leq0$
and $-2\cdot2^{t}\leq[p]P_{1}(r)\leq0$.\end{lem}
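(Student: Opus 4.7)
The plan is to prove both inequalities by a single induction on $r$, with $\alpha := 2^{-t}$. For the base case $r = 1$, equations~\eqref{eq:p_p01}--\eqref{eq:p_p11} give $[p]P_0(1) = [p]P_1(1) = -t/2^t$, which lies in $[-2\cdot 2^t, 0]$ trivially. For the inductive step, I would rewrite Lemma~\ref{lem:coeff_p} in the compact form
\begin{align*}
[p]P_1(r) &= (1-\alpha)\,[p]P_0(r-1) \,+\, F\bigl([1]P_0(r-1)\bigr),\\
[p]P_0(r) &= (1-\alpha)\,[p]P_1(r-1) \,+\, G\bigl([1]P_1(r-1)\bigr),
\end{align*}
where $F(x) = \bigl(\tfrac{t+2}{2^t}-2\bigr)x + 2\bigl(1-\tfrac{t+1}{2^t}\bigr)x^2$ and $G(y) = \bigl(\tfrac{3t+2}{2^t}-2\bigr)y + 2\bigl(1-\tfrac{t+1}{2^t}\bigr)y^2 - \tfrac{t}{2^t}$. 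By Lemma~\ref{lem:sum_constcoeff_1}, both arguments $[1]P_0(r-1)$ and $[1]P_1(r-1)$ lie in $[0,1]$, so the whole task reduces to understanding $F$ and $G$ on $[0,1]$.

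For the \emph{upper} bound, the key observation is that the leading coefficient $2(1-(t+1)/2^t)$ is non-negative for every $t \geq 1$ (since $t+1 \leq 2^t$), making $F$ and $G$ convex. Consequently, each attains its maximum over $[0,1]$ at an endpoint, and a direct evaluation gives $F(0)=0$, $F(1)=-t/2^t$, $G(0)=-t/2^t$, and $G(1)=0$, all non-positive. Combined with the inductive hypothesis $(1-\alpha)\,[p]P_\cdot(r-1) \leq 0$, this yields $[p]P_0(r), [p]P_1(r) \leq 0$.

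For the \emph{lower} bound, I would establish $F(x), G(y) \geq -1$ on $[0,1]$. Since $F$ and $G$ are convex and have endpoint values already bounded in $[-t/2^t,\,0]$, the only substantive case is bounding the vertex value $-b^2/(4a)$; this reduces to an elementary inequality of the form $(t+2)^2 \leq 4\cdot 2^t(2^t - t)$ (and its analog for $G$), which is immediate for $t \geq 2$, while the $t=1$ case is linear and checked directly. Then the inductive hypothesis gives
\[
[p]P_1(r) \;\geq\; (1-\alpha)(-2\cdot 2^t) \,-\, 1 \;=\; -2\cdot 2^t + 2 - 1 \;\geq\; -2\cdot 2^t,
\]
and symmetrically for $[p]P_0(r)$. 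The contraction factor $(1-\alpha) = 1 - 2^{-t}$ produces a slack of exactly $+2$ in each step, which absorbs the $O(1)$ error from $F$ or $G$.

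The main obstacle is the upper bound: it works only because the nonlinear corrections conspire to vanish at the ``correct'' endpoint, namely $F(0) = 0 = G(1)$, so that the convex $F, G$ are pinned $\leq 0$ on all of $[0,1]$ with no slack to spare. The lower bound is comparatively routine once any $O(1)$ lower estimate on $F,G$ is in hand; it is the contraction factor $(1-2^{-t})$ in the recurrence that is exactly what keeps the induction from blowing up.
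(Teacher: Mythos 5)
Your proposal is correct and follows essentially the same route as the paper: induction on $r$ via the recurrences of Lemma \ref{lem:coeff_p}, using $[1]P_0,[1]P_1\in[0,1]$ from Lemma \ref{lem:sum_constcoeff_1}; your convexity-plus-endpoints argument for the upper bound is just a repackaging of the paper's substitution $x^2\leq x$ for $x\in[0,1]$. The only real difference is in the lower bound, where the paper simply drops the non-negative quadratic term and bounds the linear part below by $\alpha(t+2)-2\geq-2$ (exactly the slack the contraction $(1-\alpha)$ provides), whereas your vertex computation proving $F,G\geq-1$ is correct but does more work than needed.
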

\begin{proof}
The proof is given as Lemma \ref{lem:app_p_lowbound} in the appendix. 
\end{proof}
The next lemma is important, since it gives the coefficient of $p$
that we need for the lower bound of Lemma \ref{lem:d_1_lowerbound}. 
\begin{lem}
\label{lem:lem_main_1}For $r\in\Omega(2^{t})$, $[p](P_{0}(r)+P_{1}(r))\leq-c_{1}2^{t}$
where $c_{1}=\frac{1}{6}$.\end{lem}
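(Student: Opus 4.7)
The plan is to reduce Lemma~\ref{lem:lem_main_1} to the analysis of a scalar linear recurrence with an almost-constant forcing term. First, add equations \ref{eq:pcoeffp0} and \ref{eq:pcoeffp1} of Lemma~\ref{lem:coeff_p}, and use the identity $[1]P_0(r-1) + [1]P_1(r-1) = 1$ from Lemma~\ref{lem:sum_constcoeff_1} to collapse the $((t+2)\alpha - 2)$ terms. Writing $S(r) := [p](P_0(r) + P_1(r))$ and $\alpha := 2^{-t}$, and using $C_0(r-1)^2 + C_1(r-1)^2 = 1 - 2 C_0(r-1) C_1(r-1)$ where $C_i(r-1) = [1]P_i(r-1)$, this yields
\[
S(r) = (1-\alpha)\, S(r-1) + F(r-1),
\]
where $F(r-1) = 2\alpha - 2 + 2\bigl(1 - (t+1)\alpha\bigr)\bigl(1 - 2 C_0(r-1) C_1(r-1)\bigr) + 2t\alpha \, C_1(r-1)$ depends only on $\alpha$, $t$, and the constant coefficients given explicitly in Lemma~\ref{lem:coeff_const_exact}.

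Next, substitute the closed forms from Lemma~\ref{lem:coeff_const_exact}. With $\beta = 1-\alpha$, both $C_0(r-1)$ and $C_1(r-1)$ approach the limits $1/(1+\beta)$ and $\beta/(1+\beta)$ respectively, with parity-alternating error of size $O(\beta^r)$. Substituting these limits into $F$ gives a clean \emph{fixed-point} value
\[
F_\infty = 2\alpha - 2 + 2\bigl(1-(t+1)\alpha\bigr)\,\frac{1+\beta^2}{(1+\beta)^2} + \frac{2t\alpha\,\beta}{1+\beta}.
\]
A short but slightly delicate arithmetic check then shows $F_\infty \le -2/3$ uniformly in $t \ge 1$: as $t \to \infty$ the right-hand side tends to $-1$, and direct evaluation for small $t$ (e.g.\ $F_\infty = -2/3$ at $t=1$ since the middle term vanishes, then $\approx -0.82$ at $t=2$, $\approx -0.90$ at $t=3$) confirms the bound across the intermediate range after checking monotonicity of $F_\infty$ in $t$.

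Finally, unroll the recurrence from a cut-off index $r_0 = c\cdot 2^t$ chosen so that $\beta^{r_0}$ is below a prescribed small constant $\varepsilon$, which by Lemma~\ref{lem:coeff_const_exact} forces $|F(j) - F_\infty| \le O(\varepsilon)$ for all $j \ge r_0$. Then
\[
S(r) = (1-\alpha)^{r-r_0}\, S(r_0) \;+\; \sum_{j=r_0}^{r-1} (1-\alpha)^{r-1-j}\, F(j).
\]
The initial term is controlled using Lemma~\ref{lem:pcoeff_upperbound} (which gives $|S(r_0)| \le 4\cdot 2^t$) and decays geometrically at rate $(1-\alpha)$, so for $r \ge c_d \cdot 2^t$ with $c_d$ chosen large enough this contribution is $o(2^t)$. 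The main sum is approximately $F_\infty/\alpha \le -(2/3)\cdot 2^t$ up to an $O(\varepsilon\cdot 2^t)$ error coming from the tail of the geometric series and from the approximation $F(j) \approx F_\infty$. Choosing $c, c_d$ compatibly absorbs these errors and yields $S(r) \le -2^t/6$ whenever $r \in \Omega(2^t)$, as required.

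The main obstacle I expect is verifying $F_\infty \le -2/3$ uniformly in $t$: the three summands in $F_\infty$ have comparable magnitudes and partially cancel, so one must either simplify algebraically to a single closed form and check its sign, or handle small $t$ by direct computation and large $t$ by a first-order expansion in $\alpha$. A secondary nuisance is choosing the two constants $c$ (controlling $\beta^{r_0}$) and $c_d$ (controlling $(1-\alpha)^{r-r_0}$) compatibly so that the combined error budget still leaves room for the constant $1/6$ rather than the nearly tight $2/3$ available at the fixed point.
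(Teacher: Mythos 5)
Your proposal is correct and follows essentially the same route as the paper: both add the two recurrences from Lemma \ref{lem:coeff_p}, use Lemma \ref{lem:sum_constcoeff_1} to reduce the forcing term to a function of $[1]P_1(r-1)$ alone, invoke the closed forms of Lemma \ref{lem:coeff_const_exact}, and unroll the resulting linear recurrence with rate $\beta=1-\alpha$ to extract a steady-state value of $-\tfrac{2}{3}\cdot 2^t$ (your $F_\infty/\alpha$ equals the paper's asymptotic $A(r)+B(r)$, worst at $t=1$). The only difference is bookkeeping: the paper evaluates the sums $\sum_i\beta^i u_{r-i-1}$ and $\sum_i\beta^i u_{r-i-1}^2$ exactly by pairing parities, whereas you approximate $u_j$ by its fixed point $\beta/(1+\beta)$ and absorb the $O(\beta^{r_0})$ error via a cutoff — both yield the claimed $-\tfrac{1}{6}2^t$ with room to spare.
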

\begin{proof}
Define 
\[
g(r)=[p]P_{0}(r)+[p]P_{1}(r)
\]
and
\[
u(r)=[1]P_{1}(r).
\]
Note that we have $\alpha=2^{-t}$ and $\beta=1-\alpha$. Adding equations
\ref{eq:pcoeffp0} and \ref{eq:pcoeffp1}, and using Lemma \ref{lem:sum_constcoeff_1},
we get,

\begin{eqnarray*}
g(r) & = & \left(1-\frac{1}{2^{t}}\right)g(r-1)+\left(\frac{t+2}{2^{t}}-2\right)-\frac{t}{2^{t}}+\frac{2t}{2^{t}}[1]P_{1}(r-1)\\
 &  & +2\left(1-\frac{t+1}{2^{t}}\right)\left([1]P_{1}(r-1){}^{2}+[1]P_{0}(r-1)^{2}\right)\\
 & = & \beta g(r-1)+\left(\frac{2}{2^{t}}-2\right)+\frac{2t}{2^{t}}[1]P_{1}(r-1)\\
 &  & +2\left(1-\frac{t+1}{2^{t}}\right)\left(1-2[1]P_{1}(r-1)+2[1]P_{1}(r-1){}^{2}\right)\\
 & = & \beta g(r-1)-2\alpha t+(6\alpha t+4\alpha-4)u_{r-1}+(4-4\alpha-4\alpha t)u_{r-1}^{2}\\
 & = & \beta g(r-1)+C_{0}+C_{1}u_{r-1}+C_{2}u_{r-1}^{2}
\end{eqnarray*}
where $C_{0}=-2\alpha t$, $C_{1}=6\alpha t+4\alpha-4$, and $C_{2}=4-4\alpha-4\alpha t$.
Thus, 
\[
g(r)=\beta^{r-1}g(1)+C_{0}\sum_{i=0}^{r-2}\beta^{i}+C_{1}\sum_{i=0}^{r-2}\beta^{i}u_{r-i-1}+C_{2}\sum_{i=0}^{r-2}\beta^{i}u_{r-i-1}^{2}.
\]
Fix $r=2k$ to be an even number where $k\geq2$. The third term can
be written using equations \ref{eq:exact_const_p1_odd} and \ref{eq:eq:exact_const_p1_even}
as, 
\begin{eqnarray*}
\sum_{i=0}^{r-2}\beta^{i}u_{r-i-1} & = & \sum_{i=0}^{\frac{1}{2}(r-2)-1}\beta^{2i}\left(u_{r-2i-1}+\beta u_{r-2i-2}\right)+\beta^{r-2}u_{1}\\
 & = & \sum_{i=0}^{\frac{1}{2}(r-2)-1}\beta^{2i}\left(\frac{\beta+\beta^{r-2i}}{1+\beta}+\beta\frac{\beta-\beta^{r-2i-1}}{1+\beta}\right)+\beta^{r-2}u_{1}\\
 & = & \beta\frac{1-\beta^{r-2}}{1-\beta^{2}}+\beta^{r-1}\\
 & = & \beta\frac{1-\beta^{r}}{1-\beta^{2}}
\end{eqnarray*}
where in the second line we used \ref{eq:p11_const} for $u_{1}$.
Similarly, for the third summation,
\begin{eqnarray*}
\sum_{i=0}^{r-2}\beta^{i}u_{r-i-1}^{2} & = & \sum_{i=0}^{\frac{1}{2}(r-2)-1}\beta^{2i}\left(u_{r-2i-1}^{2}+\beta u_{r-2i-2}^{2}\right)+\beta^{r-2}u_{1}^{2}\\
 & = & \sum_{i=0}^{\frac{1}{2}(r-2)-1}\beta^{2i}\left(\left(\frac{\beta+\beta^{r-2i}}{1+\beta}\right)^{2}+\beta\left(\frac{\beta-\beta^{r-2i-1}}{1+\beta}\right)^{2}\right)+\beta^{r-2}u_{1}^{2}\\
 & = & \frac{\beta^{2}}{1+\beta}\sum_{i=0}^{\frac{1}{2}(r-2)-1}\beta^{2i}+\frac{\beta^{2r-1}}{1+\beta}\sum_{i=0}^{\frac{1}{2}(r-2)-1}\beta^{-2i}+\beta^{r-2}u_{1}^{2}\\
 & = & \frac{\beta^{2}}{1+\beta}\frac{(1+\beta^{r-1})(1-\beta^{r})}{1-\beta^{2}}.
\end{eqnarray*}
Thus the expression for $g(r)$ becomes,
\begin{eqnarray}
g(r) & = & \beta^{r-1}g(1)+C_{0}\frac{1-\beta^{r-1}}{1-\beta}+C_{1}\beta\frac{1-\beta^{r}}{1-\beta^{2}}+C_{2}\frac{\beta^{2}}{1+\beta}\frac{(1+\beta^{r-1})(1-\beta^{r})}{1-\beta^{2}}\nonumber \\
 & = & \beta^{r-1}g(1)+(-2\alpha t)\frac{1-\beta^{r-1}}{1-\beta}+(6\alpha t+4\alpha-4)\beta\frac{1-\beta^{r}}{1-\beta^{2}}\nonumber \\
 &  & +(4-4\alpha-4\alpha t)\frac{\beta^{2}}{1+\beta}\frac{(1+\beta^{r-1})(1-\beta^{r})}{1-\beta^{2}}\nonumber \\
 & = & \beta^{r-1}g(1)+A(r)+B(r),\label{eq:gr_split}
\end{eqnarray}
where 
\begin{eqnarray*}
A(r) & = & (4\alpha-4)\beta\frac{1-\beta^{r}}{1-\beta^{2}}+(4-4\alpha)\frac{\beta^{2}}{1+\beta}\frac{(1+\beta^{r-1})(1-\beta^{r})}{1-\beta^{2}}\\
 & = & (4-4/\alpha)\beta\frac{1-\beta^{r}}{1+\beta}\left(1-\frac{\beta}{1+\beta}(1+\beta^{r-1})\right)\\
 & = & (4-4/\alpha)\beta\frac{(1-\beta^{r})^{2}}{(1+\beta)^{2}}\\
 & = & -\frac{4}{\alpha}\beta^{2}\frac{(1-\beta^{r})^{2}}{(1+\beta)^{2}},
\end{eqnarray*}
and
\begin{eqnarray*}
B(r) & = & (-2\alpha t)\frac{1-\beta^{r-1}}{1-\beta}+(2\alpha t)\beta\frac{1-\beta^{r}}{1-\beta^{2}}\\
 &  & +(4\alpha t)\beta\frac{1-\beta^{r}}{1-\beta^{2}}+(-4\alpha t)\frac{\beta^{2}}{1+\beta}\frac{(1+\beta^{r-1})(1-\beta^{r})}{1-\beta^{2}}\\
 & = & (-2t)\left(\frac{1-\beta^{r-1}-\beta^{r}+\beta^{r+1}}{1+\beta}\right)\\
 &  & +(4t)\beta\frac{1-\beta^{r}}{1+\beta}\left(1-\frac{\beta}{1+\beta}(1+\beta^{r-1})\right)\\
 & = & (-2t)\left(\frac{1-\beta^{r-1}-\beta^{r}+\beta^{r+1}}{1+\beta}\right)+4t\beta\frac{(1-\beta^{r})^{2}}{(1+\beta)^{2}}\\
 & = & \frac{2t}{(1+\beta)^{2}}\left(-(1+\beta)(1-\beta^{r-1}-\beta^{r}+\beta^{r+1})+2\beta(1-\beta^{r})^{2}\right)\\
 & = & \frac{2t}{(1+\beta)^{2}}(\beta-1+2\beta^{r}-4\beta\beta^{r}+\beta^{r-1}+2\beta^{2r+1}-\beta^{r+2})\\
 & \leq & \frac{2t\beta^{r}}{(1+\beta)^{2}}(\frac{1}{\beta}+2\beta^{r+1}-\beta^{2})\\
 & \leq & \frac{8t\beta^{r}}{(1+\beta)^{2}},
\end{eqnarray*}
where in the first inequality, we used the facts that since $\frac{1}{2}\leq\beta=1-2^{-t}\leq1$,
we would have $\beta-1\leq0$ and $1-2\beta\leq0$, and in the second
inequality we used the same bounds for $\beta$ after factoring out
$\beta^{r}$. Replacing the values of $A(r)$ and $B(r)$ in equation
\ref{eq:gr_split} for $g(r)$, and using that $\beta\geq0$ and $g(1)\leq0$
from Lemma \ref{lem:coeff_p}, we get that 
\begin{eqnarray*}
g(r) & = & \beta^{r-1}g(1)+A(r)+B(r)\\
 & \leq & -\frac{4}{\alpha}\beta^{2}\frac{(1-\beta^{r})^{2}}{(1+\beta)^{2}}+\frac{8t\beta^{r}}{(1+\beta)^{2}}\\
 & = & -\frac{3}{\alpha}\beta^{2}\frac{(1-\beta^{r})^{2}}{(1+\beta)^{2}}-\frac{1}{\alpha}\beta^{2}\frac{(1-\beta^{r})^{2}}{(1+\beta)^{2}}+\frac{8t\beta^{r}}{(1+\beta)^{2}}
\end{eqnarray*}
Taking $r=C2^{t}$ for some constant $C$, we get that 
\[
\beta^{r}=\left(1-\frac{1}{2^{t}}\right)^{C2^{t}}\leq e^{-C},
\]
and using $\alpha=2^{-t}$ and $\frac{1}{2}\leq\beta\leq1$, we get
\begin{eqnarray*}
-\frac{1}{\alpha}\beta^{2}\frac{(1-\beta^{r})^{2}}{(1+\beta)^{2}}+\frac{8t\beta^{r}}{(1+\beta)^{2}} & \leq & -2^{t}\frac{1}{4}\frac{(1-\beta^{r})^{2}}{(1+\beta)^{2}}+\frac{8t\beta^{r}}{(1+\beta)^{2}}\\
 & \leq & -\frac{t}{4(1+\beta)^{2}}\left((1-\beta^{r})^{2}-32\beta^{r}\right)\\
 & \leq & -\frac{t}{4(1+\beta)^{2}}\left((1-e^{-C})^{2}-32e^{-C}\right)\\
 & \leq & 0
\end{eqnarray*}
where in the second line, we used $t\leq2^{t}$, and in the last line,
we used the fact that for $C=4$, $(1-e^{-C})^{2}-32e^{-C}\geq0$.
Thus we get that 
\begin{eqnarray*}
g(r) & \leq & -\frac{3}{\alpha}\beta^{2}\frac{(1-\beta^{r})^{2}}{(1+\beta)^{2}}\\
 & \leq & -3\cdot2^{t}\frac{1}{4}\frac{1}{4}(1-e^{-C})^{2}\\
 & \leq & -\frac{1}{6}2^{t}
\end{eqnarray*}
where in the second line, we used that for $C=4$, $(1-e^{-C})^{2}\geq\frac{16}{18}$. 
\end{proof}
Thus so far, we have the following facts about the polynomial $P_{0}(r)+P_{1}(r)$,
as was mentioned in Steps 2 and 3. 
\begin{enumerate}
\item The constant coefficient is 1, i.e. $[1]P_{0}(r)+[1]P_{1}(r)=1$,
and
\item The coefficient of $p$ is between $-4\cdot2^{t}$ and $-c_{1}2^{t}$
for $r\in\Omega(2^{t})$ and $c_{1}=\frac{1}{6}$, i.e., 
\[
-4\cdot2^{t}\leq[p]P_{0}(r)+[p]P_{1}(r)\leq-c_{1}2^{t}.
\]

\end{enumerate}
Note that we almost have sufficient information to conclude Lemma
\ref{lem:d_1_lowerbound}, however, we need to reason that higher
powers of $p$ in $P_{0}(r)+P_{1}(r)$ cannot substantially affect
the coefficient of $p$. It is possible that higher powers of $p$
have coefficients with large magnitudes, and thus annihilate the mass
on the coefficient of $p$. We show next that this cannot happen.
Here we will use the fact that $p\leq c_{p}2^{-t}$.

\subsubsection{Bounding higher powers of $p$ in $P_{0}(r)+P_{1}(r)$}

To bound the higher powers of $p$, we will show that $[\uparrow p^{2}]\left(P_{0}(r)+P_{1}(r)\right)$
cannot achieve a very large value for $p\leq c_{p}2^{-t}$. Thus,
our aim is to show that $G_{2}(P_{0}(r)+P_{1}(r))$ is small. 
\begin{lem}
\label{lem:main_higherpowers}For every $r\geq1$, for $0\leq p\leq p_{\max}=c_{p}2^{-t}$,
where $c_{p}=\frac{1}{420}$, 
\[
G_{2}(P_{0}(r)+P_{1}(r))\leq30\cdot2^{2t}.
\]
\end{lem}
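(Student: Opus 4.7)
The plan is to prove Lemma \ref{lem:main_higherpowers} by induction on $r$, using the recursive expressions for $P_0(r)$ and $P_1(r)$ from Lemma \ref{lem:p0r_and_p1r}, together with the arithmetic rules for the operators $[p^i]$, $[\uparrow p^i]$ and $G_i$ listed in Lemma \ref{claim:poly_props}. For the base case $r=1$, we use the closed forms $P_0(1) = q^t = 2^{-t}(1-p)^t$ and $P_1(1) = 1 - (1-q)^t = 1 - 2^{-t}(1+p)^t$. Since $[1]P_0(1) = 2^{-t}$ and $[p]P_0(1) = -t\, 2^{-t}$, the polynomial $[\uparrow p^2]P_0(1)$ is $2^{-t}\cdot p^{-2}\bigl((1-p)^t - 1 + tp\bigr)$, whose absolute value is at most $2^{-t}\binom{t}{2}$ by Lemma \ref{claim:basic_ineqs}(6); an analogous bound holds for $P_1(1)$. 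This gives $G_2(P_0(1)+P_1(1)) \le 2\cdot 2^{-t}\binom{t}{2} \ll 30\cdot 2^{2t}$.

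For the inductive step, assume $G_2(P_0(r-1)) + G_2(P_1(r-1)) \le 30\cdot 2^{2t}$. Let $U = P_1(r-1)$ and $V = P_0(r-1)$ and recall $P_0(r) = qU\sum_{k=0}^{t-1}(q+pU)^k + (q+pU)^t$, with the analogous expression for $P_1(r)$. Set $Q = q+pU$. A direct expansion gives $[1]Q = 1/2$, $[p]Q = [1]U - 1/2$, and $[\uparrow p^2]Q = [\uparrow p]U$, so using $|[1]U|\le 1$ (Lemma \ref{lem:sum_constcoeff_1}) and $|[p]U|\le 2\cdot 2^t$ (Lemma \ref{lem:pcoeff_upperbound}) yields bounds $|[1]Q|=1/2$, $|[p]Q|\le 3/2$, and $G_2(Q) \le |[p]U|/p_{\max} + G_2(U)$ (obtained by writing $pU - p\cdot[1]U = p^2([p]U + p\,[\uparrow p^2]U)$).

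The main technical step is a product rule for $G_2$: writing any polynomial $A$ as $A = [1]A + p\,[p]A + p^2\,[\uparrow p^2]A$ and multiplying out, one gets
\begin{align*}
G_2(AB) \le |[1]A|\,G_2(B) + |[1]B|\,G_2(A) + |[p]A\cdot[p]B| + p_{\max}\bigl(|[p]A|\,G_2(B) + |[p]B|\,G_2(A)\bigr) + p_{\max}^2\, G_2(A)\,G_2(B).
\end{align*}
Iterating this on $Q^k = Q\cdot Q^{k-1}$ for $k\le t$, using $[1]Q^k = 2^{-k}$ and $[p]Q^k = k\,2^{-(k-1)}([1]U - 1/2)$, yields the bound $G_2(Q^k) \le \sum_{i=0}^{k-1} 2^{-(k-1-i)} G_2(Q\cdot Q^i) \le O(k^2)\cdot \bigl(G_2(U) + \text{lower-order}\bigr)$. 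Summing over $k\in\{1,\dots,t\}$ and multiplying by the appropriate constants from the outer factor $qU$ (using $|[1](qU)|\le 1/2$ and $|[p](qU)|\le O(2^t)$) produces a bound on $G_2(P_0(r))$, and symmetrically on $G_2(P_1(r))$.

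The hard part will be verifying that the resulting recurrence closes with the claimed constant $30$: the inductive term contributes a multiplier of roughly $\bigl|[1](qU)\bigr|\cdot (1 - 2^{-t})^{-1} + p_{\max}\cdot|[p](qU)|$, and one must check this is strictly less than $1$, with slack large enough to absorb the additive terms coming from products of linear coefficients of size up to $2^t$. The choice $c_p = 1/420$ is tuned precisely so that, after collecting all contributions from the $t$ terms in the geometric sum and from the $(q+pU)^t$ tail, the coefficient of $G_2(U)$ on the right-hand side stays safely below $1$ and the total additive slack remains below $30\cdot 2^{2t}$, closing the induction. A careful but routine tracking of constants — essentially keeping every power of $p_{\max}\cdot 2^t$ and $p_{\max}\cdot 2^{2t}$ bounded — will complete the proof.
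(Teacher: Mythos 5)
Your overall strategy --- induction on $r$ through the recursive formulas of Lemma \ref{lem:p0r_and_p1r}, with the invariant $G_{2}\leq\Gamma=30\cdot2^{2t}$ and the previously established bounds $|[1]U|\leq1$ and $|[p]U|\leq2\cdot2^{t}$ --- is the same as the paper's, and your product rule for $G_{2}$ is a legitimate alternative to the paper's device of expanding $[\uparrow p^{2}]\bigl(qU(q+pU)^{k}\bigr)$ into six explicit pieces and bounding the ``pure product'' piece $\sum_{i\geq2}\binom{k}{i}q^{k-i+1}p^{i-2}U^{i+1}$ by monotonicity in $U\in[0,1]$ (setting $U=1$). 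However, the quantitative claim you extract from iterating the product rule, $G_{2}(Q^{k})\leq O(k^{2})\cdot\bigl(G_{2}(U)+\text{lower-order}\bigr)$, is wrong in a way that kills the induction: summing over $k\leq t$ and multiplying by $|[1](qU)|=\tfrac{1}{2}$ would place a coefficient of order $t^{3}$ in front of $G_{2}(U)\leq\Gamma$, so the recurrence cannot close with any constant. What your own product rule actually yields, once you retain the factors $[1]Q^{k-1}=2^{-(k-1)}$ and $|[p]Q^{k-1}|\leq(k-1)2^{-(k-1)}$, is $G_{2}(Q^{k})\leq O(k2^{-k})\bigl(|[p]U|+p_{\max}G_{2}(U)\bigr)+O(k^{2}2^{-k})$, i.e.\ $G_{2}(Q^{k})=O(t)$ in absolute terms, with only an $O(p_{\max}k2^{-k})$ coefficient on $G_{2}(U)$.

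The full weight of $G_{2}(U)$ enters $G_{2}(P_{0}(r))$ only through the terms $|[1]Q^{k}|\,G_{2}(qU)$, which sum to $\bigl(\sum_{k=0}^{t-1}2^{-(k+1)}\bigr)(1+p_{\max})G_{2}(U)=(1-2^{-t})(1+p_{\max})G_{2}(U)$. Your stated multiplier $|[1](qU)|(1-2^{-t})^{-1}+p_{\max}|[p](qU)|$ is not this quantity, and it obscures the one fact the whole proof turns on: the geometric series is truncated at $t$ terms, and the final term $(q+pU)^{t}$ carries no $qU$ prefactor precisely because the tree is $t$-clipped, so the multiplier is $1-2^{-t}\bigl(1-O(1/C)\bigr)<1$, leaving slack $\Theta(\Gamma2^{-t})=\Theta(H2^{t})$ --- just enough to absorb the $O(2^{t})$ additive terms coming from $|[p](qU)|\,|[p]Q^{k}|$ and from $G_{2}(Q)$. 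A further slip: your bound $G_{2}(Q)\leq|[p]U|/p_{\max}+G_{2}(U)$ should read $G_{2}(Q)\leq|[p]U|+p_{\max}G_{2}(U)$; the version you wrote is inflated by a factor of $1/p_{\max}$ and, at roughly $(2C+H)2^{2t}$, is far too weak to survive the constant tracking that fixes $c_{p}=1/420$ and $H=30$.
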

\begin{proof}
We will assume that 
\[
p\leq p_{\max}=\frac{1}{C2^{t}}
\]
where $C=\frac{1}{c_{p}}$. Using part (6) of claim \ref{claim:poly_props},
\[
G_{2}(P_{0}(r)+P_{1}(r))\leq G_{2}(P_{0}(r))+G_{2}(P_{1}(r))
\]
and we will bound the terms individually. We will show this fact by
induction. For the base case $r=1$, using claim \ref{claim:basic_ineqs},
\[
G_{2}(P_{0}(1))\leq\max_{p}\frac{1}{2^{t}}\frac{1}{p^{2}}\left((1-p)^{t}-1+tp\right)\leq\frac{1}{2^{t}}{t \choose 2}\leq30\cdot2^{2t},
\]
and for $P_{1}(1)$,
\begin{eqnarray*}
G_{2}(P_{1}(1)) & \leq & \max_{p}\left|[\uparrow p^{2}]\left(1-\frac{1}{2^{t}}(1+p)^{t}\right)\right|\\
 & = & \max_{p}\left|-\frac{1}{2^{t}}\sum_{i=2}^{t}{t \choose i}p^{i-2}\right|\\
 & \leq & \frac{1}{2^{t}}\sum_{i=2}^{t}{t \choose i}\\
 & \leq & 1,
\end{eqnarray*}
where we used the fact that $p\leq1$ in the third line. Assume that
for all $1\leq z\leq r$, 
\begin{eqnarray*}
G_{2}(P_{0}(z)) & \leq & \Gamma,\\
G_{2}(P_{1}(z)) & \leq & \Gamma.
\end{eqnarray*}
Reproducing equations \ref{eq:p0r} and \ref{eq:p1r} for $P_{0}(r)$
and $P_{1}(r)$, for $U=P_{1}(r-1)$ and $V=P_{0}(r-1)$, 
\begin{eqnarray*}
P_{0}(r) & = & qU\sum_{i=0}^{t-1}(q+pU)^{i}+(q+pU)^{t},\\
P_{1}(r) & = & qV\sum_{i=0}^{t-1}(q+pV)^{i}.
\end{eqnarray*}
Let us analyze one term of $P_{0}(r)$, $qU(q+pU)^{k}$. Note that
\begin{eqnarray*}
[\uparrow p^{2}]qU(q+pU)^{k} & = & [\uparrow p^{2}]\sum_{i=0}^{k}{k \choose i}q^{k-i+1}p^{i}U^{i+1}\\
 & = & \frac{1}{p^{2}}\left(qU(q+pU)^{k}-q^{k+1}U-kq^{k}pU^{2}\right)+[\uparrow p^{2}]q^{k+1}U+[\uparrow p]kq^{k}U^{2}.\\{}
[\uparrow p^{2}]q^{k+1}U & = & \frac{1}{p^{2}}\left(\frac{U}{2^{k+1}}\left((1-p)^{k+1}-1+(k+1)p\right)\right)+[\uparrow p^{2}]\frac{U}{2^{k+1}}-[\uparrow p]\frac{k+1}{2^{k+1}}U.\\{}
[\uparrow p]kq^{k}U^{2} & = & \frac{1}{p}\left(\frac{kU^{2}}{2^{k}}\left((1-p)^{k}-1\right)\right)+[\uparrow p]\frac{k}{2^{k}}U^{2}.
\end{eqnarray*}
To analyse each of these terms, let the functions be as follows:
\begin{eqnarray*}
h_{1}^{k} & = & \frac{1}{p^{2}}\left(qU(q+pU)^{k}-q^{k+1}U-kq^{k}pU^{2}\right)\\
h_{2}^{k} & = & \frac{1}{p^{2}}\left(\frac{U}{2^{k+1}}\left((1-p)^{k+1}-1+(k+1)p\right)\right)\\
h_{3}^{k} & = & \frac{1}{p}\left(\frac{kU^{2}}{2^{k}}\left((1-p)^{k}-1\right)\right)\\
h_{4}^{k} & = & [\uparrow p^{2}]\frac{U}{2^{k+1}}\\
h_{5}^{k} & = & [\uparrow p]\frac{k}{2^{k}}U^{2}\\
h_{6}^{k} & = & -[\uparrow p]\frac{k+1}{2^{k+1}}U
\end{eqnarray*}
Consider the function $h_{1}^{k}$. 
\begin{eqnarray*}
h_{1}^{k} & = & \frac{1}{p^{2}}\left(qU(q+pU)^{k}-q^{k+1}U-kq^{k}pU^{2}\right)\\
 & = & \sum_{i=2}^{k}{k \choose i}q^{k-i+1}p^{i-2}U^{i+1}.
\end{eqnarray*}
Since $h_{1}^{k}\geq0$, $|h_{1}^{k}|=h_{1}^{k}$. Further, since
$q\geq0$ and $p\geq0$, and the function above is an increasing function
of $U=P_{1}(r-1)\leq1$, we can set $U=1$. This step is innocuous
but crucial, since there are large cancellations that happen within
the powers of $p$ in the expression for $U$. Thus, we can bound,
\begin{eqnarray*}
\max_{p}|h_{1}^{k}| & \leq & \sum_{i=2}^{k}{k \choose i}q^{k-i+1}p^{i-2}\\
 & \leq & {k \choose 2}\frac{1}{2^{k-1}}+p\sum_{i=3}^{k}{k \choose i}q^{k-i+1}p^{i-3}\\
 & \leq & {k \choose 2}\frac{1}{2^{k-1}}+p\sum_{i=3}^{k}{k \choose i}\\
 & \leq & {k \choose 2}\frac{1}{2^{k-1}}+\frac{1}{C}\frac{2^{k}}{2^{t}}
\end{eqnarray*}
where in the second line, we used $q\leq\frac{1}{2}$, in the third
line, we used $q\leq1$ and $p\leq1$, and in the last line we used
$p\leq\frac{1}{C2^{t}}$. By a similar argument, noting that $(1-p)^{k+1}-1+(k+1)p\geq0$
for $k\geq0$ and $p\in[0,1]$, we get that $|h_{2}^{k}|=h_{2}^{k}$,
and setting $U=1$ and using point (6) in claim \ref{claim:basic_ineqs},
we get that,

\[
\max_{p}|h_{2}^{k}|\leq{k+1 \choose 2}\frac{1}{2^{k+1}}.
\]
For $h_{3}^{k}$, note that $(1-p)^{k}-1\leq0$ for $k\geq0$ and
$p\in[0,1]$, and thus $|h_{3}^{k}|=-h_{3}^{k}$. Again setting $U=1$
and using point (5) in claim \ref{claim:basic_ineqs}, 
\[
\max_{p}|h_{3}^{k}|\leq\frac{k^{2}}{2^{k}}.
\]
Using the induction hypothesis for $h_{4}^{k}$, we get that 
\[
\max_{p}|h_{4}^{k}|\leq\frac{\Gamma}{2^{k+1}}.
\]
To bound the maximum of $h_{5}^{k}$ and $h_{6}^{k}$, we can write,
\begin{eqnarray*}
U & = & P_{1}(r-1)\\
 & = & \alpha_{0}+\alpha_{1}p+p^{2}Q
\end{eqnarray*}
where $\alpha_{0}$ and $\alpha_{1}$ are constants and $Q$ is a
polynomial in $p$, such that 
\begin{eqnarray*}
0 & \leq & |\alpha_{0}|\leq1\\
|\alpha_{1}| & \leq & 2\cdot2^{t}\\
\max_{p}|Q| & \leq & \Gamma
\end{eqnarray*}
which follow respectively from Lemmas \ref{lem:sum_constcoeff_1},
\ref{lem:pcoeff_upperbound} and the induction hypothesis. Thus, we
can write, 
\begin{eqnarray}
[\uparrow p]U & = & \alpha_{1}+pQ\nonumber \\
G_{1}(U)=\max_{p}\left|[\uparrow p]U\right| & \leq & |\alpha_{1}|+p_{\max}\Gamma=B_{1},\label{eq:m_p_up}
\end{eqnarray}
and 
\begin{eqnarray}
[\uparrow p]U^{2} & = & (2\alpha_{0}\alpha_{1})+p(\alpha_{1}^{2}+\alpha_{0}Q)+p^{2}(2\alpha_{1}Q)+p^{3}(Q^{2})\nonumber \\
\max_{p}\left|[\uparrow p]U^{2}\right| & \leq & 2|\alpha_{1}|+p_{\max}(|\alpha_{1}|^{2}+\Gamma)+p_{\max}^{2}(2|\alpha_{1}|\Gamma)+p_{\max}^{3}\Gamma^{2}=B_{2}.\label{eq:m2_p_up}
\end{eqnarray}
Thus,
\begin{eqnarray*}
\max_{p}|h_{5}^{k}| & \leq & B_{2}\frac{k}{2^{k}},\\
\max_{p}|h_{6}^{k}| & \leq & B_{1}\frac{k+1}{2^{k+1}}.
\end{eqnarray*}
Summing up all the $h^{k}$ for different values of $k$ gives the
maximum of $\left|[\uparrow p^{2}]P_{0}(r)\right|$, except for the
last term $(q+pU)^{t}$, which we analyse now, in exactly the same
manner as the functions analyzed above. \footnote{The fact that the factor $qU$ does not appear in the last term, i.e.,
the last term is not $qU(q+pU)^{t}$ but only $(q+pU)^{t}$ is the
main reason our calculations work out. And note that this factor does
not appear in the last term exactly because the tree is $t$-clipped.}

\[
[\uparrow p^{2}](q+pU)^{t}=h_{1}^{t}+h_{2}^{t}+h_{3}^{t}+h_{4}^{t}+h_{5}^{t}+h_{6}^{t}
\]
where 
\begin{eqnarray*}
h_{1}^{t} & = & \frac{1}{p^{2}}\left((q+pU)^{t}-q^{t}-tq^{t-1}pU\right)\\
h_{2}^{t} & = & \frac{1}{p^{2}}\left(\frac{1}{2^{t}}\left((1-p)^{t}-1+tp\right)\right)\\
h_{3}^{t} & = & \frac{1}{p}\left(\frac{tU}{2^{t-1}}\left((1-p)^{t}-1\right)\right)\\
h_{4}^{t} & = & [\uparrow p^{2}]\frac{1}{2^{t}}\\
h_{5}^{t} & = & -[\uparrow p]\frac{t}{2^{t}}\\
h_{6}^{t} & = & [\uparrow p]\frac{t}{2^{t-1}}U.
\end{eqnarray*}
Taking bounds on them in exactly the same manner as shown before,
we get that,
\begin{eqnarray*}
\max_{p}|h_{1}^{t}| & \leq & {t \choose 2}\frac{1}{2^{t-2}}+\frac{1}{C}\\
\max_{p}|h_{2}^{t}| & \leq & {t \choose 2}\frac{1}{2^{t}}\\
\max_{p}|h_{3}^{t}| & \leq & \frac{t^{2}}{2^{t-1}}\\
\max_{p}|h_{4}^{t}| & = & 0\\
\max_{p}|h_{5}^{t}| & = & 0\\
\max_{p}|h_{6}^{t}| & \leq & B_{1}\frac{t}{2^{t-1}}.
\end{eqnarray*}
Note that, 
\begin{eqnarray*}
G_{2}(P_{0}(r)) & = & \max_{p}\left|[\uparrow p^{2}]P_{0}(r)\right|\leq\sum_{j=1}^{6}\sum_{k=0}^{t}\max_{p}|h_{j}^{k}|.
\end{eqnarray*}
Summing the first three terms over all $k$ and bounding them using
the Lemma \ref{claim:basic_ineqs}, we get,

\begin{eqnarray*}
\sum_{k=0}^{t}\max_{p}|h_{1}^{k}| & \leq & 2\sum_{k=0}^{t-1}{k \choose 2}\frac{1}{2^{k}}+\sum_{k=0}^{t-1}\frac{1}{C}\frac{2^{k}}{2^{t}}+{t \choose 2}\frac{1}{2^{t-2}}+\frac{1}{C}\leq4+\frac{2}{C}+\frac{2t^{2}}{2^{t}}\\
\sum_{k=0}^{t}\max_{p}|h_{2}^{k}| & \leq & \sum_{k=0}^{t-1}{k+1 \choose 2}\frac{1}{2^{k+1}}+{t \choose 2}\frac{1}{2^{t}}\leq2+\frac{\frac{1}{2}t^{2}}{2^{t}}\\
\sum_{k=0}^{t}\max_{p}|h_{3}^{k}| & \leq & \sum_{k=0}^{t-1}\frac{k^{2}}{2^{k}}+\frac{t^{2}}{2^{t-1}}=2\sum_{k=0}^{t-1}{k \choose 2}\frac{1}{2^{k}}+\sum_{k=0}^{t-1}\frac{k}{2^{k}}+\frac{t^{2}}{2^{t-1}}\leq6+\frac{2t^{2}}{2^{t}}
\end{eqnarray*}
and thus,
\begin{eqnarray*}
D_{0} & = & \sum_{k=0}^{t}\max_{p}|h_{1}^{k}|+\sum_{k=0}^{t}\max_{p}|h_{2}^{k}|+\sum_{k=0}^{t}\max_{p}|h_{3}^{k}|\\
 & \leq & 12+\frac{2}{C}+\frac{9}{2}\frac{t^{2}}{2^{t}}\\
 & \leq & 20,
\end{eqnarray*}
where in the last line, we assumed that $C\geq1$. For the fourth
term, we get,
\[
D_{1}=\sum_{k=0}^{t-1}\max_{p}|h_{4}^{k}|\leq\Gamma\sum_{k=0}^{t-1}\frac{1}{2^{k+1}}=\Gamma\left(1-\frac{1}{2^{t}}\right).
\]
For the last two terms, again bounding using Lemma \ref{claim:basic_ineqs},
we get, 
\[
D_{2}=\max_{p}\sum_{k=0}^{t-1}h_{5}^{k}\leq2B_{2},
\]
and 
\[
D_{3}=\max_{p}\sum_{k=0}^{t}h_{6}^{k}\leq B_{1}\left(2+\frac{t}{2^{t-1}}\right)\leq3B_{1}.
\]
We fix the constants now. Let 
\[
\Gamma=H2^{2t}
\]
where $H$ is some constant, and recall that we had set 
\[
p_{\max}=\frac{1}{C2^{t}}.
\]
With these constants, the values of $B_{1}$ and $B_{2}$ can be bound
as follows, using Lemma \ref{lem:pcoeff_upperbound} for bounding
$\alpha_{1}$: 
\begin{eqnarray*}
B_{1} & = & |\alpha_{1}|+p_{\max}\Gamma\\
 & \le & 2\cdot2^{t}+\frac{H}{C}2^{t}
\end{eqnarray*}
and
\begin{eqnarray*}
B_{2} & = & 2|\alpha_{1}|+p_{\max}(|\alpha_{1}|^{2}+\Gamma)+p_{\max}^{2}(2|\alpha_{1}|\Gamma)+p_{\max}^{3}\Gamma^{2}\\
 & \leq & 4\cdot2^{t}+\frac{1}{C}2^{t}+\frac{H}{C}2^{t}+\frac{2H}{C^{2}}2^{t}+\frac{H^{2}}{C^{3}}2^{t}.
\end{eqnarray*}
To show the induction, we need to show that
\begin{eqnarray*}
D_{0}+D_{1}+D_{2}+D_{3} & \leq & \Gamma
\end{eqnarray*}
or
\begin{eqnarray*}
20+\Gamma\left(1-\frac{1}{2^{t}}\right)+2B_{2}+3B_{1} & \leq & \Gamma
\end{eqnarray*}
which is equivalent to
\[
20+14\cdot2^{t}+\frac{5H}{C}2^{t}+\frac{2}{C}2^{t}+\frac{4H}{C^{2}}2^{t}+\frac{2H^{2}}{C^{3}}2^{t}\leq H2^{t}
\]
which after taking $2^{-t}20\leq10$ is implied by
\[
24+\frac{5H}{C}+\frac{2}{C}+\frac{4H}{C^{2}}+\frac{2H^{2}}{C^{3}}\leq H.
\]
The equation stated above is satisfied by $H=30$ and $C=420$, and
it proves the claim. Note that this gives 
\[
c_{p}=\frac{1}{C}=\frac{1}{420}
\]
in Lemma \ref{thm:lower_bound}. The proof is exactly the same for
$P_{1}(r)$ and we omit it. 
\end{proof}

\subsubsection{Lower bound}

We finally have enough tools to prove the lower bound for the $d=1$
case. We restate Lemma \ref{lem:d_1_lowerbound} here for convenience. 
\begin{lem}
\ref{lem:d_1_lowerbound}For some universal constants $c_{0}$ and
$c_{p}$, for $r\in\Omega(2^{t})$ and $0\leq p\leq c_{p}2^{-t}$,
if $\rho$ is a random $p$-restriction, then
\[
\Pr_{\rho}[\text{DT}_{\text{depth}}(\Xi_{t}(r)|_{\rho})\geq1]\geq c_{0}p2^{t}.
\]
\end{lem}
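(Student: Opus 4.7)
The plan is to work with the polynomial $P_\ast(r) = 1 - P_0(r) - P_1(r)$ in the variable $p$, and to show that its value at any $p \le c_p 2^{-t}$ is at least $c_0 p 2^t$. By Lemma \ref{claim:obs_path_to_0_1}, $P_\ast(r)$ equals $\Pr_\rho[\text{DT}_{\text{depth}}(\Xi_t(r)|_\rho) \ge 1]$. Since $p = 0$ leaves the entire tree unrestricted, $P_\ast(r)$ has zero constant term. So the strategy is to extract a sizable negative coefficient of $p$ in $P_0(r) + P_1(r)$ (yielding a positive coefficient of $p$ in $P_\ast(r)$), then argue that the higher-degree tail cannot wipe out this linear term when $p \le c_p 2^{-t}$.

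First, I would derive the polynomial recurrences for $P_0(r)$ and $P_1(r)$ in the style of Lemma \ref{lem:p0r_and_p1r}, by partitioning according to the least index $k$ among the level-$1$ variables with $\rho(x_k) = 1$, and exploiting the independence of the $\neg\Xi_t(r-1)$-subtrees sitting above the $1$-edges. Plugging $p = 0$ into these recurrences produces a two-term linear recurrence in the constants $[1]P_0(r)$ and $[1]P_1(r)$, easily solved in closed form with ratio $\beta = 1 - 2^{-t}$; this immediately yields $[1]P_0(r) + [1]P_1(r) = 1$, i.e.\ $[1]P_\ast(r) = 0$, as expected.

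The main obstacle is getting the correct sign and magnitude of $[p](P_0(r) + P_1(r))$. I would differentiate the recurrence from Lemma \ref{lem:p0r_and_p1r} with respect to $p$ at $p = 0$ to obtain a scalar recurrence for $g(r) = [p](P_0(r) + P_1(r))$ whose forcing term involves $[1]P_1(r-1)$ and $[1]P_1(r-1)^2$ (all previously computed). Iterating this recurrence gives a closed form of the shape $g(r) = \beta^{r-1} g(1) + A(r) + B(r)$, where $A(r)$ carries a dominant $-\Theta(2^t)$ coming from the $U^2$ and $V^2$ terms, while $B(r) = O(t \beta^r)$ is a secondary piece. Taking $r \gtrsim C \cdot 2^t$ for a sufficiently large absolute $C$ ensures $\beta^r \le e^{-C}$ is negligible, so that $A(r)$ wins and $g(r) \le -\tfrac{1}{6} 2^t$. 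The bookkeeping required to keep the $t \beta^r$ correction from drowning the $2^t$ main term is the delicate part.

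Finally, I would inductively bound the tail by showing $G_2(P_0(r) + P_1(r)) \le H \cdot 2^{2t}$ for an absolute constant $H$. Expanding the recurrence for $P_0(r)$ and $P_1(r)$ and splitting $[\uparrow p^2]$ of each summand $qU(q + pU)^k$ into a few pieces, each piece can be bounded using Lemma \ref{claim:basic_ineqs} (parts $(5)$ and $(6)$) together with the inductive bounds $|\alpha_0| \le 1$, $|\alpha_1| \le 2 \cdot 2^t$ on the constant and linear coefficients of $P_1(r-1)$. The induction closes provided $H/C$ is small, which is exactly what pins down $c_p = 1/C$. Combining the three steps,
\[
P_\ast(r) \;\ge\; p \cdot \tfrac{1}{6} 2^t - p^2 \cdot H \cdot 2^{2t} \;=\; p 2^t \bigl(\tfrac{1}{6} - H p 2^t\bigr),
\]
and for $p \le c_p 2^{-t}$ with $H = 30$ and $c_p = 1/420$, the parenthesis is a positive absolute constant, giving $P_\ast(r) \ge c_0 p 2^t$ as required.
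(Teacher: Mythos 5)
Your proposal is correct and follows essentially the same route as the paper: the same decomposition $P_\ast(r)=1-P_0(r)-P_1(r)$, the same recurrences, the same three-part analysis (constant coefficient summing to $1$, linear coefficient bounded via the $\beta^{r-1}g(1)+A(r)+B(r)$ split to get $g(r)\le-\tfrac{1}{6}2^{t}$ for $r\in\Omega(2^{t})$, and the inductive $G_{2}\le H\cdot2^{2t}$ tail bound fixing $c_{p}=1/C$), and the same final combination. The only difference is a trivial factor of $2$ in how the tail bound enters the last inequality, which does not affect the conclusion.
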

\begin{proof}
Note that the probability that the decision tree with $r$ levels
has depth more than or equal to 1, i.e. it does not become constantly
0 or 1 after being hit by a random restriction is given by 
\begin{eqnarray*}
P_{*}(r) & = & 1-P_{0}(r)-P_{1}(r).
\end{eqnarray*}
For $r\in\Omega(2^{t})$, we can write 
\begin{eqnarray*}
P_{*}(r) & = & 1-[1]\left(P_{0}(r)+P_{1}(r)\right)-\left([p]P_{0}(r)+[p]P_{1}(r)\right)p-\left([\uparrow p^{2}]\left(P_{0}(r)+P_{1}(r)\right)\right)p^{2}\\
 & = & -\left([p]P_{0}(r)+[p]P_{1}(r)\right)p-\left([\uparrow p^{2}]\left(P_{0}(r)+P_{1}(r)\right)\right)p^{2}\\
 & \geq & c_{1}2^{t}p-p_{\max}2\Gamma p\\
 & = & p2^{t}\left(\frac{1}{6}-\frac{2H}{C}\right)\\
 & \geq & \frac{1}{42}p2^{t}
\end{eqnarray*}
where in the second line we used Lemma \ref{lem:sum_constcoeff_1},
and in the third line we used Lemmas \ref{lem:lem_main_1} and \ref{lem:main_higherpowers}.
Note that we get $c_{0}=\frac{1}{42}$. 
\end{proof}

\subsection{Inductive step}

\subsubsection{The issue with taking $d=0$ as the base case}

There is an issue with using $d=0$ as the base case for induction.
Consider a decision tree $T$ with variable $x_{1}$ as the root having
subtrees $T_{0}$ and $T_{1}$ out of the 0 and 1 edges respectively.
We want to lower bound the probability of the event $E_{T,1}$, the
event that $T$ has depth greater than 1 after a random restriction
$\rho$ is applied to the variables in $T$, i.e.

\[
\Pr_{\rho}[\text{DT}_{\text{depth}}(T|_{\rho})\geq1].
\]
In Lemma \ref{claim:Has_obs}, we had that 
\[
E_{T,1}\subseteq E_{T_{0},0}\cup E_{T_{1},0}
\]
and if $x_{1}$ was assigned $*$ by $\rho$, we could upper bound
$\Pr_{\rho}[E_{T,1}]$ by the probabilities $\Pr_{\rho}[E_{T_{0},0}]$
and $\Pr_{\rho}[E_{T_{1},0}]$, both of which are 1, and there would
be no error on the upper bound on $\Pr[E_{T,1}]$. However, we cannot
lower bound $\Pr[E_{T,1}]$ as 
\[
\Pr[E_{T,1}]\geq\Pr[E_{T_{0},0}]+\Pr[E_{T_{1},0}]-\Pr[E_{T_{0},0}]\Pr[E_{T_{1},0}]
\]
This is because, after applying the random restriction to $T$, if
it assigns $*$ to $x_{1}$, even if $T_{0}$ and $T_{1}$ had depths
greater than 0 under the effect of the restriction, \emph{only if
}$T_{0}|_{\rho}\equiv1$ and $T_{1}|_{\rho}\equiv0$ or $T_{0}|_{\rho}\equiv0$
and $T_{1}|_{\rho}\equiv1$ would $T$ have depth $\geq1$. However,
the event $E_{T_{0},0}$ does not say whether $T_{0}$ evaluates to
0 or 1, and thus we cannot use $d=0$ as the base case.

\subsubsection{Lower bound for $d\geq2$}

Doing the inductive step recursively is tricky. This is because we
already need about $\sim2^{t}$ levels in the tree for the base case
to work. Thus, any induction must take care of the fact that once
the number of levels are too few, the base case would not hold. 

To write the recurrence, let $\gamma_{d}(r)$ be the probability that
$\Xi_{t}(r)$ has depth greater than or equal to $d$. 
\begin{lem}
\label{lem:gamm_rec_induc}Let $\mu=1-\gamma_{d-1}(r-1)$. Then,
\[
\gamma_{d}(r)\geq\sum_{k=1}^{t-1}q\sum_{i=1}^{k}{k \choose i}q^{k-i}p^{i}(1-\mu^{i+1})+\sum_{i=0}^{t}{t \choose i}q^{t-i}p^{i}(1-\mu^{i})+\sum_{k=0}^{t-1}q^{k+1}\gamma_{d}(r-1).
\]
\end{lem}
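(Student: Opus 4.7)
I plan to condition on the random restriction applied to the $t$ level-$1$ variables $x_1,\ldots,x_t$ of $\Xi_t(r)$ and carry out a disjoint case analysis analogous to Lemma~\ref{lem:p0r_and_p1r}, but now lower bounding $\gamma_d(r)$ rather than computing $P_0(r), P_1(r)$ exactly. By construction, each $x_l$ has its $1$-edge going to an independent copy $Q_l$ of $\neg\Xi_t(r-1)$ on a fresh variable set, while the $0$-edges form a chain $x_1\to\cdots\to x_t$ terminating at a level-$1$ leaf. I parameterize by the smallest index $k+1$ (if any) with $\rho(x_{k+1})=1$ and by the number $i$ of $*$-assigned variables among $x_1,\ldots,x_k$. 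This yields three disjoint case families: (C) a first $1$ at position $k+1\in\{1,\ldots,t\}$ with $i=0$, where the restricted tree coincides with $Q_{k+1}|_\rho$; (A) a first $1$ at position $k+1$ with $k\geq 1$ and $i\geq 1$; and (B) no $x_l$ is assigned $1$, with $i$ $*$-assignments among $\{x_1,\ldots,x_t\}$.

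The structural heart of the argument is the following claim: in cases (A) and (B), after the restriction, $T|_\rho$ consists of a chain of the $i$ $*$-assigned level-$1$ variables with a subtree $Q_{j_l}|_\rho$ hanging off each $1$-edge, terminated by $Q_{k+1}|_\rho$ in case (A) or by a $0$-leaf in case (B). If any one of these hanging subtrees $A_{l_0}$ satisfies $\text{DT}_\text{depth}(A_{l_0})\geq d-1$, then $\text{DT}_\text{depth}(T|_\rho)\geq d$. I would prove this via the path-with-split argument sketched in Section~1: concatenate the segment that descends the $0$-edges of $x_{j_1},\ldots,x_{j_{l_0-1}}$ and then takes the $1$-edge of $x_{j_{l_0}}$ into $A_{l_0}$ with a length-$(d-1)$ path-with-split inside $A_{l_0}$ (which exists since $d-1\geq 1$ forces $A_{l_0}$ to be non-constant, so some internal variable of $A_{l_0}$ has both a $0$-leaf and a $1$-leaf below it). Since the $x_{j_l}$'s and the variables of $A_{l_0}$ are disjoint from all other variables in $T|_\rho$ (this disjointness is precisely the defining property of the tree-tribe construction), the concatenated path-with-split has length $l_0+(d-1)\geq d$ and uses variables occurring nowhere else, so $\text{DT}_\text{depth}(T|_\rho)\geq d$.

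The remaining step is to convert this structural claim into the stated inequality using independence. The hanging subtrees $Q_{j_l}|_\rho$, and $Q_{k+1}|_\rho$ in case (A), are restrictions of independent copies of $\neg\Xi_t(r-1)$ on pairwise disjoint variable sets, so their $\text{DT}_\text{depth}\geq d-1$ events are mutually independent and each occurs with probability $\gamma_{d-1}(r-1)=1-\mu$. In case (A) with parameters $(k,i)$ there are $i+1$ such subtrees, so the conditional probability that at least one has depth $\geq d-1$ is $1-\mu^{i+1}$; in case (B) with parameter $i$ there are only $i$ such subtrees (the chain terminates at a $0$-leaf), giving $1-\mu^i$, which also correctly evaluates to $0$ at $i=0$. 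In case (C) the restricted tree equals $Q_{k+1}|_\rho$ and contributes $\gamma_d(r-1)$ directly. Multiplying by the restriction probabilities $q\binom{k}{i}q^{k-i}p^i$, $\binom{t}{i}q^{t-i}p^i$, and $q^{k+1}$ respectively and summing over the three disjoint families produces exactly the three terms in the stated bound.

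The main obstacle is the structural claim: the naive restriction argument (fixing the chain variables to route through $A_{l_0}$) only shows $\text{DT}_\text{depth}(T|_\rho)\geq\text{DT}_\text{depth}(A_{l_0})\geq d-1$, which falls one short. The path-with-split viewpoint is essential since it lets me add the $\geq 1$ chain queries to the $\geq d-1$ queries inside $A_{l_0}$, and the key reason this addition is legitimate is that the tree-tribe construction guarantees that no variable on the combined path appears anywhere else in $T|_\rho$.
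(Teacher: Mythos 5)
Your case decomposition is exactly the paper's: you partition on the position of the first level-$1$ variable assigned $1$ and on the number $i$ of $*$-assigned variables before it, your cases (A), (B), (C) correspond to the paper's steps 1, 2, 3, and your independence computation (the hanging copies of $\neg\Xi_t(r-1)$ live on disjoint variable sets, so "at least one of $i{+}1$, resp. $i$, has depth $\geq d-1$" has conditional probability $1-\mu^{i+1}$, resp. $1-\mu^{i}$) reproduces the three terms of the recurrence exactly. You also correctly identify the one point that needs real work: the naive restriction argument only yields $\text{DT}_{\text{depth}}(T|_\rho)\geq d-1$, and one must account for the extra query of the $*$-assigned chain variable. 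The paper simply asserts the resulting structural claim without proof, so at the paper's own level of rigor your proposal is a faithful reconstruction.

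However, your proposed justification of that structural claim does not work as written. You claim that $\text{DT}_{\text{depth}}(A_{l_0})\geq d-1$ yields a length-$(d-1)$ path-with-split \emph{inside} $A_{l_0}$, and you justify this only by noting that $A_{l_0}$ is non-constant. Non-constancy gives a single vertex with both a $0$-leaf and a $1$-leaf below it, i.e.\ a path-with-split of length $1$; it does not give a path of $d-1$ distinct chained variables terminating in a split, and in general a function of decision-tree depth $\geq d-1$ need not contain such a configuration (the induction hypothesis is a statement about $\text{DT}_{\text{depth}}$, not about the existence of path-with-splits, so you cannot pull the longer path out of $A_{l_0}$ for free). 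The clean way to close this is to avoid path-with-splits entirely: restrict the chain variables above $x_{j_{l_0}}$ to $0$, reducing $T|_\rho$ to a function of the form $x_{j_{l_0}}\,?\,A_{l_0}:B$ on disjoint variable sets, and then prove directly (by an adversary argument, or by the additivity of deterministic decision-tree depth under composition with a selector/AND over disjoint variables) that $\text{DT}_{\text{depth}}(x\,?\,A:B)\geq 1+\text{DT}_{\text{depth}}(A)$. That one-variable gain is exactly what turns $d-1$ into $d$, and it needs only $\text{DT}_{\text{depth}}(A_{l_0})\geq d-1$, which is what the induction hypothesis actually supplies.
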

\begin{proof}
Consider the left most branch that takes the root to a leaf, containing
variables $X=\{x_{1},\ldots,x_{t}\}$ where $x_{1}$ is the root,
and denote $X_{i,j}=\{x_{i},\ldots,x_{j}\}$ for $i\leq j$. We will
count again by partitioning based on when the first 1 appears. The
counting happens in three steps.

1. Let $k$ be the least index such that $\rho(x_{k})=1$, and let
$S\subseteq X_{1,k-1}$ be the variables assigned $*$ by $\rho$,
where $|S|\not=0$. In this case , if any subtree out of the 1 edge
of any variable in $S$ has depth greater than $d-1$ when restricted
to $\rho$, it would imply that $T|_{\rho}$ has depth at least $d$.
In fact, since $|S|\not=0$, even if the subtree out of the 1 edge
of $x_{k}$ has depth greater than $d-1$, it would imply that $T|_{\rho}$
has depth greater than $d$. Thus, at least one of the $|S|+1$ subtrees,
all of which are uncorrelated, should have depth greater than $d-1$. 

2. Let the set $S\subseteq X_{1,t}$ be the variables assigned $*$
by $\rho$, where $|S|\not=0$, and no variable is assigned 1. In
this case, the are $|S|$ subtrees, at least one of which must have
depth greater than $d-1$ for $T|_{\rho}$ to have depth greater than
$d$. 

3. If no variables are assigned $*$ by $\rho$, then if $k$ is the
least index such that $\rho(x_{k})=1$, the subtree out of the 1 edge
of $x_{k}$ must have depth greater than $d$. 

Writing out the sums for each of the three partitions as mentioned
above and summing over least index $k\in\{0,\ldots,t-1\}$ such that
$x_{k+1}=1$ gives the recurrence. \end{proof}
\begin{lem}
\label{lem:induclem}For $r\in\Omega(d2^{t})$, $\gamma_{d}(r)\geq(c_{0}p2^{t})^{d}$. \end{lem}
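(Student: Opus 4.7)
The plan is to prove Lemma~\ref{lem:induclem} by induction on $d$, taking Lemma~\ref{lem:d_1_lowerbound} as the base case $d=1$ (valid for $r \in \Omega(2^{t})$) and using the recurrence of Lemma~\ref{lem:gamm_rec_induc} to do the inductive step. Writing the recurrence schematically as
\[
\gamma_{d}(r) \geq R(r) + L \cdot \gamma_{d}(r-1), \qquad L = \sum_{k=0}^{t-1} q^{k+1} = q\,\frac{1-q^{t}}{1-q},
\]
where $R(r)$ aggregates the first two sums, the goal is to use the inductive hypothesis on $\gamma_{d-1}$ (which enters $R(r)$ via $\mu = 1-\gamma_{d-1}(r-1)$) together with the geometric factor from iterating in $L$ to extract an extra factor of $c_{0} p 2^{t}$.

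The key step is to lower bound $R(r)$. Since $\mu = 1 - \gamma_{d-1}(r-1) \in [0,1]$, every factor $1 - \mu^{i+1}$ appearing in $R(r)$ satisfies $1 - \mu^{i+1} \geq 1 - \mu = \gamma_{d-1}(r-1)$, which lets me factor $\gamma_{d-1}(r-1)$ out of $R(r)$:
\[
R(r) \geq \gamma_{d-1}(r-1) \cdot \Bigl( \sum_{k=1}^{t-1} q\bigl((q+p)^{k} - q^{k}\bigr) + (q+p)^{t} - q^{t} \Bigr).
\]
Using the convexity bound $(q+p)^{k} - q^{k} \geq k q^{k-1} p$ and the fact that $q = (1-p)/2$ is close to $\tfrac{1}{2}$ for $p \leq c_{p} 2^{-t}$, the bracketed expression is at least $c_{3} p$ for a universal constant $c_{3}>0$ (for $t=1$ the bracket reduces to $(q+p) - q = p$, and for $t \geq 2$ the $k=1$ term of the first sum alone contributes $q^{2} p \geq p/5$). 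Hence $R(r) \geq c_{3} p \cdot \gamma_{d-1}(r-1)$.

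Next, I iterate the recurrence $K$ times to obtain
\[
\gamma_{d}(r) \geq c_{3} p \sum_{j=0}^{K} L^{j}\, \gamma_{d-1}(r-j-1) + L^{K+1} \gamma_{d}(r-K-1),
\]
and apply the inductive hypothesis to each $\gamma_{d-1}(r-j-1)$ (valid so long as $r-K-1 \in \Omega((d-1) 2^{t})$). Choosing $K = \Theta(2^{t})$ makes $L^{K+1} \leq 1/2$, and a direct computation gives $1 - L = (p + q^{t+1})/(1-q) \leq C_{2} \cdot 2^{-t}$ under $p \leq c_{p} 2^{-t}$, so $1/(1-L) \geq 2^{t}/C_{2}$. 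Combining,
\[
\gamma_{d}(r) \geq \frac{c_{3}}{2 C_{2}} \cdot p \cdot 2^{t} \cdot (c_{0} p 2^{t})^{d-1},
\]
and selecting the universal constant $c_{0}$ small enough that $c_{0} \leq c_{3}/(2 C_{2})$ (and $c_{0} \leq 1/42$ from the base case) closes the induction. The final size requirement $r \in \Omega(d \cdot 2^{t})$ comes from needing $r - K - 1 \in \Omega((d-1) 2^{t})$ together with $K = \Theta(2^{t})$.

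The main obstacle will be the constant bookkeeping: the same universal $c_{0}$ must simultaneously satisfy the base case estimate from Lemma~\ref{lem:d_1_lowerbound} and the ratio $c_{3}/(2 C_{2})$ produced by the inductive step, so $c_{0}$ must be taken as the minimum of the two. A secondary technical point is confirming that the $O(2^{t})$ additive loss in $r$ per inductive step does not compound into something worse than $\Omega(d \cdot 2^{t})$; this is automatic because the loss in $r$ is additive rather than multiplicative in $d$.
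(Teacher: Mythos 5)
Your proposal is correct and follows essentially the same route as the paper's proof: the key inequality $1-\mu^{i+1}\geq 1-\mu=\gamma_{d-1}$, iterating the recurrence $\Theta(2^{t})$ times so that the geometric factor $\sum_{j}L^{j}=\frac{1-L^{K+1}}{1-L}$ contributes the needed $\Omega(2^{t})$, and closing the induction by taking $c_{0}$ small enough for both the base case and the inductive step. The only (harmless) difference is that you keep both sums of Lemma \ref{lem:gamm_rec_induc} in $R(r)$ and so handle $t=1$ and $t\geq 2$ uniformly, whereas the paper treats them as separate cases using the first sum for $t\geq2$ and the second for $t=1$.
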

\begin{proof}
Let $\Delta=c_{0}p2^{t}$ and $\mu_{j}=1-\gamma_{d-1}(r-j)$. The
expression for $\gamma_{d}(r)$ from Lemma \ref{lem:gamm_rec_induc}
is reproduced here. 

\[
\gamma_{d}(r)\geq\sum_{k=1}^{t-1}q\sum_{i=1}^{k}{k \choose i}q^{k-i}p^{i}(1-\mu_{1}^{i+1})+\sum_{i=0}^{t}{t \choose i}q^{t-i}p^{i}(1-\mu_{1}^{i})+\sum_{k=0}^{t-1}q^{k+1}\gamma_{d}(r-1).
\]

\noindent Let 
\begin{eqnarray*}
A(r-1) & = & \sum_{k=1}^{t-1}q\sum_{i=1}^{k}{k \choose i}q^{k-i}p^{i}(1-\mu_{1}^{i+1}),\\
B(r-1) & = & \sum_{i=0}^{t}{t \choose i}q^{t-i}p^{i}(1-\mu_{1}^{i}),\\
\theta & = & \sum_{k=0}^{t-1}q^{k+1}
\end{eqnarray*}
\textbf{Case $t\geq2$:} For this case, we will consider the terms
$A(r-1)$ and $\theta\gamma_{d}(r-1)$. Expanding the expression for
$m$ terms, we get, 
\begin{eqnarray}
\gamma_{d}(r) & \geq & A(r-1)+\theta\gamma_{d}(r-1)\nonumber \\
 & \geq & \sum_{i=1}^{m}\theta^{i-1}A(r-i)+\theta^{m}\gamma_{d}(r-m)\nonumber \\
 & \geq & \sum_{i=1}^{m}\theta^{i-1}A(r-i)\label{eq:mid_eq_1_lowinduc}
\end{eqnarray}
where we used the fact that $\gamma_{d}(r-m)\geq0$. For $1\leq i\leq m-1$,
let $\gamma_{d-1}(r-i)\geq\Delta{}^{d-1}$. Then we have, 
\begin{eqnarray}
A(r-j) & = & \sum_{k=1}^{t-1}q\sum_{i=1}^{k}{k \choose i}q^{k-i}p^{i}(1-\mu_{j}^{i+1})\nonumber \\
 & \geq & \sum_{k=1}^{t-1}q\sum_{i=1}^{k}{k \choose i}q^{k-i}p^{i}(1-\mu_{j})\nonumber \\
 & \geq & \Delta^{d}\frac{42q}{p2^{t}}\sum_{k=1}^{t-1}\sum_{i=1}^{k}{k \choose i}q^{k-i}p^{i}\label{eq:mideq_2_lowind}
\end{eqnarray}

\noindent where in the second line we used $0\leq\mu\leq1$, and in
the third line we used the inductive hypothesis, $\gamma_{d-1}(r-j)\geq\Delta^{d-1}$
for $1\leq j\leq m$. Consider the term
\begin{eqnarray*}
\frac{1}{p}\sum_{k=1}^{t}\sum_{i=1}^{k}{k \choose i}q^{k-i}p^{i} & = & \frac{1}{p}\sum_{k=1}^{t-1}(p+q)^{k}-q^{k}\\
 & = & \frac{1}{p}\sum_{k=1}^{t-1}\frac{1}{2^{k}}\left((1+p)^{k}-(1-p)^{k}\right)\\
 & = & \frac{1}{p}\sum_{k=1}^{t-1}\frac{1}{2^{k}}\sum_{i=0}^{k}{k \choose i}\left(p{}^{i}-(-p)^{i}\right)\\
 & = & \frac{1}{p}\sum_{k=1}^{t-1}\frac{1}{2^{k}}\sum_{i=0,\text{ \ensuremath{i} is odd}}^{k}2{k \choose i}p{}^{i}\\
 & \geq & \sum_{k=1}^{t-1}\frac{2k}{2^{k}}\\
 & = & 2\left(1-\frac{t+1}{2^{t}}\right)\\
 & \geq & \frac{1}{2}
\end{eqnarray*}
where in the first inequality, we used the fact that $p\geq0$ and
picked only the largest term, used Lemma \ref{claim:basic_ineqs}
for the summation in the second last line, and used $t\geq2$ in the
last line. Substituting the above expression in \ref{eq:mideq_2_lowind},
we can write
\[
A(r-j)\geq\Delta^{d}\frac{21q}{2^{t}},
\]
and substituting it in equation \ref{eq:mid_eq_1_lowinduc}, we get
,
\begin{equation}
\gamma_{d}(r)\geq\Delta^{d}\frac{21q}{2^{t}}\left(\frac{1-\theta^{m}}{1-\theta}\right).\label{eq:mideq_3}
\end{equation}

\noindent Note that 
\begin{eqnarray*}
(1-p)^{t} & \geq & \left(1-c_{p}2^{-t}\right)^{t}\geq\frac{1}{2},
\end{eqnarray*}

\noindent and using $q\leq\frac{1}{2}$, we get 
\[
\theta=\frac{q}{1-q}(1-q^{t})\leq1-\frac{(1-\frac{1}{C2^{t}})^{t}}{2^{t}}\leq1-\frac{1}{2\cdot2^{t}}.
\]
Thus for $m=2\cdot2^{t}$, 
\[
\theta^{m}\leq\frac{1}{e},
\]

\noindent and further, 

\begin{eqnarray}
\frac{1-\theta^{m}}{1-\theta} & = & (1-\theta^{m})\frac{1-q}{1-2q+q^{t+1}}\nonumber \\
 & \geq & \frac{1}{2}\frac{\frac{1}{2}}{\frac{1}{C2^{t}}+\frac{1}{2^{t+1}}}\nonumber \\
 & \geq & 2^{t}\left(\frac{\frac{1}{4}}{\frac{1}{2}+\frac{1}{C}}\right)\nonumber \\
 & \geq & 2^{t}\frac{1}{4},\label{eq:theta_bound}
\end{eqnarray}
and substituting it in equation \ref{eq:mideq_3}, and noting that
for $t\geq2$, since $q\geq\frac{1}{2}-\frac{1}{4C}\geq\frac{3}{8}$,
we get,
\[
\gamma_{d}(r)\geq\Delta^{d}\frac{21}{2^{t}}\frac{3}{8}2^{t}\frac{1}{4}\geq\Delta^{d}
\]
as required. \\

\noindent \textbf{Case $t=1$:} For the case of $t=1$, we use the
terms $B(r-1)$ and $\theta\gamma_{d}(r-1)$, and noting that for
$1\leq i\leq m-1$, 
\[
B(r-i)\geq p\Delta^{d-1},
\]
we get by using \ref{eq:theta_bound}, 
\begin{eqnarray*}
\gamma_{d}(r) & \geq & B(r-1)+\theta\gamma_{d}(r-1)\\
 & \geq & \sum_{i=1}^{m}\theta^{i-1}B(r-i)\\
 & \geq & \Delta^{d}\left(\frac{42}{p2}\right)p\left(\frac{1-\theta^{m}}{1-\theta}\right)\\
 & \geq & \Delta^{d}
\end{eqnarray*}
as required.
\end{proof}
In the proof of Lemma \ref{lem:induclem}, we use $O(2^{t})$ steps
to take the induction from $d$ to $d-1$ levels, and the final $d=1$
case can also be carried out with $O(2^{t})$ levels. Thus, the maximum
depth uptil which our conclusions hold is 
\[
d2^{t}\leq O(r)\leq c_{d}\frac{\log n}{\log t}
\]
or 
\[
d\leq c_{d}\left(\frac{\log n}{2^{t}\log t}\right).
\]
This concludes the proof of Theorem \ref{thm:lower_bound}. 
\begin{rem}
If we unroll the induction and see how it worked, for depth $d$,
we find some vertex that is unset by $\rho$ and is connected to a
tree that has depth $d-1$ under the action of $\rho$. For depth
$d-1$, we again find an unset variable connected to a tree of depth
$d-2$. This carries on until the last step, where we want to find
a vertex that has depth greater than or equal to 1, i.e., has a path
to both a 0-leaf and a 1-leaf. Thus, as described in the introduction,
the whole proof essentially finds a \emph{``path with a split''.} 
\end{rem}

\section*{Acknowledgements}

I would like to thank Jaikumar Radhakrishnan, Prahladh Harsha and
Swagato Sanyal for many discussions related to Hastad's switching
lemma.

\bibliographystyle{alpha}
\bibliography{ref}

\appendix

\section{\label{sec:app_low_bound_proofs}Proofs from Section \ref{sec:Lower-Bounds}}
\begin{lem}
\label{lem:app_coeff_consts_rec}\ref{lem:coeff_consts_rec}The recursive
expressions for $[1]P_{0}(r)$ and $[1]P_{1}(r)$ are as follows:
\begin{eqnarray*}
[1]P_{0}(1) & = & \frac{1}{2^{t}}\\{}
[1]P_{1}(1) & = & 1-\frac{1}{2^{t}},
\end{eqnarray*}
and
\begin{eqnarray*}
[1]P_{0}(r) & = & \left(1-\frac{1}{2^{t}}\right)[1]P_{1}(r-1)+\frac{1}{2^{t}},\\{}
[1]P_{1}(r) & = & \left(1-\frac{1}{2^{t}}\right)[1]P_{0}(r-1).
\end{eqnarray*}
\end{lem}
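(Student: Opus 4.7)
The plan is to observe that for any polynomial $Q\in\mathbb{R}[p]$, the constant coefficient $[1]Q$ equals $Q(0)$. Thus to prove both the base cases and the recurrences, I simply substitute $p=0$ into the explicit closed-form expressions for $P_0(r)$ and $P_1(r)$ derived in Lemmas \ref{lem:p01_and_p11} and \ref{lem:p0r_and_p1r}. Note that when $p=0$, we have $q=\tfrac{1-p}{2}=\tfrac{1}{2}$, and also $pU=pV=0$ so $q+pU=q+pV=\tfrac{1}{2}$, regardless of what $U=P_1(r-1)$ and $V=P_0(r-1)$ are as polynomials.

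For the base case, $P_0(1)=q^t$ evaluates at $p=0$ to $(1/2)^t=1/2^t$, and $P_1(1)=1-(1-q)^t=1-\bigl(\tfrac{1+p}{2}\bigr)^t$ evaluates at $p=0$ to $1-1/2^t$, giving equations \ref{eq:p01_const} and \ref{eq:p11_const}. For the inductive expressions, since $U$ and $V$ are themselves polynomials in $p$, I write $U(0)=[1]P_1(r-1)$ and $V(0)=[1]P_0(r-1)$. Plugging $p=0$ into
\[
P_0(r)=qU\cdot\frac{1-(q+pU)^t}{1-(q+pU)}+(q+pU)^t
\]
gives $P_0(r)(0)=\tfrac{1}{2}U(0)\cdot\frac{1-(1/2)^t}{1/2}+(1/2)^t=\bigl(1-1/2^t\bigr)[1]P_1(r-1)+1/2^t$, which is equation \ref{eq:const_p0}. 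The identical substitution in the expression for $P_1(r)$ yields $\bigl(1-1/2^t\bigr)[1]P_0(r-1)$, giving \ref{eq:const_p1}.

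There is no real obstacle here: the only tiny subtlety is the apparent $1-(q+pU)$ in a denominator, which is fine because one can expand the geometric sum $\sum_{i=0}^{t-1}(q+pU)^i$ and evaluate term-by-term at $p=0$, each term contributing $(1/2)^i$. Thus the ratio formulation and the summation formulation agree, and the proof reduces to straightforward evaluation.
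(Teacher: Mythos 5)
Your proposal is correct and is essentially the paper's own argument: observing that $[1]Q=Q(0)$ is the same as the paper's use of the fact that $[1]$ is linear and multiplicative (evaluation at $p=0$ is exactly that ring homomorphism), and both proofs then reduce to the same substitutions $q\mapsto\tfrac{1}{2}$ and $pU,pV\mapsto 0$ in Lemmas \ref{lem:p01_and_p11} and \ref{lem:p0r_and_p1r}. Your remark about expanding the geometric sum $\sum_{i=0}^{t-1}(q+pU)^{i}$ rather than evaluating the rational form correctly handles the only minor subtlety.
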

\begin{proof}
From claim \ref{claim:poly_props}, note that the operator $[1]$
is both linear and multiplicative, i.e. 
\begin{eqnarray}
[1](Q+R) & = & [1]Q+[1]R,\nonumber \\{}
[1]QR & = & [1]Q[1]R.\label{eq:=00005B1=00005D_is_lin_mul-1}
\end{eqnarray}
To compute the constant coefficients, we start by the base case, $P_{0}(1)$
and $P_{1}(1)$. 

\[
[1]P_{0}(1)=[1]q^{t}=\frac{1}{2^{t}}
\]
and 
\[
[1]P_{1}(1)=[1]1-\left([1]\frac{1+p}{2}\right)^{t}=1-\frac{1}{2^{t}}.
\]
For the general case of any $r$, using only the equations \ref{eq:=00005B1=00005D_is_lin_mul-1}
and applying it to \ref{eq:p0r}, we get, 
\begin{eqnarray*}
[1]P_{0}(r) & = & [1]\left(qU\sum_{k=0}^{t-1}(q+pU)^{k}+(q+pU)^{t}\right)\\
 & = & [1]q[1]U\sum_{k=0}^{t-1}([1]q+[1]pU)^{k}+([1]q+[1]pU)^{t}\\
 & = & \frac{1}{2}[1]U\sum_{k=0}^{t-1}2^{-k}+2^{-t}\\
 & = & \left(1-\frac{1}{2^{t}}\right)[1]P_{1}(r-1)+\frac{1}{2^{t}}.
\end{eqnarray*}
And similarly, 
\[
[1]P_{1}(r)=\left(1-\frac{1}{2^{t}}\right)[1]P_{0}(r-1).
\]
\end{proof}
\begin{lem}
\label{lem:app_coeff_consts}\ref{lem:coeff_const_exact}Let $\alpha=2^{-t}$
and $\beta=1-\alpha$. Then the exact expressions for $[1]P_{0}(r)$
and $[1]P_{1}(r)$ are as follows. If $r=2k+1$, 
\begin{eqnarray}
[1]P_{0}(r) & = & \frac{1-\beta^{r+1}}{1+\beta},\nonumber \\{}
[1]P_{1}(r) & = & \frac{\beta+\beta^{r+1}}{1+\beta},\label{eq:exact_const_p1_odd-1}
\end{eqnarray}
and if $r=2k$, 
\begin{eqnarray*}
[1]P_{0}(r) & = & \frac{1+\beta^{r+1}}{1+\beta},\\{}
[1]P_{1}(r) & = & \frac{\beta-\beta^{r+1}}{1+\beta}.
\end{eqnarray*}
\end{lem}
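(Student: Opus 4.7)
The plan is to prove the four closed-form identities by a straightforward induction on $r$, using the recurrences for $[1]P_0(r)$ and $[1]P_1(r)$ already established in Lemma \ref{lem:coeff_consts_rec}. The cleanest way is to track the parity of $r$ explicitly, since the recurrences couple $P_0$ and $P_1$ across a single level, so a one-step induction naturally alternates between the odd and even formulas.

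For the base case $r=1$ (odd), I would check that the proposed formula reduces to the known values: $(1-\beta^2)/(1+\beta) = 1-\beta = \alpha$ agrees with $[1]P_0(1) = \alpha$, and $(\beta+\beta^2)/(1+\beta) = \beta$ agrees with $[1]P_1(1) = \beta = 1-\alpha$. For the inductive step, suppose the formulas hold at $r-1$. If $r$ is odd then $r-1$ is even, so the inductive hypothesis gives $[1]P_0(r-1) = (1+\beta^r)/(1+\beta)$ and $[1]P_1(r-1) = (\beta-\beta^r)/(1+\beta)$. Plugging into the recurrence yields
\[
[1]P_1(r) = \beta\cdot\frac{1+\beta^r}{1+\beta} = \frac{\beta+\beta^{r+1}}{1+\beta},
\]
and
\[
[1]P_0(r) = \beta\cdot\frac{\beta-\beta^r}{1+\beta} + \alpha = \frac{\beta^2 - \beta^{r+1} + \alpha(1+\beta)}{1+\beta} = \frac{1-\beta^{r+1}}{1+\beta},
\]
where the last equality uses $\alpha+\beta=1$, so $\alpha+\alpha\beta+\beta^2 = \alpha + \beta(\alpha+\beta) = \alpha+\beta = 1$. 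The case where $r$ is even is symmetric, handled identically by plugging the odd-$r-1$ hypothesis into the same recurrences.

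An equivalent route, which I would only mention as a sanity check rather than execute, is to decouple the system: substituting $[1]P_1(r-1)=\beta\,[1]P_0(r-2)$ into the recurrence for $[1]P_0(r)$ gives the scalar second-order recurrence $[1]P_0(r) = \beta^2\,[1]P_0(r-2) + \alpha$, whose fixed point is $\alpha/(1-\beta^2) = 1/(1+\beta)$ and whose homogeneous part contributes a multiple of $\beta^r$; matching initial conditions at $r=1$ and $r=2$ reproduces the four formulas. There is no genuine obstacle here, so I do not expect any hard step: the only mild care required is to keep the parity bookkeeping consistent, since the formulas for $[1]P_0$ and $[1]P_1$ swap sign on $\beta^{r+1}$ depending on whether $r$ is even or odd.
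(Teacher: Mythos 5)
Your proof is correct. The route differs slightly from the paper's: the paper first decouples the system by substituting the recurrence for $[1]P_1(r-1)$ into that for $[1]P_0(r)$, obtaining the scalar second-order recurrence $[1]P_0(r)=\beta^2\,[1]P_0(r-2)+\alpha$, unrolls it as a geometric sum from the base case $[1]P_0(1)=\alpha$, and then reads off $[1]P_1(r)$ from Lemma \ref{lem:sum_constcoeff_1} (the identity $[1]P_0(r)+[1]P_1(r)=1$); the even case is then handled by one further application of the recurrence. You instead verify the four closed forms directly by a parity-alternating induction on the coupled first-order system, which avoids any appeal to Lemma \ref{lem:sum_constcoeff_1} and requires no summation, at the cost of having to guess the answer in advance rather than derive it. Both arguments rest entirely on Lemma \ref{lem:coeff_consts_rec} and the base values $[1]P_0(1)=\alpha$, $[1]P_1(1)=\beta$, and your algebra (in particular the cancellation $\beta^2+\alpha(1+\beta)=1$) checks out, so the proof is complete as written; the decoupled second-order recurrence you mention as a sanity check is in fact exactly the paper's derivation.
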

\begin{proof}
To compute the exact equations, assume $r=2k+1$. Substituting equation
\ref{eq:const_p1} in equation \ref{eq:const_p0} and using equation
\ref{eq:base_p0} for the base case of $r=1$, we get 
\begin{eqnarray*}
[1]P_{0}(r) & = & \left(1-\frac{1}{2^{t}}\right)^{2}[1]P_{0}(r-2)+\frac{1}{2^{t}}\\
 & = & \left(1-\frac{1}{2^{t}}\right)^{2k}\frac{1}{2^{t}}+\frac{1}{2^{t}}\left(\frac{1-\left(1-\frac{1}{2^{t}}\right)^{2k}}{1-\left(1-\frac{1}{2^{t}}\right)^{2}}\right)\\
 & = & \frac{1-\beta^{r+1}}{1+\beta},
\end{eqnarray*}
And from Lemma \ref{lem:sum_constcoeff_1}, for $r=2k+1$, 
\[
[1]P_{1}(r)=\frac{\beta+\beta^{r+1}}{1+\beta}.
\]
Similarly, for $r=2k$, we get by using one step of equation \ref{eq:const_p0}
followed by \ref{eq:exact_const_p1_odd-1}, 
\[
[1]P_{0}(r)=\frac{1+\beta^{r+1}}{1+\beta},
\]
and similarly, 
\[
[1]P_{1}(r)=\frac{\beta-\beta^{r+1}}{1+\beta}.
\]
\end{proof}
\begin{lem}
\label{lem:app_coeff_p}\ref{lem:coeff_p}The recurrence relations
for $[p]P_{0}(r)$ and $[p]P_{1}(r)$ are as follows:
\begin{eqnarray*}
[p]P_{0}(1) & = & -\frac{t}{2^{t}}\\{}
[p]P_{1}(1) & = & -\frac{t}{2^{t}}.
\end{eqnarray*}
and
\begin{eqnarray*}
[p]P_{1}(r) & = & \left(1-\frac{1}{2^{t}}\right)[p]P_{0}(r-1)+\left(\frac{t+2}{2^{t}}-2\right)[1]P_{0}(r-1)\\
 &  & +2\left(1-\frac{t+1}{2^{t}}\right)[1]P_{0}(r-1){}^{2}\\{}
[p]P_{0}(r) & = & \left(1-\frac{1}{2^{t}}\right)[p]P_{1}(r-1)+\left(\frac{t+2}{2^{t}}-2\right)[1]P_{1}(r-1)\\
 &  & +2\left(1-\frac{t+1}{2^{t}}\right)[1]P_{1}(r-1){}^{2}-\frac{t}{2^{t}}+\frac{2t}{2^{t}}[1]P_{1}(r-1).
\end{eqnarray*}
\end{lem}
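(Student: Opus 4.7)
The plan is to apply the operators $[1]$ and $[p]$ directly to the closed forms for $P_0(r)$ and $P_1(r)$ derived in Lemma \ref{lem:p0r_and_p1r}, using the linearity property and the product rule $[p](QR) = [p]Q \cdot [1]R + [1]Q \cdot [p]R$ from Lemma \ref{claim:poly_props}. Recall $q = (1-p)/2$, so $[1]q = 1/2$ and $[p]q = -1/2$. The quantities $U = P_1(r-1)$ and $V = P_0(r-1)$ are themselves polynomials in $p$, and throughout I will abbreviate $U_0 = [1]U$, $U_1 = [p]U$, $V_0 = [1]V$, $V_1 = [p]V$.

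For the base case $r=1$, I would simply expand $P_0(1) = q^t = 2^{-t}(1-p)^t$ and $P_1(1) = 1-((1+p)/2)^t = 1 - 2^{-t}(1+p)^t$. Reading off the coefficient of $p$ in each immediately yields $[p]P_0(1) = -t/2^t$ and $[p]P_1(1) = -t/2^t$, matching the claim.

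For the inductive step, the central sub-computations are
\[
[1](q+pU)^k = 2^{-k}, \qquad [p](q+pU)^k = k \cdot 2^{-(k-1)}\bigl(U_0 - \tfrac{1}{2}\bigr),
\]
where the second follows by the product rule applied inductively (or equivalently by differentiating at $p=0$). Summing over $k \in \{0,\ldots,t-1\}$ produces $[1]\sum_k (q+pU)^k = 2(1 - 2^{-t})$ and $[p]\sum_k (q+pU)^k = 2(U_0 - 1/2)\sum_{k=1}^{t-1} k/2^k$, where the last sum evaluates to $2 - (t+1)/2^{t-1}$ by Lemma \ref{claim:basic_ineqs}(2). I would then apply the product rule to $P_1(r) = qV \cdot S(p)$ with $S(p) = \sum_{k=0}^{t-1}(q+pV)^k$, collect terms by their dependence on $V_0$, $V_0^2$, and $V_1$, and verify the identity $-(1-2^{-t}) - \tfrac{1}{2}(2 - (t+1)/2^{t-1}) = (t+2)/2^t - 2$ for the $V_0$ coefficient, together with $2 - (t+1)/2^{t-1} = 2(1-(t+1)/2^t)$ for the $V_0^2$ coefficient. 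For $P_0(r)$, the same procedure handles the $qU \cdot S(p)$ factor, and the additional term $(q+pU)^t$ contributes $[p](q+pU)^t = t \cdot 2^{-(t-1)}(U_0 - 1/2) = (2t/2^t)U_0 - t/2^t$, which accounts for exactly the extra summands $-t/2^t + (2t/2^t)[1]P_1(r-1)$ in the stated recurrence.

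The main obstacle is purely bookkeeping: the two arithmetic identities above must be checked carefully, and one must track the sign arising from $[p]q = -1/2$ throughout the product-rule expansion. Nothing is conceptually subtle; the reason the recurrence has a clean form is that at $p=0$ the quantity $q+pV$ collapses to $1/2$, so the geometric sum $S(0)$ and its derivative both evaluate in closed form using Lemma \ref{claim:basic_ineqs}.
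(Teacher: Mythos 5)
Your proposal is correct and is essentially the same argument as the paper's: both extract the coefficients of $p^0$ and $p^1$ from the closed forms in Lemma \ref{lem:p0r_and_p1r} via the linearity and product rules of Lemma \ref{claim:poly_props}, the only difference being that the paper first expands $(q+pU)^k$ binomially and kills the $i\geq 2$ terms, while you apply the power/product rule directly; the arithmetic identities you flag (including the $V_0$-coefficient check and the extra $(q+pU)^t$ contribution for $P_0(r)$) all verify.
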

\begin{proof}
To evaluate the coefficient of $p$ in the polynomials, first note
that from the claim \ref{claim:poly_props}, the operator $[p]$ acts
as follows on sums and products of polynomials $Q$ and $R$:
\begin{eqnarray}
[p](Q+R) & = & [p]Q+[p]R,\nonumber \\{}
[p]QR & = & [p]Q[1]R+[1]Q[p]R.\label{eq:=00005Bp=00005D_mul-1}
\end{eqnarray}
For the base case of $r=1$, 
\[
[p]P_{0}(1)=[p]q^{t}=-\frac{t}{2^{t}}
\]
and

\begin{eqnarray*}
[p]P_{1}(1) & = & [p]\left(1-(1-q)^{t}\right)=-[p](1-q)^{t}=-\frac{t}{2^{t}}.
\end{eqnarray*}
For any general $r$, we first compute $[p]P_{1}(r)$ using equation
\ref{eq:p1r}. 
\begin{eqnarray*}
[p]P_{1}(r) & = & [p]\left(\sum_{k=0}^{t-1}\sum_{i=0}^{k}{k \choose i}q^{k-i+1}p^{i}\left(P_{0}(r-1)\right)^{i+1}\right)\\
 & = & \sum_{k=0}^{t-1}\sum_{i=0}^{k}{k \choose i}[p]q^{k-i+1}p^{i}\left(P_{0}(r-1)\right)^{i+1}\\
 & = & \sum_{k=0}^{t-1}[p]q^{k+1}P_{0}(r-1)+\sum_{k=0}^{t-1}[1]kq^{k}\left(P_{0}(r-1)\right)^{2}\\
 & = & \sum_{k=0}^{t-1}[1]q^{k+1}[p]P_{0}(r-1)+\sum_{k=0}^{t-1}[p]q^{k+1}[1]P_{0}(r-1)+\sum_{k=0}^{t-1}[1]kq^{k}\left(P_{0}(r-1)\right)^{2}\\
 & = & \left(1-\frac{1}{2^{t}}\right)[p]P_{0}(r-1)+\left(\frac{t+2}{2^{t}}-2\right)[1]P_{0}(r-1)\\
 &  & +2\left(1-\frac{t+1}{2^{t}}\right)[1]P_{0}(r-1){}^{2}
\end{eqnarray*}
where the second and third lines used facts (1), (4) and (5) in claim
\ref{claim:poly_props}, and the fourth line used the fact \ref{eq:=00005Bp=00005D_mul-1}.
Similarly, from equation \ref{eq:p0r}, we get, 
\begin{eqnarray*}
[p]P_{0}(r) & = & [p]\left(\sum_{k=0}^{t-1}\sum_{i=0}^{k}{k \choose i}q^{k-i+1}p^{i}\left(P_{1}(r-1)\right)^{i+1}+\sum_{i=0}^{t}{t \choose i}q^{t-i}p^{i}\left(P_{1}(r-1)\right)^{i}\right)\\
 & = & \left(1-\frac{1}{2^{t}}\right)[p]P_{1}(r-1)+\left(\frac{t+2}{2^{t}}-2\right)[1]P_{1}(r-1)+2\left(1-\frac{t+1}{2^{t}}\right)[1]P_{1}(r-1){}^{2}\\
 &  & +\sum_{i=0}^{t}{t \choose i}[p]q^{t-i}p^{i}\left(P_{1}(r-1)\right)^{i}\\
 & = & \left(1-\frac{1}{2^{t}}\right)[p]P_{1}(r-1)+\left(\frac{t+2}{2^{t}}-2\right)[1]P_{1}(r-1)\\
 &  & +2\left(1-\frac{t+1}{2^{t}}\right)[1]P_{1}(r-1){}^{2}-\frac{t}{2^{t}}+\frac{2t}{2^{t}}[1]P_{1}(r-1).
\end{eqnarray*}
\end{proof}
\begin{lem}
\label{lem:app_p_lowbound}\ref{lem:pcoeff_upperbound}For every $r\geq1$,
$-2\cdot2^{t}\leq[p]P_{0}(r)\leq0$ and $-2\cdot2^{t}\leq[p]P_{1}(r)\leq0$.\end{lem}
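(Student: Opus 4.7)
The plan is to prove both bounds simultaneously by induction on $r$, using the recurrences from Lemma \ref{lem:coeff_p}. For the base case $r=1$, both $[p]P_0(1)$ and $[p]P_1(1)$ equal $-t/2^t$, which lies in $[-2\cdot 2^t,0]$ for every $t\geq 1$ since $t/2^t\leq 1$. For the inductive step, set $\alpha = 2^{-t}$, $u = [1]P_1(r-1)$, and $v = [1]P_0(r-1)$; by Lemma \ref{lem:sum_constcoeff_1} both $u,v\in[0,1]$.

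For the upper bound on $[p]P_1(r)$, I would isolate the non-recursive part
\[
g(v) \;=\; (\alpha(t+2)-2)v + 2(1-\alpha(t+1))v^2 \;=\; v\bigl[\alpha(t+2)-2 + 2v(1-\alpha(t+1))\bigr].
\]
The bracket is linear and non-decreasing in $v$ (since $1-\alpha(t+1)\geq 0$ for $t\geq 1$), and at $v=1$ it equals $-\alpha t\leq 0$. Hence the bracket is $\leq 0$ on $[0,1]$ and $g(v)\leq 0$. Combined with the inductive hypothesis $[p]P_0(r-1)\leq 0$, the recurrence gives $[p]P_1(r)\leq 0$. For the upper bound on $[p]P_0(r)$, the non-recursive part is
\[
h(u) \;=\; (\alpha(3t+2)-2)u + 2(1-\alpha(t+1))u^2 - \alpha t,
\]
for which direct computation gives $h(0)=-\alpha t\leq 0$ and $h(1)=0$. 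Since $h''(u)=4(1-\alpha(t+1))\geq 0$, $h$ is convex on $[0,1]$, so it attains its maximum at an endpoint: $h(u)\leq \max\{h(0),h(1)\}=0$. The inductive hypothesis on $[p]P_1(r-1)$ then yields $[p]P_0(r)\leq 0$.

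For the lower bounds, in each recurrence the inductive hypothesis gives
$(1-\alpha)[p]P_\star(r-1) \geq -(1-\alpha)\cdot 2\cdot 2^t = -2\cdot 2^t + 2$. It therefore suffices to show $g(v)\geq -2$ on $[0,1]$ and $h(u)\geq -2$ on $[0,1]$. For $g$, the $v^2$ term is nonnegative, and $(\alpha(t+2)-2)v\geq \alpha(t+2)-2\geq -2$ (using $\alpha(t+2)\geq 0$). For $h$, drop the nonnegative $u^2$ term to get $h(u)\geq u(\alpha(3t+2)-2)-\alpha t$; split into the case $t=1$, where $\alpha(3t+2)-2\geq 0$ makes the lower bound $-\alpha t\geq -1$, and $t\geq 2$, where the coefficient is $\leq 0$ and the lower bound becomes $\alpha(3t+2)-2-\alpha t = 2\alpha(t+1)-2\geq -2$. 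Summing gives $[p]P_\star(r)\geq -2\cdot 2^t + 2 - 2 = -2\cdot 2^t$.

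The main (mild) obstacle is the sign change of $\alpha(3t+2)-2$ at $t=1$, which rules out a one-line monotonicity argument for $h$ as was used for $g$; the convexity observation $h''\geq 0$ together with the endpoint values $h(0),h(1)\leq 0$ handles both regimes uniformly. Everything else is routine bookkeeping with the recurrences from Lemma \ref{lem:coeff_p} and the constraint $u,v\in[0,1]$ furnished by Lemma \ref{lem:sum_constcoeff_1}.
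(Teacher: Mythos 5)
Your proof is correct and follows essentially the same route as the paper: induction on $r$ via the recurrences of Lemma \ref{lem:coeff_p}, using $[1]P_0,[1]P_1\in[0,1]$ from Lemma \ref{lem:sum_constcoeff_1}, splitting each recurrence into the recursive term (worth $-2\beta 2^t$ in the lower bound) and a non-recursive quadratic bounded between $-2$ and $0$. The only cosmetic difference is that for the upper bound on $[p]P_0(r)$ you invoke convexity and endpoint values of $h$, where the paper simply substitutes $u^2\leq u$ and collects coefficients; both are valid.
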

\begin{proof}
We reproduce equations \ref{eq:pcoeffp0} and \ref{eq:pcoeffp1} here
for convenience:

\begin{eqnarray}
[p]P_{1}(r) & = & \left(1-\frac{1}{2^{t}}\right)[p]P_{0}(r-1)+\left(\frac{t+2}{2^{t}}-2\right)[1]P_{0}(r-1)\nonumber \\
 &  & +2\left(1-\frac{t+1}{2^{t}}\right)[1]P_{0}(r-1){}^{2}\label{eq:pcoeffp1-1}\\{}
[p]P_{0}(r) & = & \left(1-\frac{1}{2^{t}}\right)[p]P_{1}(r-1)+\left(\frac{t+2}{2^{t}}-2\right)[1]P_{1}(r-1)\nonumber \\
 &  & +2\left(1-\frac{t+1}{2^{t}}\right)[1]P_{1}(r-1){}^{2}-\frac{t}{2^{t}}+\frac{2t}{2^{t}}[1]P_{1}(r-1).\label{eq:pcoeffp0-1}
\end{eqnarray}
We first show that show $[p]P_{0}(r)\leq0$ and $[p]P_{1}(r)\leq0$
by induction on $r$. For $r=1$, from equations \ref{eq:p_p01} and
\ref{eq:p_p11}, $[p]P_{0}(1)\leq0$ and $[p]P_{1}(1)\leq0$. Consider
equation \ref{eq:pcoeffp1-1},

\[
[p]P_{1}(r)=\left(1-\frac{1}{2^{t}}\right)[p]P_{0}(r-1)+\left(\frac{t+2}{2^{t}}-2\right)[1]P_{0}(r-1)+2\left(1-\frac{t+1}{2^{t}}\right)[1]P_{0}(r-1){}^{2}.
\]
For the first term, since $1-2^{-t}\geq0$ and $[p]P_{0}(r-1)\leq0$,
the product is always non-positive. For the third term, $1-2^{-t}(t+1)\geq0$,
and since $0\leq[1]P_{0}(r-1)\leq1$ from Lemma \ref{lem:sum_constcoeff_1},
$[1]P_{0}(r-1)^{2}\leq[1]P_{0}(r-1)$. Thus, we can write,

\begin{eqnarray*}
\left(\frac{t+2}{2^{t}}-2\right)[1]P_{0}(r-1)+2\left(1-\frac{t+1}{2^{t}}\right)[1]P_{0}(r-1){}^{2} & \leq & \left(\frac{t+2}{2^{t}}-2+2-\frac{2(t+1)}{2^{t}}\right)[1]P_{0}(r-1)\\
 & \leq & 0.
\end{eqnarray*}
For $[p]P_{0}(r)$, consider equation \ref{eq:pcoeffp0-1},
\begin{eqnarray*}
[p]P_{0}(r) & = & \left(1-\frac{1}{2^{t}}\right)[p]P_{1}(r-1)+\left(\frac{t+2}{2^{t}}-2\right)[1]P_{1}(r-1)\\
 &  & +2\left(1-\frac{t+1}{2^{t}}\right)[1]P_{1}(r-1){}^{2}-\frac{t}{2^{t}}+\frac{2t}{2^{t}}[1]P_{1}(r-1).
\end{eqnarray*}
The first term is non-positive similar to the argument above, and
summing the remaining terms after using Lemma \ref{lem:sum_constcoeff_1}
for $[1]P_{1}(r-1){}^{2}\leq[1]P_{1}(r-1)$, we get, 
\begin{eqnarray*}
\left(\frac{t+2}{2^{t}}-2\right)[1]P_{1}(r-1)+2\left(1-\frac{t+1}{2^{t}}\right)[1]P_{1}(r-1){}^{2}\\
+\frac{2t}{2^{t}}[1]P_{1}(r-1)-\frac{t}{2^{t}}\leq\frac{t}{2^{t}}[1]P_{1}(r-1)-\frac{t}{2^{t}} & \leq & 0.
\end{eqnarray*}

We now show that show that $[p]P_{0}(r)\geq-2\cdot2^{t}$ and $[p]P_{1}(r)\geq-2\cdot2^{t}$,
again by induction on $r$. For $r=1$, note that
\[
[p]P_{0}(1)=[p]q^{t}=-\frac{t}{2^{t}}\geq-2\cdot2^{t},
\]
and
\[
[p]P_{1}(1)=[p]\left(1-(1-q)^{t}\right)=-[p](1-q)^{t}=-\frac{t}{2^{t}}\geq-2\cdot2^{t}.
\]
Consider equation \ref{eq:pcoeffp1-1}. For $t\geq1$, $1-\alpha(t+1)\geq0$,
$\alpha(t+2)-2\leq0$, and $0\leq P_{0}(r-1)\leq1$, and thus, we
can write using the induction hypothesis,
\begin{eqnarray*}
[p]P_{1}(r) & = & \beta[p]P_{0}(r-1)+\left(\alpha(t+2)-2\right)[1]P_{0}(r-1)+2\left(1-\alpha(t+1)\right)[1]P_{0}(r-1){}^{2}\\
 & \geq & -2\beta2^{t}+\alpha\left(t+2-2\cdot2^{t}\right)\\
 & \geq & -2(\beta+\alpha)2^{t}\\
 & = & -2\cdot2^{t}.
\end{eqnarray*}
In equation \ref{eq:pcoeffp0-1}, noting that $0\leq P_{1}(r-1)\leq1$,
we can similarly write
\begin{eqnarray*}
[p]P_{0}(r) & = & \left(1-\frac{1}{2^{t}}\right)[p]P_{1}(r-1)+\left(\frac{t+2}{2^{t}}-2\right)[1]P_{1}(r-1)+2\left(1-\frac{t+1}{2^{t}}\right)[1]P_{1}(r-1){}^{2}\\
 &  & -\frac{t}{2^{t}}+\frac{2t}{2^{t}}[1]P_{1}(r-1)\\
 & \geq & -2\beta2^{t}+\alpha(t+2-2\cdot2^{t})-\alpha t\\
 & \geq & -2\cdot2^{t}
\end{eqnarray*}
as required. 
\end{proof}

\end{document}